\documentclass{article}

\usepackage{arxiv}

\usepackage[utf8]{inputenc} 
\usepackage[T1]{fontenc}    
\usepackage{hyperref}       
\usepackage{url}            
\usepackage{booktabs}       
\usepackage{amsfonts}       
\usepackage{nicefrac}       
\usepackage{microtype}      
\usepackage{lipsum}		
\usepackage{graphicx}
\usepackage{natbib}
\usepackage{doi}

\title{Deriving Compact QUBO Models via Multilevel Constraint Transformation}






\author{ Oksana Pichugina,\; Yingcong Tan,\; J. Christopher Beck\\
	Department of Mechanical and Industrial Engineering\\University of Toronto\\Toronto, Ontario M5S 3G8, Canada \\
	\texttt{\{opich, yctan, jcb\}@mie.utoronto.ca} \\
}




\hypersetup{
pdftitle={A template for the arxiv style},
pdfsubject={q-bio.NC, q-bio.QM},
pdfauthor={David S.~Hippocampus, Elias D.~Striatum},
pdfkeywords={First keyword, Second keyword, More},
}

\usepackage{graphicx}%
\usepackage{multirow}%
\usepackage{amsmath,amssymb,amsfonts}%
\usepackage{amsthm}%
\usepackage{mathrsfs}%
\usepackage[title]{appendix}%
\usepackage{xcolor}%
\usepackage{textcomp}%
\usepackage{manyfoot}%
\usepackage{booktabs}%
\usepackage{algorithm}%
\usepackage{algorithmicx}%
\usepackage{algpseudocode}%
\usepackage{listings}%
\usepackage{subcaption}
\usepackage{array}
\usepackage{booktabs}


\newcommand{\bbR}{\mathbb{R}}
\newcommand{\bbZ}{\mathbb{Z}}
\newcommand{\bbB}{\mathbb{B}}

\newcommand{\bx}{\mathbf{x}}
\newcommand{\by}{\mathbf{y}}
\newcommand{\bs}{\mathbf{s}}
\newcommand{\ba}{\mathbf{a}}
\newcommand{\bA}{\mathbf{A}}
\newcommand{\bb}{\mathbf{b}}
\newcommand{\bB}{\mathbf{B}}
\newcommand{\bc}{\mathbf{c}}
\newcommand{\bC}{\mathbf{C}}
\newcommand{\bg}{\mathbf{g}}

\newcommand{\bP}{\mathbf{P}}
\newcommand{\bQ}{\mathbf{Q}}
\newcommand{\bW}{\mathbf{W}}

\newcommand{\cH}{\mathcal{H}}

\DeclareMathOperator*{\minimize}{\text{minimize}}

\DeclareMathOperator*{\subjto}{\text{subject to}}

\newcommand{\ie}{\textit{i.e., }}
\newcommand{\eg}{\textit{e.g., }}
\newcommand{\ignore}[1]{}

\newtheorem{theorem}{Theorem}
\newtheorem{lemma}[theorem]{Lemma}
\newtheorem{corollary}[theorem]{Corollary}
\newtheorem{proposition}[theorem]{Proposition}%
\newtheorem{remark}{Remark}%

\newtheorem{definition}{Definition}%

\begin{document}
\maketitle

\begin{abstract}
With the advances in customized hardware for quantum annealing and digital/CMOS Annealing, Quadratic Unconstrained Binary Optimization (QUBO) models have received growing attention in the optimization literature.
%
Motivated by an existing general-purpose approach that derives QUBO models from binary linear programs (BLP), we propose a novel \emph{Multilevel Constraint Transformation Scheme} (MLCTS) that derives QUBO models with fewer ancillary binary variables.
We formulate sufficient conditions for the existence of a compact QUBO formulation (\ie in the original BLP decision space) in terms of constraint levelness and demonstrate the flexibility and applicability of MLCTS on synthetic examples and several well-known combinatorial optimization problems, \ie the Maximum 2-Satisfiability Problem, the Linear Ordering Problem, the Community Detection Problem, and the Maximum Independence Set Problem.
For a proof-of-concept, we compare the performance of two QUBO models for the latter problem on both a general-purpose software-based solver and a hardware-based QUBO solver. The MLCTS-derived model demonstrate significantly better performance for both solvers, in particular, solving up to seven times more instances with the hardware-based approach.
\end{abstract}

\keywords{Quadratic Unconstrained Binary Optimization, Modeling, Combinatorial Optimization, Penalty Functions}



\section{Introduction}

\emph{Quadratic Unconstrained Binary Optimization} (QUBO) problems consist of a quadratic objective function over a set of binary variables \cite{fu1986}. While QUBOs have numerous applications in artificial intelligence \cite{Cohen20,sleeman2020,willsch2020svm,Zhang22}, operations research \cite{alidaee2008new,alidaee1994,kochenberger2005,Feld19,Senderovich22a}, economics \cite{hammer1971applications}, and finance \cite{cohen2020portfolio,elsokkary2017financialprofolio,laughhunn1970quadratic}, recent development of customized hardware for optimization of QUBO models (e.g., Quantum Annealing and digital/CMOS annealing) have further increased research interest in QUBO modeling and solving \cite{Ajagekar2020,DA3,tran2016,Venturelli19}.

To take advantage of these novel hardware platforms, it is necessary to derive QUBO models for the problem of interest. A standard approach starts with a constrained binary linear model and transforms the constraints into quadratic penalties that are then added to the QUBO objective function \cite{punnen_quadratic_2022}. 
%
Binary optimization problems themselves have a wide range of applications in engineering \cite{domino_quadratic_2022}, computer science \cite{verma_efficient_2021}, operations research \cite{alidaee2008new,alidaee1994,Feld19,kochenberger2005,Senderovich22a}, artificial intelligence \cite{Cohen20,sleeman2020,willsch2020svm,Zhang22}, economics \cite{hammer1971applications}, finance \cite{cohen2020portfolio,elsokkary2017financialprofolio,laughhunn1970quadratic} and many other fields. Of course, generally, binary optimization problems are computationally intractable  \cite{fu1986,punnen_quadratic_2022,garey_computers_1979} and this combination of applicability and intractability explains the ongoing interest in the development of solution approaches.

In this paper, we propose a novel QUBO modelling scheme, the \emph{Multilevel Constraint Transformation Scheme} (\emph{MLCTS}), that transforms a binary linear program formulation into a QUBO formulation. 
We demonstrate the flexibility and applicability of the MLCTS on several well-known combinatorial optimization problems and derive novel QUBO formulations. 
As a proof-of-concept, we numerically evaluate the performance of different QUBO formulations of the Maximum Independent Set Problem (MIS) on both a general-purpose commercial solver \cite{gurobi} and a specialized QUBO solver, the Fujitsu Computing as a Service Digital Annealer \cite{DA3}. Our numerical results show a significant improvement in performance of both solvers, substantially increasing the number of instances solved within a given time limit. For example, seven times more MIS instances can be solved with the Fujitsu Digital Annealer in 5 seconds when using our novel QUBO formulation compared to what it can solve with the standard QUBO model.
Lastly, we  discuss how the MLCTS may be extended to transform a general binary optimization problem model into a QUBO model and illustrate the transformation on two synthetic examples. 



The paper is organized as follows. The background is discussed in Section \ref{sec: Background} and the MLCTS is introduced in Section \ref{sec: Method}. Section \ref{sec: MLCTS developing} demonstrates how MLCTS can be used to derive compact penalties from various constraints. 
MLCTS is applied to several well-known combinatorial optimization problems in Section \ref{sec: applied problems} and we provide a proof-of-concept numerical experiment in Section \ref{sec: MIS experiment results}. Section~\ref{sec: discussion} presents the discussion and Section \ref{sec: conclusion} concludes the paper.

\section{Background}\label{sec: Background}
 
\subsection{Binary Optimization Problems}



A \emph{binary optimization problem} (BOP) can be represented in the following general form.

\begin{subequations}
\begin{align}
(\text{BOP})\quad f^*=f_0(\bx^*)=\min_{\bx}  &\quad f_0(\bx)\label{eq: BOP objective}\\
\subjto &\quad f_i(\bx) =b_i,\: i\in \overline{1,m}\label{eq: BOP eq constr}\\
&\quad f_i(\bx) \le b_i,\: i\in \overline{m+1,m+m'}, \label{eq: BOP ineq constr}
\end{align}
\end{subequations}
where $\bbB^n=\{0,1\}^n$, $\bx\in\bbB^n$ is a  $n$-dimensional vector of binary decision variables and $f_i: \bbB^n \to \bbR, i\in \overline{0, m+m'}$.\footnote{$\overline{n,m} =\{n,\: n+1,\: \dots,\: m-1,\: m\}$ is the collection of all integer values in the range of $n$ and $m$ (inclusive). 
}
%

The most studied class of BOP is the \emph{binary linear program} (BLP) where constraints \eqref{eq: BOP eq constr} and \eqref{eq: BOP ineq constr} 
are linear.
We consider a BLP formulation in the following form. 
\begin{subequations}\label{eq: LBP Model}
\begin{align}
(\text{BLP})\quad f^*=f(\bx^*)=\min_{\bx}  &\quad \{f(\bx)=\ba_0^\top \bx\}\label{eq: LBP objective}\\
\subjto &\quad \bA\bx =\bb, \label{eq: LBP eq constr}\\
&\quad \bA'\bx \le \bb', \label{eq: LBP ineq constr}
\end{align}
\end{subequations}
 where  $\ba_0\in \bbR^{n}$, $A\in \bbR^{m\times n}$,  $A'\in \bbR^{m'\times n}$,  $\bb\in \bbR^{m}$, $\bb'\in \bbR^{m'}$. 


\subsection{Quadratic Unconstrained Binary Optimization} \label{sec:related_work:qubo}
The \emph{Quadratic Unconstrained Binary Optimization} (QUBO) problem can be stated as 
\begin{equation} \label{eq: qubo model}
    \text{(QUBO)}\quad \min_{\bx} \quad\{f(\bx) = \bx^\top \bQ \bx + c\}
\end{equation} 
where $\bx\in\bbB^n$, $\bQ \in \bbR^{n\times n}$ is a matrix of real coefficients and $c \in \bbR$ is a constant. 
Another popular expression of QUBO is 
$\min_{\bx\in\bbB^{n}} \; \{f(\bx) =  \sum_{i,j:i \neq j} q_{ij} x_i x_j + \sum_i q_{ii} x_i + c\}$.
Since $ x_i^2=x_i$ in the binary case, the two representations are interchangeable. 

When representing optimization problems as QUBO models, unless the problem has a natural QUBO form, the conventional approach is to consider an existing model with a similar form (\eg a BLP) whose constraints are then relaxed and added to the objective function as quadratic penalties \cite{glover2018}. Thus, the core of deriving QUBO formulations is to derive \emph{valid} penalty terms. We first introduce the necessary conditions for a penalty to be considered valid and then briefly discuss how these constraints have been derived in the literature.

\subsection{Valid Infeasible Penalty}
The core of deriving a QUBO formulation from a BLP formulation is to formulate penalty terms for the problem constraints.
For a constraint $h(\bx)\le 0$, a function $P(\bx)$ is called a \emph{valid infeasible penalty} (\emph{VIP}) \cite{lewis_note_2009,ayodele2022} over a search domain $\bbB^n$
if it is null when the constraint holds and otherwise is strictly positive. That is, 
\begin{eqnarray*}
&P:& \bbB^n \to \bbR_{\ge 0},\; \text{s.t.}\; P(\bx) =0 \Leftrightarrow h(\bx)\le 0, \;P(\bx) >0 \Leftrightarrow h(\bx)> 0.
\end{eqnarray*}

 A function $P'(\bx,\bs)$ is an \emph{augmented VIP} for a constraint $h(\bx)\le 0$  if it satisfies the following,
\begin{eqnarray*}
P': \bbB^{n+n'} \to \bbR_{\ge 0},\; \text{s.t.}\; P(\bx)=\min_{\bs \in \bbB^{n'}} P'(\bx,\bs) \text{ is a VIP}, n'>0 ,
\end{eqnarray*}
\noindent where $\bs$ is a vector of introduced, \emph{ancillary} binary variables required to represent the constraint by a penalty term.
For the purpose of deriving QUBO formulations, we are interested in constructing a quadratic VIP (\emph{QVIP}) in this paper.  If an augmented VIP is a quadratic on $\bx, \bs$, we will refer to it as a \emph{quadratic augmented VIP} (\emph{QAVIP}).

\subsection{Transforming Linear Constraints into Valid Infeasible Penalty}
Let $\bbB^n$ be a search domain and $E\subseteq \bbB^n$ be a feasible domain of an optimization problem under consideration; the goal is to reformulate a BLP model as a QUBO model.
The core task of this transformation is to construct a function
to penalize the constraint violation such that the optimal solution not only minimizes the objective function but also satisfies the constraints of the original problem.
The penalty function is the sum of penalty terms for the associated constraints. For equality constraint $\ba^\top\bx=b$, a corresponding penalty term can be easily formed as $\lambda (\ba^\top\bx-b)^2 $, where $\lambda>0$ is a penalty constant. The transformation of inequality constraints is non-trivial. Existing work on this can be categorized into general-purpose and special-purpose approaches.

\paragraph{General Purpose Approach}
The \emph{general purpose approach} is proposed by \citeauthor{glover_quantum_2022}~\cite{glover_quantum_2022}, where the inequalities are converted to equality constraints by adding ancillary binary variables and then applying quadratization.
%
\begin{align*}
\ba^\top\bx\le b,\; \bx\in \bbB^n  \Longrightarrow & \; \ba^\top\bx + w = b,\; \bx\in \bbB^n,\; w\in\bbR \\
\Longrightarrow&\;\ba^\top\bx+\bg^{\top}\bs= b,\; \bx\in \bbB^n,\; \bs\in \bbB^{n'} \\
\Longrightarrow &\; \lambda(\ba^\top\bx+\bg^{\top}\bs- b)^2, \bx\in \bbB^n, \bs\in \bbB^{n'},\lambda>0,
\end{align*}

\noindent where $w$ is a slack variable used to convert an inequality constraint to an equality constraint. To represent $w$ in the binary form, a set of binary variables $\bs$ is introduced (\ie $w = \bg^\top\bs$), increasing the problem dimension significantly, especially when the inequality constraint is loose. Expressing $w$ as ancillary binary variables (\eg through binary expansion) may lead to a coefficient vector $\bg$ with a very large range.
%
These two factors negatively impact QUBO performance \cite{quintero_characterization_2022}.
%
Thus, it is preferable to avoid ancillary variables. 

\paragraph{Special Purpose Approach}
For some constraints, compact penalty terms (\ie with no ancillary variables) are known \citep{glover_quantum_2022, lewis_note_2009}. Table~\ref{tab: penalties} summarizes some well-known compact penalty terms that were developed for specific applications and problems, thus, we refer to this approach as the \emph{special purpose approach}. Such approaches lead to a QUBO with fewer variables due to the absence of ancillary variables.

However, the flexibility and applicability of special-purpose approaches are limited as they cannot be applied to linear inequalities in the general form.

\begin{table}[htbp]
\centering
\centering
\begin{tabular}{ | c| c|  }
\hline
	Constraint& Corresponding	Penalty\\
\hline
	\parbox{5cm}{\begin{equation} x_i + x_j \le  1\label{C1} \end{equation}} &		\parbox{5cm}{\begin{equation} x_ix_j \label{P1} \end{equation}}\\
	\parbox{5cm}{\begin{equation} x_i + x_j \ge  1\label{C2} \end{equation}}&\parbox{5cm}{\begin{equation} 1-x_i-x_j+x_ix_j \label{P2} \end{equation}}	\\
	\parbox{5cm}{\begin{equation} x_i \le  x_j\label{C3} \end{equation}}& \parbox{5cm}{\begin{equation} x_i-x_ix_j \label{P3} \end{equation}}	\\
 \parbox{5cm}{\begin{equation} x_i + x_j+x_k \le  1\label{C4}\end{equation}}&\parbox{5cm}{\begin{equation} x_ix_j+x_ix_k+x_jx_k \label{P4} \end{equation}}\\
\hline
\end{tabular}
\caption{Known penalties for inequality constraints~\cite{glover_quantum_2022}. 
}
\label{tab: penalties}
\end{table}


\if 0
\paragraph{Conventional Transformation Scheme}\label{sec: LBP-CTS}
To derive a QUBO model from a BLP in the form of \eqref{eq: LBP objective}-\eqref{eq: LBP ineq constr}, CTS consists of the following steps. (see \cite{glover_quantum_2022,kochenberger_unified_2004}).

\begin{enumerate}
    \item Derive a QVIP for equality constraints \eqref{eq: LBP eq constr}, that is, 
    \begin{equation}
   A\bx-\bb=0 \Longrightarrow (A\bx-\bb)^2.\label{eq: LBP eq constr2}
    \end{equation}
    \item Derive a quadratic penalty term for inequality constraints \eqref{eq: LBP ineq constr}, and this may be done using either general purpose approach or special purpose approach.
    \begin{enumerate}
        \item Derive a QAVIP using general purpose approach. More specifically, we first transform \eqref{eq: LBP ineq constr} into an equality constraint with ancillary binary variables, and apply step 1 to obtain a QAVIP. That is, 
             \begin{equation}
\eqref{eq: LBP ineq constr} \quad  \Longrightarrow \quad \bA''\bx +  \bB\bs=\bb'',\;  \bs \in \bbB^{n'} \quad \Longrightarrow\quad (\bA''\bx + \bB\bs-\bb'')^2.\label{eq: LBP ineq constr2}
    \end{equation}

        \item If applicable, derive a QVIP using special purpose approach (\eg using Table \ref{tab: penalties}).  
        
        
    \end{enumerate}
    \item Combine all the penalty terms to form the penalty function.

\end{enumerate}

%

\noindent Consider an example of applying CTS using general purpose approach (\ie step 1 + step 2.a), we will have the following penalty function.

\begin{equation}\label{eq: augmented VIP}
P'_\lambda(\bx,\bs)=\lambda_1(A\bx-\bb)^2+\lambda_2(A''\bx + B\bs-\bb'')^2,\;\lambda=(\lambda_1, \lambda_2)\in \bbR^2_{>0}
\end{equation}
        
\noindent Let $E$ be a feasible domain of a BLP  \eqref{eq: LBP objective}-\eqref{eq: LBP ineq constr}, and $\left\langle \bx^*, f(\bx^*) \right\rangle$ denote its optimizer and optimal value, respectively.   
The peculiarity of $P'_\lambda(\bx,\bs)$ is that $P'_\lambda(\bx,\bs)>0,\; \forall \bx\notin E,\;  \forall \bs \in \bbB^{n'}$.
%
This implies that the problem
\begin{eqnarray*}\label{eq: AQUBO CTS}
f(\bx^*)=F_\lambda(\bx^*,\bs^*)=\min_{\bx \in \bbB^n, \bs \in \bbB^{n'}}\{F_\lambda(\bx,\bs)=f(\bx)+P'_\lambda(\bx,\bs)\}
\end{eqnarray*}
is an AQUBO model, where $f(\bx)$, $P'_\lambda(\bx,\bs)$ are given by \eqref{eq: LBP objective}, \eqref{eq: augmented VIP}, respectively. It produces a solution $\bx^*$ to the BLP \eqref{eq: LBP objective}-\eqref{eq: LBP ineq constr}. $\lambda$ is a vector of penalty constants that annul $P'_\lambda(\bx,\bs)$ at $\left\langle \bx^*,\bs^*\right\rangle$ ($\lambda$ is referred to as the vector of \emph{adjusted penalty constants}).
\fi






 \subsection{QUBO Transformation Scheme}\label{ssec:TR0-TR4}

%
To facilitate the presentation of QUBO model derivation, we define a list of transformation procedures. In this section, we list five transformation procedures extracted from the literature. In Section \ref{sec: MLCTS}, we introduce two new transformation procedures. Lastly, we formally present the conventional BLP-to-QUBO transformation scheme (CTS) in the literature \cite{glover_quantum_2022,kochenberger_unified_2004}.


\paragraph{Transformation Procedures}
    \noindent Transformation 0 (\textbf{TR0}):  convert linear inequalities into equations by introducing a slack variable which is then replaced by a set of ancillary binary variables and then quadratize.
              \begin{equation*}
    \eqref{eq: LBP ineq constr} \quad  \Longrightarrow \quad \bA''\bx +  \bB\bs=\bb'',\;  \bs \in \bbB^{n'} \quad \Longrightarrow\quad (\bA''\bx + \bB\bs-\bb'')^2.\label{eq: LBP ineq constr2}
        \end{equation*}
    To replace a slack variable with ancillary binary variables,  one may use either discrete expansion  \cite{pisaruk_mixed_2019,verma_penalty_2022} or binary expansion  \cite{verma_penalty_2022}, which are further referred to as TR0.1 and TR0.2, respectively.\vspace*{.5em}
    
    %
    \noindent Transformation 1 (\textbf{TR1}): perform quadratization of equality constraints \eqref{eq: LBP eq constr}.
        \begin{equation*}
        \eqref{eq: LBP eq constr}  \Longrightarrow A\bx-\bb=0 \Longrightarrow (A\bx-\bb)^2.\label{eq: LBP eq constr2}
        \end{equation*}

    \noindent Transformation 2 (\textbf{TR2}): replace inequality constraints on $\bx$ by a quadratic VIP without ancillary variables. The step applies known VIPs (e.g., given in Table~\ref{tab: penalties}) or deriving penalties for specific problems (e.g., the Linear Ordering Problem \cite{lewis_note_2009}) or specific constraints (see Section~\ref{sec: MLCTS developing}). \vspace*{.5em}

    \noindent Transformation 3 (\textbf{TR3($r$)}): transform a polynomial VIP into a multilinear form \cite{boros_compact_2020} by decreasing the degrees of binary variables based on the property $x^p=x, \forall x\in \{0,1\}, \forall p\ne 0$, then factoring out a common factor $r>0$ of the coefficients.\footnote{If $r$ is not specified, notation TR3 will be used.}\vspace*{.5em}
    
    \noindent  Transformation 4 (\textbf{TR4}): polynomial reduction of cubic and other high-degree monomials in a VIP, which results in the formation of a  QAVIP \cite{boros_compact_2020,rosenberg_reduction_1975,mulero-martinez_polynomial-time_2021,freedman_energy_2005,ishikawa_transformation_2011,anthony_quadratic_2017f}. This can be done in two different ways.
    
 
\noindent 
\begin{itemize}
    \item
    The first approach (\textbf{TR4.1}) achieves the degree reduction by substitution. The degree of polynomials is reduced to 2 iteratively by making the following substitution.
    \begin{eqnarray}\label{eq: z=xy}
    z=xy
    \end{eqnarray}
    for binary variables $x,\; y$ in high-order terms, along with incorporating the Rosenberg Penalty~\cite{rosenberg_reduction_1975}
       \begin{eqnarray} \label{eq: Rosen}
        R(x,y,z)=xy-2xz-2yz+3z
    \end{eqnarray}    
    into the objective. We refer to \eqref{eq: Rosen} as the Rosenberg VIP for the rest of this paper. \vspace*{.5em}

\item 
    The second approach (denoted as \textbf{TR4.2}) achieves degree reduction by minimum selection \cite{mulero-martinez_polynomial-time_2021}. It is applied to each monomial depending on its sign.
     Let us consider a monomial $\varphi(\bx,a)=a \prod_{i\in I}x_i$, 
    where $a\in\{-1,1\}$, $I\subseteq \overline{1,n}$. Depending on the sign $a$ of the monomials in a polynomial VIP, the following formulas can be used for 
    $\varphi(\bx,a)$ quadratization.
        \begin{itemize}
        \item negative monomial \cite{freedman_energy_2005}:
         \begin{equation}\label{eq: min selection formulanegative}
          \varphi(\bx,-1)= -\min_{s_0} s_0 (S_1(I)-|I|+1), \text{ where } s_0 \in \bbB
         \end{equation}
         \item   positive monomial \cite{ishikawa_transformation_2011}: 
           \begin{equation}\label{eq: min selection formula2}
         \varphi(\bx,1)= \min_{ s_1,...,s_{n_{|I|}}}\sum_{i-1}^{n_{|I|}} s_i (c_{i,|I|} (-S_1(I)+2i)-1) + S_2(I), \text{ where }, s_1,\:\dots,\:s_{n_{|I|}}\in\bbB
         \end{equation}
          
        where 
            \begin{equation}\label{eq: S1,S2}
            \begin{aligned}
        &S_1(I) = \sum_{i\in I} x_i, S_2(I) = \sum_{i, j\in I, i<j}  x_ix_j =S_1(I) (S_1(I)-1)/ 2\\   
        &c_{i,|I|} =
        \begin{cases}
        1,\text{ if }|I|\text{ is odd and }i = n_{|I|}\\
         2, \text{ otherwise}
        \end{cases}\\
        &  n_{|I|} = \left\lfloor \frac{|I|-1}{2}\right\rfloor .
            \end{aligned}
            \end{equation}
        \end{itemize} 
\end{itemize}

\paragraph{Conventional Transformation Scheme}
The general purpose (\ie TR1) and special purpose (\ie TR2) approaches are widely used in the literature \cite{glover2018, glover_quantum_2022} to derive VIPs of associated linear constraints.  QUBO models are then obtained by combining the resulting VIPs, penalty coefficients, and the objective function. Lastly, minimization is performed over the original variables (and ancillary variables, if applicable). To distinguish it from our novel QUBO model scheme, we refer to this QUBO modeling approach as the conventional transformation scheme (CTS).

\paragraph{Compact and Augmented Formulations}

A BOP formulation with ancillary binary variables in addition to $\bx$ is referred to as an \emph{augmented BOP}. Consequently, if all augmented BOP constraints are linear, it is an \emph{augmented BLP}. 
In this paper, we assume that a BLP formulation of the problem of interest is given, and we propose a novel modelling scheme to transform a BLP formulation into a QUBO formulation.  A QUBO model with the same solution space (\ie same decision variable of $\bx$) as the corresponding BLP is referred to as a \emph{compact QUBO} formulation, and a QUBO model in the lifted solution space (\ie with ancillary binary variables) is referred to as an \emph{augmented QUBO} formulation.\\



\if 0
\subsubsection{Illustration of Transformation 0} \label{ssec: TR 0 variants}
 We consider the general linear constraint in the following form,
 \begin{eqnarray}\label{eq: 2-side}
\underline{b}\le \ba^\top \bx \le \overline{b},
\end{eqnarray}
where $\ba\in \bbR^n$, $\underline{b}, \overline{b}\in \bbR$, $\underline{b}\le \overline{b}$.

 We illustrate two ways to transform the \eqref{eq: 2-side} over $\bbB^n$ into an equality constraint with ancillary variables. For that, we first define a linear function $h: \bbB^n\to \bbR$ below, and then rewrite \eqref{eq: 2-side} as $\underline{b}\le h(\bx) \le \overline{b}$.

 \begin{equation}\label{eq: h(x)}
h(\bx)=\ba^\top \bx=\sum_{i=1}^n a_ix_i 
\end{equation}
 It can also be expressed in the discrete form as $h(\bx)\in \cH=\{h_1,...,h_k\}$
 where $\cH \subset \bbR$ is a set of values that takes the function $h(\bx)$ in the closed interval  $[\underline{b}, \overline{b}]$.

 The first approach (Transformation 0.1, \textbf{TR0.1}) uses the discrete constraints' binary transformation \cite{pisaruk_mixed_2019,verma_penalty_2022}, the second approach (Transformation 0.2, \textbf{TR0.2}) is the conventional binary expansion transformation of an integer slack variable \cite{verma_penalty_2022}.
\paragraph{Transformation 0.1}
We can use the discrete presentation given above (\ie $h(\bx)\in \cH$) to express inequality constraint \eqref{eq: 2-side} in terms of ancillary binary variables $s_1,...,s_k\in\bbB$ forming a binary vector $\bs$ as below.
 \begin{align}
&  \ba^\top \bx = \sum_{i=1}^kh_is_i,\label{eq: TR0.1.0}\\
& \sum_{i=1}^ks_i=1,\label{eq: TR0.1.1}\\
&\bx\in \bbB^n, \bs\in \bbB^k\label{eq: TR0.1.2}
 \end{align}
 
Such an approach would require an additional one-hot constraint \eqref{eq: TR0.1.1}. For constraints with wide ranges (\ie $k$ is large), this approach would lead to a large number of ancillary binary variables and a large vector $\bs$.

 \begin{remark}
 If the following condition holds, then $h(\bx)\in \cH$ can be replaced by weaker condition $h(\bx)\in \cH'=\overline{\underline{b},\overline{b}}$.
  \begin{eqnarray}\label{eq: TR0 integer cond}
&  \ba\in \bbZ^n, \:\underline{b},\overline{b} \in \bbZ,
 \end{eqnarray}
  \end{remark}


\noindent Consequently, constraints \eqref{eq: TR0.1.0}-\eqref{eq: TR0.1.2} become
 $
 \ba^\top \bx = \sum_{i=1}^{k'}h'_is_i, \; \sum_{i=1}^{k'}s_i=1, \;
  \bx\in \bbB^n, \bs\in \bbB^{k'}
 $
 where $k'=|\cH'|=\overline{b}-\underline{b}+1$, $h_i=\underline{b}+i-1$, $i\in \overline{1,k'}$.
As a result,  finding all the possible values of $h(\bx)$, which can be time-consuming, can be avoided. At the same time, using $k'$ instead of $k$ may increase the degree of the corresponding VIP.

\paragraph{Transformation 0.2}
This transformation procedure utilizes the binary expansion of an integer slack variable for a one-sided constraint in the following forms.
 \begin{align} 
&\ba^\top \bx \le \overline{b} \label{eq: 1-side}\\
& \ba^\top \bx \ge \underline{b} \label{eq: 1-side2}
\end{align}

\noindent Consider the one-sided constraint \eqref{eq: 1-side}, if \eqref{eq: TR0 integer cond} holds, we obtain the follows using \eqref{eq: h(x)} and its discrete form $h(\bx)\in \cH$.

\begin{align}
\ba^\top \bx\le \overline{b} \Longleftrightarrow&\; h_1\le \ba^\top \bx\le \overline{b} \nonumber \\
\Longrightarrow &\; \ba^\top \bx+ w= \overline{b},\; w\in[h_1,  \overline{b}]\nonumber  \\
\Longrightarrow &\;
    \ba^\top \bx+ w= \overline{b},\; w=h_1+\sum_{i=0}^{n'-1} 2^i s_i, \label{eq: tr 0.2}\\
    &\text{where, } s_i\in\bbB,\; i\in \overline{0,n'-1},\; n'=\left\lceil \log_2 (\overline{b}-h_1+1) \right\rceil \nonumber
\end{align}

 
For the one-sided constraint \eqref{eq: 1-side2}, we may first convert it to \eqref{eq: 1-side} by multiplying with $-1$ and then applying the same procedure above.
 %
For the transformation of a two-sided constraint \eqref{eq: 2-side}, with TR0.2, formula  \eqref{eq: tr 0.2} can be used separately for its lower-bound and upper-bound inequalities.

Here, we illustrate two approaches of Transformation 4 to perform polynomial reduction.
\fi




\section{Methodology}\label{sec: Method}
In this section, we introduce the concept of multilevel linear constraints and discuss how that can be used to derive VIPs. We then formally introduce the Multilevel Constraint Transformation Scheme (MLCTS). Lastly, using MLCTS, we derive necessary conditions for the existence of a compact QUBO formulation in terms of constraint levelness.

\subsection{Preliminaries}
We refer to the inequalities with the following form as a two-sided constraint.
 \begin{eqnarray}\label{eq: 2-side}
\underline{b}_i\le \ba_i^\top \bx \le \overline{b}_i,
\end{eqnarray}
where $\bx\in \bbB^n$, $\ba\in \bbR^n$, $\underline{b}, \overline{b}\in \bbR$, $\underline{b}\le \overline{b}$. It can also be expressed in the following discrete form.
 \begin{eqnarray}\label{eq: h(x)}
h(\bx)=\ba_i^\top \bx,\; h(\bx) \in \cH,
\end{eqnarray}
\noindent where $\cH \subset \bbR$ is a set of values that takes the function $h(\bx)$ in the closed interval  $[\underline{b}, \overline{b}]$.

\vspace*{5pt}
\begin{remark}
    If the following condition holds, then $h(\bx)\in \cH$ can be replaced by the weaker condition $h(\bx)\in \cH'=\overline{\underline{b},\overline{b}}$.
    \begin{equation}\label{eq: TR0 integer cond}
        \ba\in \bbZ^n, \:\underline{b},\overline{b} \in \bbZ.
    \end{equation}
\end{remark}

We refer to the inequalities with the following form as one-sided constraints.
\begin{align} 
    &\ba^\top \bx \le \overline{b} \label{eq: 1-side}\\
    & \ba^\top \bx \ge \underline{b} \label{eq: 1-side2}
\end{align}

\subsection{Multilevel constraints}\label{sec: MLCs}
Without loss of generality, we can assume that in BLP \eqref{eq: LBP objective}-\eqref{eq: LBP ineq constr}, inequality constraints are written in a two-sided form as \eqref{eq: 2-side}.
%
Let $\bbB^n$ be a search domain, and let a linear constraint \eqref{eq: 2-side} be imposed.
The function \eqref{eq: h(x)} takes a finite number of  values on $\bbB^n$, i.e., there exists
\begin{eqnarray} \label{eq: H}
&H=\{h_1, ..., h_K\}\subset \bbR^1, &h_1<...<h_K,
\end{eqnarray} 
such that $\bx\in \bbB^n \:\Leftrightarrow \: \exists i\in \overline{1,K}:\: h(\bx)=h_i$.
%
We introduce notations for sets of indices of nonzero entries of $\ba$ as below. Note, $K$ is bounded from above by the value $2^{|I|}$.
\begin{eqnarray}\label{eq: I,I+,I-}
I^+=\{i: a_i>0\}, I^-=\{i: a_i<0\}, I=I^+\cup I^-.
\end{eqnarray}

\begin{remark} \label{rem: 1}
Without loss of generality, we assume that $\underline{b},\overline{b}\in H$. Otherwise, the domain    $[\underline{b},\overline{b}]$ of values of $h(\bx)$ can be narrowed by setting   $\underline{b}=b_1,\overline{b}=b_k$. 

%
In addition, we assume that \eqref{eq: 2-side} is valid, \ie $\overline{b}-\underline{b}<b_K-b_1$, otherwise, the constraint is redundant.
\end{remark}
\hspace{1em}

\begin{definition}
  Inequality \eqref{eq: 2-side} is called a $k$-level linear constraint over $\bbB^n$ if 
  \begin{eqnarray}\label{eq: k-level cond}
 & \exists i: &\underline{b}=h_i,  \overline{b}=h_{i+k-1}.  
\end{eqnarray}
\end{definition}
This implies that, in the binary domain subject to \eqref{eq: 2-side}, $h(\bx)$ takes exactly $k$ consecutive values in $H$ starting with $h_i$. 
Let $H^{i,k}$ denote the set of values that $h(\bx)$ takes.
\begin{eqnarray}\label{eq: hik}
H^{i,k}=\{h_i,...,h_{i+k-1}\}.
\end{eqnarray}
 
If $\underline{b},  \overline{b}$ satisfy \eqref{eq: k-level cond}, then a $k$-level constraint  \eqref{eq: 2-side} can be written as the follows.
\begin{equation}\label{eq: k-level combinatorial}
    \exists i\in \overline{1,K},\; h(\bx)\in H^{i, k}.  
\end{equation}

	
This discrete constraint can be represented as a polynomial equation below.
\begin{eqnarray}
&\Pi^{i,k}(\bx)=\prod_{h'\in H^{i, k}} (h(\bx)-h')=0.\label{eq: 2s P}
\end{eqnarray}	
 $\Pi^{i,k}(\bx)$ has a polynomial of degree $k$.	

We refer to a $k$-level constraint that satisfies 
\begin{equation}\label{eq: 2-sided} 
h_1<\underline{b}\le \overline{b}<h_K,
\end{equation}
as two-sided; otherwise, the constraint is referred to as one-sided. 
The latter means that either $\underline{b}=h_1$ or $\overline{b}=h_K$, and the one-sided constraint is either an upper-bound constraint \eqref{eq: 1-side} 
or a lower-bound constraint \eqref{eq: 1-side2}. 
%
Let  constraint \eqref{eq: 2-side} be $k$-level, namely, \eqref{eq: k-level combinatorial} holds. 
\vspace*{5pt}
\begin{lemma}\label{lem: 1}
If $k$ is even, then the following expression is a VIP for \eqref{eq: 2-side}.
\begin{equation}
P(\bx)=\Pi^{i,k}(\bx)\label{eq: l12 P(x)}
\end{equation}
\end{lemma}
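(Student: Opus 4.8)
The plan is to verify directly that $P(\bx)=\Pi^{i,k}(\bx)$ meets the two defining requirements of a VIP: it is nonnegative everywhere on $\bbB^n$, and it vanishes exactly when the constraint \eqref{eq: 2-side} is satisfied. The key starting observation is that for every $\bx\in\bbB^n$ the value $h(\bx)$ lies in the finite, strictly ordered set $H=\{h_1,\dots,h_K\}$ from \eqref{eq: H}. Because of this, it suffices to determine the sign of the product $\Pi^{i,k}(\bx)=\prod_{\ell=i}^{i+k-1}\bigl(h(\bx)-h_\ell\bigr)$ according to which value $h_j$ the quantity $h(\bx)$ happens to equal, rather than treating $h(\bx)$ as a continuous quantity.

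First I would dispose of the feasible case. By the $k$-level assumption, constraint \eqref{eq: 2-side} holds if and only if $h(\bx)\in H^{i,k}=\{h_i,\dots,h_{i+k-1}\}$ as in \eqref{eq: hik}, and in that situation one of the factors $h(\bx)-h_\ell$ is exactly zero, so $P(\bx)=0$. This already settles the implication that feasibility forces $P(\bx)=0$, matching the definition of $\Pi^{i,k}$ in \eqref{eq: 2s P}.

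Next I would analyze the two infeasible regimes separately, using the strict ordering $h_1<\dots<h_K$. If $h(\bx)=h_j$ with $j>i+k-1$ (equivalently $h(\bx)>\overline{b}$), then every factor $h(\bx)-h_\ell$ with $\ell\in\overline{i,i+k-1}$ is strictly positive, so $P(\bx)>0$ irrespective of the parity of $k$. If instead $h(\bx)=h_j$ with $j<i$ (equivalently $h(\bx)<\underline{b}$), then all $k$ factors are strictly negative, so the product carries the sign $(-1)^k$; this is precisely where the hypothesis that $k$ is even enters, yielding $(-1)^k=1$ and hence $P(\bx)>0$.

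Combining the three cases gives $P(\bx)\ge 0$ on all of $\bbB^n$, with equality if and only if $h(\bx)\in H^{i,k}$, which is exactly the VIP condition for \eqref{eq: 2-side}. The only delicate point, and the step I expect to carry the weight of the lemma, is the lower-infeasible regime $h(\bx)<\underline{b}$: this is the sole place where parity is decisive, since for odd $k$ the product of $k$ negative factors would be negative and $P$ would fail to be a penalty. Everything else is routine sign bookkeeping that follows from the fact that $h(\bx)$ can only take values in the ordered set $H$.
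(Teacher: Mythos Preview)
Your proof is correct and follows essentially the same approach as the paper: both verify that $P(\bx)=0$ when $h(\bx)\in H^{i,k}$ because a factor vanishes, and that $P(\bx)>0$ when $h(\bx)>\overline{b}$ (all factors positive) or $h(\bx)<\underline{b}$ (an even number of negative factors). Your version is slightly more explicit about the role of the discrete range $H$ and the sign $(-1)^k$, but the argument is identical in substance.
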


Examining if  $P(\bx)$ is a VIP for constraint \eqref{eq: 2-side} includes two steps, where we verify the following conditions, respectively.
\begin{align}
&\underline{b}\le \ba^\top \bx \le \overline{b} \Rightarrow P(\bx)=0,\; \bx \in \bbB^n \label{eq: Cond1} \\
& \ba^\top \bx \notin [\underline{b},\overline{b}] \Rightarrow P(\bx)>0,\; \bx\in \bbB^n \label{eq: Cond2} 
\end{align}


\begin{proof}
We first verify condition \eqref{eq: Cond1}. If $\bx\in \bbB^n$ satisfies \eqref{eq: 2-side}, then \eqref{eq: k-level combinatorial} holds. Consecutively, \eqref{eq: 2s P} also holds at $\bx$ (as a result of \eqref{eq: k-level combinatorial}). Thereby, condition \eqref{eq: Cond1} is satisfied.

We then examine condition \eqref{eq: Cond2}. If $\bx\in \bbB^n$ does not satisfy \eqref{eq: 2-side}, then we have either $h(\bx)>\overline{b}$ or $h(\bx)<\underline{b}$. In the former case, $P(\bx)=\prod_{h'\in H^{i, k}}(h(\bx)-h')>0$
is a product of positive factors, thus, $P(\bx)$ is positive. In the latter case, $P(\bx)$ is also positive since it is a product of an even number of negative values.
\end{proof}

\begin{corollary}\label{Cor: 1}
The following expression is a QVIP for a 2-level constraint \eqref{eq: 2-side}.
\begin{equation}
P(\bx)=(h(\bx)- h_i)(h(\bx)- h_{i+1})=(h(\bx)-\underline{b})(h(\bx)-\overline{b}).\label{eq: QVIP}
\end{equation}
\end{corollary}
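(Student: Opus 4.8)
The plan is to recognize that this corollary is simply the specialization of Lemma~\ref{lem: 1} to the case $k=2$, so that almost all of the work is already done and only two short verifications remain. First I would observe that a 2-level constraint corresponds to $k=2$, which is even; therefore Lemma~\ref{lem: 1} applies directly and guarantees that $\Pi^{i,2}(\bx)$ is a VIP for the constraint \eqref{eq: 2-side}. No separate check of conditions \eqref{eq: Cond1}--\eqref{eq: Cond2} is needed, since Lemma~\ref{lem: 1} has already established them for every even $k$.

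Next I would unfold the definition $\Pi^{i,k}(\bx)=\prod_{h'\in H^{i,k}}(h(\bx)-h')$ from \eqref{eq: 2s P} in the case $k=2$. Here $H^{i,2}=\{h_i,h_{i+1}\}$, so the product collapses to $\Pi^{i,2}(\bx)=(h(\bx)-h_i)(h(\bx)-h_{i+1})$. Using the $k$-level identification $\underline{b}=h_i$ and $\overline{b}=h_{i+k-1}=h_{i+1}$, this is exactly $(h(\bx)-\underline{b})(h(\bx)-\overline{b})$, matching the displayed expression \eqref{eq: QVIP}. This confirms that the $P(\bx)$ stated in the corollary is literally the VIP supplied by Lemma~\ref{lem: 1}.

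Finally I would argue quadraticity, which is the only content beyond Lemma~\ref{lem: 1}: it upgrades the conclusion from ``VIP'' to ``QVIP.'' Since $h(\bx)=\ba^\top\bx$ is affine in $\bx$, the product of the two affine factors $(h(\bx)-\underline{b})$ and $(h(\bx)-\overline{b})$ is a polynomial of degree exactly $2$ in $\bx$; equivalently, the stated degree bound ``$\Pi^{i,k}$ has degree $k$'' gives degree $2$ when $k=2$. Hence $P(\bx)$ is a quadratic VIP, i.e., a QVIP for \eqref{eq: 2-side}.

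There is essentially no obstacle here, as this is a direct corollary. The only point needing minor care is the index bookkeeping $\overline{b}=h_{i+k-1}=h_{i+1}$, which must be tracked so that the general product formula \eqref{eq: 2s P} specializes cleanly to \eqref{eq: QVIP}; everything else is immediate from Lemma~\ref{lem: 1} and the linearity of $h$. I would note in passing that the statement covers both one-sided and two-sided 2-level constraints, since the evenness of $k=2$ is all that Lemma~\ref{lem: 1} requires.
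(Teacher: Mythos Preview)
Your proposal is correct and matches the paper's treatment: the paper states Corollary~\ref{Cor: 1} immediately after Lemma~\ref{lem: 1} without a separate proof, treating it as the direct specialization to $k=2$, which is precisely what you do. Your added remarks on the index bookkeeping $\overline{b}=h_{i+1}$ and on quadraticity of the product of two affine factors make explicit what the paper leaves implicit.
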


\begin{lemma}\label{lem: 2}
If $\underline{b}= h_1$, then \eqref{eq: l12 P(x)} is a VIP for \eqref{eq: 2-side}.
\end{lemma}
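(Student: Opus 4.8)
The plan is to reuse the exact two-condition framework that proved Lemma~\ref{lem: 1}, namely checking \eqref{eq: Cond1} and \eqref{eq: Cond2} separately for $P(\bx)=\Pi^{i,k}(\bx)$, but to exploit the structural simplification that the hypothesis $\underline{b}=h_1$ provides. First I would observe that $\underline{b}=h_1$ forces the index in the $k$-level condition \eqref{eq: k-level cond} to be $i=1$, so that $H^{i,k}=H^{1,k}=\{h_1,\dots,h_k\}$ and $\overline{b}=h_k$. Since $h_1$ is by construction the minimum value attained by $h(\bx)$ on $\bbB^n$ (see \eqref{eq: H}), the lower bound $\underline{b}=h_1$ is automatically satisfied for every $\bx\in\bbB^n$, and the constraint \eqref{eq: 2-side} is in fact the one-sided upper-bound constraint $h(\bx)\le\overline{b}=h_k$.

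Condition \eqref{eq: Cond1} is then verified verbatim as in Lemma~\ref{lem: 1}: if $\bx$ is feasible, then $h(\bx)\le h_k$, and combined with $h(\bx)\ge h_1$ this gives $h(\bx)\in H^{1,k}$, so exactly one factor of $\Pi^{1,k}(\bx)=\prod_{j=1}^{k}\bigl(h(\bx)-h_j\bigr)$ vanishes and $P(\bx)=0$. The interesting step is condition \eqref{eq: Cond2}. For an infeasible $\bx$ we would ordinarily split into the cases $h(\bx)>\overline{b}$ and $h(\bx)<\underline{b}$, as in Lemma~\ref{lem: 1}. Here the second case is vacuous precisely because $h(\bx)\ge h_1=\underline{b}$ for all $\bx\in\bbB^n$, so only $h(\bx)>h_k$ can occur; in that case every factor $h(\bx)-h_j$ with $j\in\overline{1,k}$ is strictly positive, and hence $P(\bx)>0$.

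The main thing to get right is therefore not a hard calculation but the recognition that the parity hypothesis of Lemma~\ref{lem: 1} was only needed to control the sign of a product of an even number of negative factors arising from the lower-violation branch $h(\bx)<\underline{b}$. Under the present hypothesis $\underline{b}=h_1$ that branch cannot occur at all, so the positivity of $P$ on infeasible points comes for free from a product of positive factors, with no restriction on $k$. I would state this explicitly so the reader sees that Lemma~\ref{lem: 2} is the boundary analogue of Lemma~\ref{lem: 1} (and, by the obvious symmetric argument applied to $-h$, the case $\overline{b}=h_K$ yields the same conclusion for a lower-bound one-sided constraint). The only point deserving care is justifying $h(\bx)\ge h_1$ for every $\bx\in\bbB^n$, which is immediate from the definition \eqref{eq: H} of $H$ as the set of all values of $h$ on $\bbB^n$ together with $h_1<\cdots<h_K$.
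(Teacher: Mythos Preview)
Your proof is correct and follows essentially the same approach as the paper: identify that $\underline{b}=h_1$ forces $i=1$, so the lower-violation branch is vacuous and only $h(\bx)>h_k$ remains for infeasible points, where $P(\bx)$ is a product of strictly positive factors. Your write-up is considerably more detailed than the paper's terse version (and usefully makes explicit why the parity hypothesis of Lemma~\ref{lem: 1} is no longer needed), but the underlying argument is identical.
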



Lemma \ref{lem: 2} states that if \eqref{eq: 2-side} is a one-sided constraint \eqref{eq: 1-side}, then $\Pi^{1,k}(\bx)$ defines a VIP for the constraint.
\begin{proof}
\eqref{eq: l12 P(x)} takes the form of $P(\bx)=\Pi^{1,k}(\bx)=\prod_{j=1}^k (h(\bx)-h_i)$.
Given $\bx\in \bbB^n$, we have either $h(\bx)\le h_k$ or $h(\bx)> h_k$. In the former case,  $P(\bx)=0$. Thereby, condition \eqref{eq: Cond1} is satisfied. In the latter case, $P(\bx)>0$ since it is a product of positive factors. Thereby, condition \eqref{eq: Cond2} is satisfied.
\end{proof}


\begin{corollary}\label{Cor: 2}
A polynomial VIP corresponding to a one-sided constraint \eqref{eq: 1-side2}
is 
\begin{eqnarray}
&P(\bx)=(-1)^k\Pi^{K-k+1,K}(\bx).\label{eq: K-k+1,K}
\end{eqnarray}
\end{corollary}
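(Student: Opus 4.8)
The plan is to reduce this one-sided lower-bound case to the already-proved upper-bound case of Lemma~\ref{lem: 2}, using exactly the sign-flip conversion of \eqref{eq: 1-side2} into \eqref{eq: 1-side} that the paper itself advocates. Concretely, I would introduce the reflected linear function $\tilde h(\bx) = -h(\bx) = (-\ba)^\top\bx$, so that the constraint $\ba^\top\bx \ge \underline b$ is equivalent to the upper-bound constraint $\tilde h(\bx) \le -\underline b$. The first step is to record how the value set transforms: since $h(\bx)$ ranges over $H=\{h_1,\dots,h_K\}$ with $h_1<\cdots<h_K$, the function $\tilde h(\bx)$ ranges over $\tilde H=\{-h_K,\dots,-h_1\}$, whose ascending enumeration is $\tilde h_j = -h_{K-j+1}$ for $j\in\overline{1,K}$.

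Next I would identify the level structure of the reflected constraint. For a one-sided lower-bound constraint the upper bound is $\overline b = h_K$, so $k$-levelness forces $\underline b = h_{K-k+1}$; the feasible values of $\tilde h$ are then exactly $\{\tilde h_1,\dots,\tilde h_k\} = \{-h_K,\dots,-h_{K-k+1}\}$. Hence $\tilde h(\bx)\le -\underline b$ is itself a $k$-level constraint whose feasible block starts at the smallest reflected value $\tilde h_1$, which is precisely the hypothesis $\underline b = h_1$ of Lemma~\ref{lem: 2} (now read for $\tilde h$). Applying that lemma yields that $\tilde\Pi^{1,k}(\bx)=\prod_{j=1}^k\bigl(\tilde h(\bx)-\tilde h_j\bigr)$ is a VIP for the reflected constraint, and therefore for the logically equivalent original constraint \eqref{eq: 1-side2}.

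The final step is the routine bookkeeping that rewrites this reflected penalty in the original data. Substituting $\tilde h(\bx)=-h(\bx)$ and $\tilde h_j=-h_{K-j+1}$ gives $\prod_{j=1}^k\bigl(h_{K-j+1}-h(\bx)\bigr)$, and factoring a $-1$ out of each of the $k$ factors produces $(-1)^k\prod_{h'\in H^{K-k+1,k}}\bigl(h(\bx)-h'\bigr) = (-1)^k\,\Pi^{K-k+1,k}(\bx)$, the claimed expression \eqref{eq: K-k+1,K}. Because Lemma~\ref{lem: 2} already certifies conditions \eqref{eq: Cond1} and \eqref{eq: Cond2} for $\tilde P=\tilde\Pi^{1,k}$, and $\tilde P$ and $P$ are literally the same function of $\bx$, both conditions hold for $P$ with no further work.

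I expect the only delicate point to be sign tracking rather than anything conceptual: one must check that in the infeasible case the number of negative factors is exactly $k$, so that the prefactor $(-1)^k$ restores positivity, and that the constraint is one-sided in the correct direction ($\overline b=h_K$), so that the feasible block genuinely terminates at $h_K$ and the product index runs over $H^{K-k+1,k}=\{h_{K-k+1},\dots,h_K\}$. As a self-contained alternative to the reflection argument, I could verify \eqref{eq: Cond1} and \eqref{eq: Cond2} directly: feasibility $h(\bx)\ge h_{K-k+1}$ makes one factor of $\Pi^{K-k+1,k}$ vanish, whereas $h(\bx)<h_{K-k+1}$ makes all $k$ factors strictly negative, so $(-1)^k\Pi^{K-k+1,k}(\bx)>0$. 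The reflection proof is cleaner, however, since it reuses Lemma~\ref{lem: 2} verbatim.
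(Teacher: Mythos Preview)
Your proposal is correct. The paper actually gives no proof of Corollary~\ref{Cor: 2} at all; it is stated as an immediate consequence of Lemma~\ref{lem: 2}, and your reflection argument is exactly the reduction that the ``corollary'' label implicitly promises. Your alternative direct verification---one factor vanishes when $h(\bx)\ge h_{K-k+1}$, all $k$ factors are strictly negative when $h(\bx)<h_{K-k+1}$---is the obvious analogue of the paper's own proof of Lemma~\ref{lem: 2}, so either route is faithful to the paper's intent.

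One notational remark: the paper writes $\Pi^{K-k+1,K}$ with a capital $K$ in the second superscript, but given the convention $\Pi^{i,k}=\prod_{h'\in H^{i,k}}(h(\bx)-h')$ with $H^{i,k}=\{h_i,\dots,h_{i+k-1}\}$, this is evidently a typo for $\Pi^{K-k+1,k}$. You used the corrected form throughout, which is right.
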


\noindent If $k$ takes an even value, formula \eqref{eq: l12 P(x)} always defines a VIP. If $k$ takes an odd value, Lemmas~\ref{lem: 1} and \ref{lem: 2} do not cover the case of two-sided constraints. In this situation, we propose increasing the degree of $\Pi^{i,k}(\bx)$ by one to form a VIP. 

\vspace*{5pt}
\begin{lemma}\label{lem: 3}
If $k$ is odd, then, for $j\in \overline{i,i+k-1}$, the following expression is a VIP for \eqref{eq: 2-side}. 
\begin{equation}
P(j,\bx)=(h(\bx)-h_{j})\Pi^{i,k}(\bx)\label{eq: 2-side,ikj}
\end{equation}
\end{lemma}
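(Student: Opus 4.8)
The plan is to verify the two defining conditions of a VIP, \eqref{eq: Cond1} and \eqref{eq: Cond2}, directly from the factored form of $P(j,\bx)$, mirroring the argument already used in Lemmas~\ref{lem: 1} and~\ref{lem: 2}. The starting observation I would make explicit is that, because $j\in\overline{i,i+k-1}$, the value $h_j$ lies in $H^{i,k}$, so $(h(\bx)-h_j)$ is itself one of the $k$ factors of $\Pi^{i,k}(\bx)$; multiplying by it again simply squares that factor while leaving the remaining ones unchanged. This clarifies up front that the only purpose of the extra factor is to repair the sign of $\Pi^{i,k}(\bx)$ below the feasible band, which is precisely where the parity of $k$ enters.

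First I would dispatch condition \eqref{eq: Cond1}. If $\bx$ is feasible, then by $k$-levelness $h(\bx)\in H^{i,k}$, so $\Pi^{i,k}(\bx)=0$ and hence $P(j,\bx)=(h(\bx)-h_j)\cdot 0=0$, independently of the chosen $j$. This step is immediate.

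The substance lies in condition \eqref{eq: Cond2}, which I would prove by a case split on the direction of violation. If $h(\bx)>\overline{b}=h_{i+k-1}$, then every factor $(h(\bx)-h')$ with $h'\in H^{i,k}$ is strictly positive, so $\Pi^{i,k}(\bx)>0$, and since $h_j\le h_{i+k-1}<h(\bx)$ we also have $(h(\bx)-h_j)>0$, whence $P(j,\bx)>0$. If instead $h(\bx)<\underline{b}=h_i$, then all $k$ factors of $\Pi^{i,k}(\bx)$ are strictly negative, so $\Pi^{i,k}(\bx)<0$ exactly because $k$ is odd; and since $h_j\ge h_i>h(\bx)$ gives $(h(\bx)-h_j)<0$, the product $P(j,\bx)$ is again positive. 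This settles \eqref{eq: Cond2}.

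The only genuinely delicate point, which I would flag as the crux rather than an obstacle, is the sign bookkeeping in the lower case: $\Pi^{i,k}(\bx)$ is negative there precisely when $k$ is odd, and it is the co-negative factor $(h(\bx)-h_j)$, available only because $h_j\in H^{i,k}$, that restores positivity. The hypotheses that $k$ is odd and that $j\in\overline{i,i+k-1}$ therefore act in tandem, and I would ensure the write-up makes clear that neither can be dropped, as well as noting that for a one-sided constraint one of the two cases is simply vacuous so the same argument still applies.
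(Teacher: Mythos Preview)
Your proof is correct and follows essentially the same approach as the paper's: verify \eqref{eq: Cond1} via the vanishing of $\Pi^{i,k}(\bx)$ on the feasible band, and verify \eqref{eq: Cond2} by a sign analysis on the two directions of violation, observing that $P(j,\bx)$ is a product of $k+1$ factors, all positive when $h(\bx)>h_{i+k-1}$ and all negative (hence positive product, $k+1$ being even) when $h(\bx)<h_i$. Your write-up is more explicit about separating the roles of $\Pi^{i,k}(\bx)$ and the extra factor $(h(\bx)-h_j)$ and about where the hypotheses enter, but the underlying argument is the same.
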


\begin{proof}
In \eqref{eq: 2-side,ikj}, $P(j,\bx)$ equals to zero at any point $\bx$ satisfying \eqref{eq: k-level combinatorial}. Thereby, condition \eqref{eq: Cond1} is satisfied. Meanwhile, for $\bx: h(\bx)>h_{i+k-1}$, the value $P(j,\bx)>0$ since it is a product of positive values. For $\bx: h(\bx)<h_{i}$, $P(j,\bx)$ is also positive since it is a product of an even number of negative factors. Thereby, condition \eqref{eq: Cond2} is satisfied.
\end{proof}

\begin{corollary}\label{Cor: 3}
The following expression is a QVIP for an equality constraint $\ba^\top \bx =h_i$.
\begin{eqnarray}
&P(i,\bx)=(h(\bx)-h_{i})^2. \label{eq: 1-leve vIP}
\end{eqnarray}

\end{corollary}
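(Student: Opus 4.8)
The plan is to recognize the equality constraint $\ba^\top\bx = h_i$ as the degenerate instance of the two-sided constraint \eqref{eq: 2-side} in which $\underline{b} = \overline{b} = h_i$. In the levelness terminology of \eqref{eq: k-level cond} this is precisely a $1$-level constraint, i.e. $k=1$, so that the admissible values form the singleton $H^{i,1} = \{h_i\}$ by \eqref{eq: hik}. Since $k=1$ is odd, the relevant tool is Lemma~\ref{lem: 3} rather than Lemma~\ref{lem: 1}, and I would derive the corollary as a direct specialization of that lemma.

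Carrying this out, I would first note that with $H^{i,1} = \{h_i\}$ the polynomial $\Pi^{i,1}(\bx)$ reduces to the single factor $h(\bx) - h_i$. The index range $j \in \overline{i, i+k-1}$ collapses to the single choice $j = i$, so substituting into \eqref{eq: 2-side,ikj} gives $P(i,\bx) = (h(\bx) - h_i)\,\Pi^{i,1}(\bx) = (h(\bx)-h_i)^2$, which is exactly \eqref{eq: 1-leve vIP}. Lemma~\ref{lem: 3} then certifies that this expression is a VIP for the constraint. It remains to upgrade ``VIP'' to ``QVIP'': since $h(\bx) = \ba^\top\bx$ is linear in $\bx$, the square $(\ba^\top\bx - h_i)^2$ is a quadratic polynomial in $\bx$, so $P(i,\bx)$ is quadratic and hence a genuine QVIP rather than merely a higher-degree VIP.

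As a robustness check I would also verify conditions \eqref{eq: Cond1} and \eqref{eq: Cond2} directly: if $\bx$ is feasible then $h(\bx) = h_i$ and $P(i,\bx)=0$, while if $h(\bx) \ne h_i$ then $(h(\bx)-h_i)^2 > 0$ as the square of a nonzero real. This direct verification is essentially immediate and sidesteps any worry about edge-case indexing.

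The only delicate point — and the step I would scrutinize most — is the legitimacy of invoking Lemma~\ref{lem: 3} at the boundary value $k=1$. There the two cases in that lemma's proof coincide ($h_{i+k-1} = h_i$), and the ``product of an even number of negative factors'' argument must still apply to the two equal factors $h(\bx)-h_i$. It does hold, since when $h(\bx) < h_i$ both factors are negative and their product is positive; nonetheless, because the reduction relies on a degenerate index range, I would present the short direct verification above as the primary argument and cite Lemma~\ref{lem: 3} only as corroboration.
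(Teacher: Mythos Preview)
Your proposal is correct and matches the paper's intended approach: the paper explicitly frames Lemma~\ref{lem: 3} as a generalization of this corollary (so the corollary is the $k=1$ specialization you describe), and it omits the proof as ``straightforward.'' Your direct verification of \eqref{eq: Cond1}--\eqref{eq: Cond2} is a sensible addition given the degenerate index range, but the core derivation via Lemma~\ref{lem: 3} with $j=i$ and $\Pi^{i,1}(\bx)=h(\bx)-h_i$ is exactly what the paper has in mind.
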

\noindent Note that this corollary yields a VIP used in TR1 (see \cite{glover_quantum_2022}). Thus, Lemma~\ref{lem: 3} is a generalization of the conventional scheme of forming QVIPs for linear equality constraints. Lemmas~\ref{lem: 1}, \ref{lem:  2}, and \ref{lem: 3} define polynomial VIPs for an arbitrary $k$-level constraint.

For a 2-level constraint on $\bx$, a QVIP can be obtained using \eqref{eq: QVIP}. For a 2-level constraint $h(\bx,\bs)\in \{\underline{b},\overline{b}\}$ involving $\bx$ and $\bs$, an QAVIP can be obtained as the follows.
\begin{eqnarray}
&P(\bx,\bs)=(h(\bx,\bs)-\underline{b})(h(\bx,\bs)-\overline{b})\label{eq: AQVIP}
\end{eqnarray}

The proofs of the validity of \eqref{eq: 1-leve vIP} and \eqref{eq: AQVIP} are straightforward and thus omitted here.


\subsection{The Multilevel Constraint Transformation Scheme}\label{sec: MLCTS}

We first introduce two new transformation procedures, and then formally define the Multilevel Constraint Transformation Scheme (MLCTS) to derive compact or augmented QUBO formulations from a binary linear program. In addition, we derive the necessary conditions for the existence of a compact QUBO reformulation.



\paragraph{New Transformation Procedures}
\begin{itemize}
    \item Transformation 5 (\textbf{TR5}): derive a (augmented) QVIP using Corollary~\ref{Cor: 1}. \\

\item Transformation 6 (\textbf{TR6}): this procedure is applied to transform equality constraints(s) with ancillary variables into 2-level constraint(s) by excluding either one or two ancillary variables, which are referred to as \textbf{TR6.1} and \textbf{TR6.2}, respectively. We discuss the details of these two variants in Section \ref{ssec: TR6}.
\end{itemize}

\subsubsection{ MLCTS outline}

The MLCTS is formally defined in Algorithm \ref{alg:MLCTS}. Starting with a BLP model in the form of \eqref{eq: LBP Model}, MLCTS derives (augmented) QVIPs for associated equality and inequality constraints to form a QUBO model. 

MLCTS starts with equality constraints and derives QVIPS using TR1 (see line 1). It then iterates through each inequality constraint (line 2 - 17). For each inequality constraint, we first investigate its form by refining the bounds and finding the levelness (line 2 and 3). Using the information, we determine the appropriate ways to derive a VIP (\ie $P'(\bx)$) for the associated inequality constraint (line 5 - 9). In line 10, the polynomial degree of the VIP is reduced by TR3, if applicable. For VIP with a polynomial degree of three or higher, MLCTS offers two options for degree reduction to obtain an augmented QVIP (line 11 - 15). Lastly, all QVIPs and augmented QVIPs are combined with the objective $f(\bx)$ to form a penalty function (line 18).




Note, in Option 2 of polynomial degree reduction (line 13 - 15), depending on the choice of TR0 procedure (TR0.1 or TR0.2), the corresponding TR6 procedures will be performed (TR6.1 or TR6.2). We further discuss this process in Section \ref{ssec: TR6}.

As shown above, MLCTS offers two options for polynomial degree reduction when $\kappa'\ge 3 $. For the purpose of notation, we will introduce the notation MLCTS($\cdot)$, where $(\cdot)$ indicates the variants of degree reduction in line 11-16. For instance, the notation MLCTS(TR4.1) means that this step is performed using TR4.1 (line 12) and the notation MLCTS (TR0+TR6) means that this step is performed with Option 2 (line 13-15).



\begin{algorithm}[hbt]
\caption{Multilevel Constraint Transformation Scheme}\label{alg:MLCTS}
\begin{algorithmic}[1]
\Require A BLP model in the form of \eqref{eq: LBP Model}.
 \Ensure  A QUBO model in the form of \eqref{eq: qubo model}.

 \State Perform TR1 to get QVIPs for associated equality constraints.

 \For {each inequality constraint}
    \State  Refine lower and upper bounds  $\underline{b},\overline{b}$,  if applicable.
    \State Find the levelness $k$ of the constraint.
    \If {the constraint is an upper bound one-sided constraint, \ie $\ba^\top\bx\le\overline{b}$}
          Derive $P(\bx)$  using Lemma~\ref{lem: 2}
    \ElsIf{the constraint is a lower bound one-sided constraint, \ie $\ba^\top\bx\ge\underline{b}$}
          Derive $P(\bx)$ using Corollary~\ref{Cor: 2}.
    \ElsIf{the constraint is a two-sided \textit{k-level} constraint and $k$ is even}
          Derive $P(\bx)$ using  Lemma~\ref{lem: 1}.
    \Else{ Derive $P(\bx)$ using  Lemma~\ref{lem: 3}.}
    \EndIf  \Comment{$P(\bx)$ is a polynomial VIP with degree $\kappa\le k+1$}

 \State Perform TR3 on $P(\bx)$ to derive a VIP $P'(\bx)$ with a degree $\kappa'\le \min\{n, |I|, \kappa\}$.

\If{$\kappa'\ge 3 $ } do one of the following
\State \textbf{Option 1}: Perform TR4 on $P'(\bx)$ and obtain  an augmented QVIP  $P''(\bx,\bs)$ 
\State \textbf{Option 2}: Perform TR0 on the ineq. constraint, obtain eq. constraint(s).
\State \phantom{\textbf{Option 2}:} 
            Perform TR6 to transform eq. constraint(s) into 2-level inequality constraint(s)
\State \phantom{\textbf{Option 2}:} Apply TR5 to derive corresponding augmented QVIP(s).
\EndIf
 \EndFor

\State Combine all the QVIPs and augmented QVIPs in a linear combination with positive coefficients and add the objective $f(\bx$).

\State \Return
\end{algorithmic}
\end{algorithm}


\if 0 
\clearpage

The MLCTS is formally defined as follows.
We first perform TR1 on each equality constraint, and then, for each inequality constraint \eqref{eq: 2-side}, we conduct the following steps leading to its quadratization in original or lifted space:
\begin{enumerate}
\item Refine lower and upper bounds  $\underline{b},\overline{b}$  where applicable.
\item Find the levelness $k$ of the constraint.
\item \label{MLCTS step 3} Derive a polynomial VIP $P(\bx)$ of degree $\kappa\le k+1$ for the \textit{$k$-level} linear constraint:
    \begin{enumerate}
    \item for an upper bound one-sided constraint in the form of $\ba^\top\bx\le\overline{b}$, apply  Lemma~\ref{lem: 2}
    \item for a lower bound one-sided constraint in the form of $\ba^\top\bx\ge\underline{b}$, apply  Corollary~\ref{Cor: 2}
    \item for two-sided \textit{k-level} constraint with even $k$, apply Lemma~\ref{lem: 1}
    \item for two-sided \textit{k-level} constraint and odd $k$, apply Lemma~\ref{lem: 3}
    \end{enumerate}
\item \label{MLCTS step 4}Perform TR3 on $P(\bx)$ and derive a VIP $P'(\bx)$ of the degree $\kappa'\le \min\{n, |I|, \kappa\}$.
\item \label{MLCTS step 5} This step is performed if $\kappa'\ge 3 $ because if $\kappa'\le 2$, a linear or quadratic VIP $P'(\bx)$ has been found on Step~\ref{MLCTS step 4}. 
Step~\ref{MLCTS step 5} can be done in two different ways:
    \begin{enumerate}
    \item \label{MLCTS step 5.1}if  TR4 is performed for $P'(\bx)$ resulting in  a QAVIP  $P''(\bx,\bs)$ (see Sec.~\ref{ssec:TR0-TR4}).
     \item \label{MLCTS step 5.2}   
         \begin{enumerate}
            \item perform TR0  transforming the inequality constraint into a couple of equations (TR0.1) or single equation (TR0.2) 
            \problem{To Oksana, we already have $P(\bx)$, why are we transforming the inequality ?
            
            to Tan: I will clarify this
            }
            \item 
            Perform TR6 transforming the equation(s) into 2-level constraint(s) by excluding two ancillary variables (TR6.1) or a single  ancillary variable (TR6.2) (see Sec.~\ref{ssec: TR6})
            \item Apply TR5 to the 2-level constraint(s) and derive  a couple of QAVIPs for the constraint or if TR6.2 was applied a single QAVIP if TR6.1 was used
        \end{enumerate} 
    \end{enumerate}
\end{enumerate}

Lastly, we combine all the QVIPs and AQVIPs in a linear combination with positive coefficients and get a QVIP $\bP_\lambda(\bx)$ or AQVIP $\bP_\lambda(\bx,\by)$ for violating a constraint $x\in E$. 

Note, we will indicate which one of the variants of Step~\ref{MLCTS step 5} in the form MLCTS(Step~\ref{MLCTS step 5}-variant). For instance, the notation MLCTS(TR4.1) means that this step is performed using TR4.1.  In Sec.~\ref{ssec: TR6}, we propose two variants of performing TR0+TR6. Together with MLCTS(TR4.1) and  MLCTS(TR4.2), it results in four variants of MLCTS. They are all illustrated in Sections~\ref{ssec: MLCTS il1} and \ref{ssec: MLCTS il2}. 

\clearpage
\fi

\subsubsection{Sufficient Conditions for the Existence of a compact QUBO Reformulation}\label{ssec: suf cond}
To derive a compact QUBO formulation, there are two key factors: the constraint levelness and the number of variables involved in the constraints. In this section, we derive the sufficient conditions when a compact QUBO formulation can be derived from a given BLP model.

Let $P(\bx)$ and $P'(\bx)$ be VIPs for the $k$-level constraint \eqref{eq: 2-side} obtained from line 5-9 and line 10 
of Algorithm~\ref{alg:MLCTS}, respectively. Then the degree $\kappa$ of $P(\bx)$ is
\begin{equation} \label{eq: existence-of-cqubo-degree-k}
\kappa= \left\{
\begin{aligned}
&k+1,\: \text{ if }P'(\bx)\text{ is found by }\eqref{eq: 2-side,ikj}\\
&k,\; \text{otherwise}
\end{aligned}\right.
\end{equation}


\begin{lemma}\label{lem: 6}
The degree $\kappa'$ of a polynomial VIP $P'(\bx)$ for the $k$-level constraint \eqref{eq: 2-side}  obtained by the MLCTS satisfies  $\kappa'\le \min \{ \kappa, |I|\}$.
\end{lemma}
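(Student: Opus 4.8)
The plan is to track how the total degree of the VIP changes under the single transformation that separates $P(\bx)$ from $P'(\bx)$, namely TR3 (line~10 of Algorithm~\ref{alg:MLCTS}), and to establish two independent upper bounds on $\kappa'$ that I then combine. The whole argument is essentially bookkeeping about the degree of a multilinearized polynomial, so I would keep it short and structural.

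First I would record the precise form of $P(\bx)$ produced by lines~5--9. In every branch the VIP is a product of linear factors in $h(\bx)=\ba^\top\bx$: it is $\Pi^{i,k}(\bx)$ with $k$ factors for the one-sided cases (Lemma~\ref{lem: 2}, Corollary~\ref{Cor: 2}) and the even two-sided case (Lemma~\ref{lem: 1}), or $(h(\bx)-h_j)\Pi^{i,k}(\bx)$ with $k+1$ factors in the odd two-sided case \eqref{eq: 2-side,ikj} (Lemma~\ref{lem: 3}); this is exactly the degree $\kappa$ recorded in \eqref{eq: existence-of-cqubo-degree-k}. The key structural observation is that, since $a_i=0$ for $i\notin I$ by the definition of $I=I^+\cup I^-$ in \eqref{eq: I,I+,I-}, the function $h(\bx)$ — and hence $P(\bx)$ — is a polynomial in the variables $\{x_i : i\in I\}$ only, of total degree exactly $\kappa$.

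Next I would analyze TR3 itself. TR3 multilinearizes $P(\bx)$ by replacing every power $x_i^p$ ($p\ge 1$) with $x_i$ via the identity $x^p=x$ valid on $\bbB^n$, and then factors out a common positive constant $r$, which leaves the degree unchanged. I would derive the two bounds monomial by monomial: a typical monomial $c\prod_{i} x_i^{p_i}$ in the expansion of $P(\bx)$ has total degree $\sum_i p_i\le\kappa$ and involves only indices $i\in I$, and under multilinearization it becomes $c\prod_{i:\,p_i\ge 1} x_i$, whose degree equals the number of \emph{distinct} variables it contains. That count is at most $\sum_i p_i\le\kappa$, yielding $\kappa'\le\kappa$, and it is simultaneously at most $|I|$ since all surviving variables are indexed by $I$. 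Taking the maximum over all surviving monomials gives $\kappa'\le\min\{\kappa,|I|\}$, as claimed (consistent with the bound $\kappa'\le\min\{n,|I|,\kappa\}$ in line~10, because $|I|\le n$).

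I do not expect a genuine obstacle here. The only two points requiring care are (i) confirming that $P(\bx)$ depends solely on the variables indexed by $I$, so that every multilinear monomial draws from at most $|I|$ distinct variables, and (ii) noting that multilinearization is degree-nonincreasing — the distinct-variable count of a monomial never exceeds its total degree — while the factoring-out of $r$ in TR3 does not affect the degree. Both are immediate once the product-of-linear-factors structure of $P(\bx)$ is made explicit, so the hardest part is really just stating the degree bookkeeping cleanly rather than any substantive difficulty.
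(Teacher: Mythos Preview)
Your proposal is correct and follows essentially the same approach as the paper: both arguments rest on the observation that $P(\bx)$ is a degree-$\kappa$ polynomial in only the $|I|$ variables indexed by $I$, and that multilinearization (TR3) therefore produces monomials of degree at most $\min\{\kappa,|I|\}$. Your version is more explicit than the paper's---you spell out the monomial-by-monomial bound and explicitly justify both the $\kappa'\le\kappa$ and $\kappa'\le |I|$ inequalities---whereas the paper's proof states these facts more tersely.
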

\begin{proof}
$P'(\bx)$  is a polynomial of degree $\kappa$ with monomial terms on at most $|I|$ variables. When TR3 is performed on the polynomial $P(\bx)$, its terms are converted into multilinear monomials on the same variables. Hence, the degree of the resulting VIP $P'(\bx)$  does not exceed $|I|$.
\end{proof}

Lemma~\ref{lem: 6} allows us to derive the sufficient conditions for the existence of a QUBO formulation of a BLP \eqref{eq: LBP objective}-\eqref{eq: LBP ineq constr}.
Consider the $i^\text{th}$ constraint in \eqref{eq: LBP ineq constr}, 
$h_i(\bx)=\ba_i^{'\top} \bx \le b_i', i\in \overline{1,m'}$,
and let $I_i\subseteq \overline{1,n}$ be a set indices of nonzero coefficients of a vector $\ba'_i$.  $\kappa_i$ is a levelness of the $i\text{th}$ constraint in \eqref{eq: LBP ineq constr} such that 
\begin{equation}\label{eq: existence-of-cqubo-degree-k'}
\kappa_i=
\begin{cases}
k_i+1\text{ if a VIP for }i\text{th constraint in } \eqref{eq: LBP ineq constr}\text{ is found by }\eqref{eq: 2-side,ikj}\\
k_i\text{, otherwise}.
\end{cases}
\end{equation}

\begin{theorem}\label{th: 1}
If the following condition is satisfied,  then the BLP \eqref{eq: LBP objective}-\eqref{eq: LBP ineq constr} allows a compact QUBO formulation.
\begin{eqnarray}\label{eq: KN <=2}
KN=\underset{i\in \overline{1,m'}}\max \min \{\kappa_i,|I_i|\}\le 2,
\end{eqnarray}
where $KN$ is an upper bound on the degree of the polynomial VIPs found from line 10 of Algorithm~\ref{alg:MLCTS}.
\end{theorem}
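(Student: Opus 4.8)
The plan is to reduce the statement to the degree bound of Lemma~\ref{lem: 6} together with the observation that a penalty of degree at most two over $\bx$ alone introduces no ancillary variables. First I would dispose of the equality constraints \eqref{eq: LBP eq constr}: by line~1 of Algorithm~\ref{alg:MLCTS} these are handled by TR1, which via Corollary~\ref{Cor: 3} yields the quadratic VIP $(h(\bx)-h_i)^2$ defined over $\bx$ only. Hence the equality part of the penalty function is already quadratic and ancillary-free, regardless of the hypothesis \eqref{eq: KN <=2}.

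Next I would treat the inequality constraints \eqref{eq: LBP ineq constr} one at a time. For the $i$-th constraint, lines~5--9 produce a polynomial VIP $P_i(\bx)$ of degree $\kappa_i$ as recorded in \eqref{eq: existence-of-cqubo-degree-k'} (applying Lemmas~\ref{lem: 1}--\ref{lem: 3} and Corollary~\ref{Cor: 2} according to the form and parity of the levelness), and line~10 applies TR3 to obtain a multilinear VIP $P'_i(\bx)$ whose degree $\kappa'_i$ satisfies $\kappa'_i\le\min\{\kappa_i,|I_i|\}$ by Lemma~\ref{lem: 6}. The hypothesis $KN=\max_{i}\min\{\kappa_i,|I_i|\}\le 2$ then forces $\kappa'_i\le 2$ for every $i\in\overline{1,m'}$. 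Consequently the branch that introduces ancillary variables (lines~11--15, entered only when $\kappa'_i\ge 3$) is never executed, so each $P'_i(\bx)$ is a quadratic VIP over the original variables $\bx$ alone.

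I would then assemble the model: the penalty function is the positive linear combination $F_\lambda(\bx)=f(\bx)+\sum_i \lambda_i P'_i(\bx)+\sum_j \mu_j Q_j(\bx)$, where $f(\bx)=\ba_0^\top\bx$ is the linear objective, the $P'_i$ are the quadratic inequality VIPs above, the $Q_j$ are the quadratic equality VIPs from TR1, and $\lambda_i,\mu_j>0$ are penalty coefficients. Since every summand is at most quadratic in $\bx$ and none involves a variable outside $\bx$, $F_\lambda$ is a quadratic function over $\bbB^n$, i.e.\ a compact QUBO in the sense of Section~\ref{ssec:TR0-TR4}. To finish I would invoke the standard penalty argument: each $P'_i$ and $Q_j$ vanishes exactly on the feasible set and is strictly positive off it, and $f$ is bounded on the finite domain $\bbB^n$; therefore the coefficients can be chosen large enough that every infeasible point has strictly larger $F_\lambda$-value than the best feasible point, so the minimizer of $F_\lambda$ is feasible and optimal for the BLP.

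The main obstacle, I expect, is not a single hard step but the bookkeeping needed to confirm that the quantity $KN$ genuinely upper-bounds the post-TR3 degree across all branches of lines~5--9. In particular I would check that the odd-$k$ two-sided case of Lemma~\ref{lem: 3}, which raises the degree to $k+1$, is correctly captured by setting $\kappa_i=k_i+1$ in \eqref{eq: existence-of-cqubo-degree-k'}, and that neither TR1 on the equalities nor TR3 on the inequalities ever reintroduces ancillary variables. Once these routine verifications are in place, the implication $KN\le 2\Rightarrow\kappa'_i\le 2$ delivered by Lemma~\ref{lem: 6} carries the whole argument.
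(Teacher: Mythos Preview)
Your proposal is correct and follows essentially the same approach as the paper: use Lemma~\ref{lem: 6} to show that the hypothesis $KN\le 2$ forces every post-TR3 inequality VIP to be at most quadratic in $\bx$, handle the equalities by TR1, and combine everything into a compact QUBO with adjusted penalty constants. Your write-up is considerably more detailed than the paper's (which dispatches the theorem in three sentences), but the underlying argument is identical.
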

\begin{proof}
\eqref{eq: KN <=2} implies that every inequality constraint in \eqref{eq: LBP ineq constr}  allows quadratization using TR3 in accordance to Lemma~\ref{lem: 6}.  Therefore, TR1 can be applied directly for quadratization of equality constraints \eqref{eq: LBP eq constr} and the formation of QVIPs. Combining all the VIPs together in a linear combination with positive coefficients, which is then incorporated into the objective, we get a QUBO equivalent to the BLP \eqref{eq: LBP objective}-\eqref{eq: LBP ineq constr} if adjusted penalty constants are chosen.
\end{proof}

\begin{corollary}\label{cor: 6.0}
If each inequality constraint in \eqref{eq: LBP ineq constr} has levelness at most two, then the BLP \eqref{eq: LBP objective}-\eqref{eq: LBP ineq constr} allows a compact QUBO formulation.
\end{corollary}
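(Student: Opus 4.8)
The plan is to derive this directly from Theorem~\ref{th: 1} by showing that the hypothesis---every inequality constraint in \eqref{eq: LBP ineq constr} has levelness at most two---forces the quantity $KN$ of \eqref{eq: KN <=2} to satisfy $KN\le 2$. Since Theorem~\ref{th: 1} already guarantees a compact QUBO formulation whenever $KN\le 2$, the corollary follows immediately once this bound is established, so the entire task reduces to a degree estimate on the per-constraint VIPs.

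First I would fix an arbitrary inequality constraint indexed by $i\in\overline{1,m'}$ with levelness $k_i\le 2$ and bound its associated degree $\kappa_i$ using the case definition \eqref{eq: existence-of-cqubo-degree-k'}. The key observation is that $\kappa_i$ exceeds $k_i$ only when the VIP is produced by the degree-raising formula \eqref{eq: 2-side,ikj}, i.e. the odd two-sided case handled by Lemma~\ref{lem: 3}, in which case $\kappa_i=k_i+1$; in every other case $\kappa_i=k_i$.

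Next I would split on the parity of $k_i$. If $k_i=2$ (even), then the VIP is obtained from Lemma~\ref{lem: 1} rather than from \eqref{eq: 2-side,ikj}, so $\kappa_i=k_i=2$. If $k_i=1$ (odd), the worst case is that \eqref{eq: 2-side,ikj} applies and $\kappa_i=k_i+1=2$; otherwise $\kappa_i=k_i=1$. In both parities we obtain $\kappa_i\le 2$, hence $\min\{\kappa_i,|I_i|\}\le\kappa_i\le 2$. Taking the maximum over $i\in\overline{1,m'}$ then gives $KN\le 2$, and invoking Theorem~\ref{th: 1} concludes the argument.

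The only point requiring care---and what I would flag as the single genuine obstacle---is the odd-level case $k_i=1$, where the parity correction of Lemma~\ref{lem: 3} raises the degree by one. A naive reading might expect levelness one to yield a degree-one VIP, but formula \eqref{eq: 2-side,ikj} bumps it to degree two; fortunately $1+1=2$ still meets the bound, so this borderline case is exactly admissible and the corollary holds with no slack to spare.
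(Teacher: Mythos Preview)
Your proposal is correct and follows essentially the same approach as the paper: bound $\kappa_i\le 2$ for each constraint by the parity case split and then invoke Theorem~\ref{th: 1}. One small imprecision worth noting is that for $k_i=2$ the VIP need not come from Lemma~\ref{lem: 1} specifically (a one-sided 2-level constraint would use Lemma~\ref{lem: 2} or Corollary~\ref{Cor: 2}), but your essential claim---that the degree-raising formula \eqref{eq: 2-side,ikj} is not invoked, so $\kappa_i=k_i=2$---remains valid in all subcases.
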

\begin{proof}
    If $k_i=1$, then $1\le\kappa_i\le 2$; if $k_i=2$, then $\kappa_i= 2$. Respectively, $KN\le \max\: \{2,...,2\}=2$. That is, Theorem~\ref{th: 1} is satisfied.
\end{proof}
\begin{corollary}\label{cor: 6}
If each inequality constraint in \eqref{eq: LBP ineq constr} involves at most two  variables, then the BLP \eqref{eq: LBP objective}-\eqref{eq: LBP ineq constr} allows a compact QUBO formulation.
\end{corollary}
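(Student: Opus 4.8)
The plan is to reduce this statement directly to Theorem~\ref{th: 1} by bounding the quantity $KN$ defined in \eqref{eq: KN <=2}. The hypothesis that each inequality constraint in \eqref{eq: LBP ineq constr} involves at most two variables is, by the definition of $I_i$ as the index set of nonzero coefficients of $\ba'_i$, exactly the statement that $|I_i|\le 2$ for every $i\in\overline{1,m'}$. Since this is the only place where the hypothesis enters, the first step is simply to make this identification explicit and confirm that ``number of variables involved'' coincides with the cardinality $|I_i|$.

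With $|I_i|\le 2$ in hand, the second step is to observe that $\min\{\kappa_i,|I_i|\}\le |I_i|\le 2$ for each $i$, irrespective of the value of the levelness-derived degree $\kappa_i$ from \eqref{eq: existence-of-cqubo-degree-k'}. Taking the maximum over $i\in\overline{1,m'}$ then gives $KN=\max_i\min\{\kappa_i,|I_i|\}\le 2$, which is precisely the hypothesis \eqref{eq: KN <=2} of Theorem~\ref{th: 1}.

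The final step is to invoke Theorem~\ref{th: 1} with the condition $KN\le 2$ just verified, which immediately yields that the BLP \eqref{eq: LBP objective}-\eqref{eq: LBP ineq constr} admits a compact QUBO formulation. Conceptually, the underlying reason is Lemma~\ref{lem: 6}: because $\kappa'\le\min\{\kappa,|I|\}$, a constraint on few variables is forced to produce a low-degree multilinear VIP after TR3, so two or fewer variables guarantee a quadratic penalty with no ancillary variables.

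I do not foresee a genuine obstacle, as the argument is a one-line consequence of Theorem~\ref{th: 1}. The only point requiring mild care is ensuring that ``involves at most two variables'' is interpreted as the cardinality bound $|I_i|\le 2$ rather than as a bound on the number of monomials arising after some expansion; since $I_i$ collects exactly the indices with nonzero coefficients, the two readings coincide and no degenerate cases need separate treatment.
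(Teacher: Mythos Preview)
Your proposal is correct and follows essentially the same approach as the paper: identify the hypothesis with $|I_i|\le 2$, bound $KN\le\max_i |I_i|\le 2$, and invoke Theorem~\ref{th: 1}. The paper's proof is nearly identical, just slightly more terse.
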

\begin{proof}
    For each $i$, $|I_i|\le 2$ hence $KN\le\underset{i\in \overline{1,m'}}\max \: |I_i|\le 2$. That is, Theorem \ref{th: 1} is satisfied.
\end{proof}


\subsubsection{MLCTS Illustration}\label{ssec: MLCTS il1}
We demonstrate the MLCTS on a synthetic optimization problem represented as the following BLP model, where $x_1,x_2,x_3\in \{0,1\}$ are binary variables.
\begin{subequations}\label{eq: ex1}
\begin{align}
\text{(BLP1)}\quad \minimize_{x_1,x_2,x_3} &\quad x_1+x_2+2x_3\\
        \subjto &\quad 0\le x_1+2x_2-x_3\le 2 \label{eq:ex1-1}\\
                &\quad  1\le 2x_1+2x_2-x_3\le 2\label{eq:ex1-2}\\
                &\quad  3x_1-2x_3\ge 1 \label{eq:ex1-3} 
\end{align}
\end{subequations}

\paragraph{Approach 1: Conventional Approach.} In the conventional approach, we convert all inequality constraints to equality constraints by introducing ancillary binary variables as below. 
\begin{align*}
    0\le x_1+2x_2-x_3\le 2 \quad \Longrightarrow &\quad x_1+2x_2-x_3\in\{0,1,2\}\\
    \underset{TR0}\Longrightarrow & \quad   \left\{\begin{aligned}
                                                    &x_1+2x_2-x_3+s_1+2s_2=2,\; s_1,s_2\in\{0,1\},\\&
                                                    s_1+s_2\le 1, \\
                                             \end{aligned} \right.\\    
    1\le 2x_1+2x_2-x_3\le 2 \quad \underset{TR0}\Longrightarrow &\quad \left\{\begin{aligned}
                                                        &2x_1+2x_2-x_3 + s_3=2,\; s_3\in\{0,1\},\\
                                                        & 3x_1-2x_3\ge 1 
                                        \end{aligned}\right.\\
    \Longrightarrow & \quad 0\le 3x_1-2x_3+2\le 3 \\
    \underset{TR0}\Longrightarrow & \quad 3x_1-2x_3+2+s_4+2s_5= 3, s_4,s_5\in\{0,1\}
\end{align*}

We then impose all equality constraints into the objective function as penalties by performing TR1 and obtaining an augmented QUBO model below. 
\begin{align*}
\text{(CTS-QUBO1)}\quad \minimize_{\bx,\: \bs} &\quad x_1+x_2 +2x_3 \\
                                                             &\quad +\lambda_1 (x_1+2x_2-x_3+s_1+2s_2-2)^2\\ 
                                                             &\quad +\lambda'_1 s_1s_2\\
                                                             &\quad + \lambda_2 (2x_1+2x_2-x_3 + s_3-2)^2 \\
                                                             &\quad +\lambda_3 (3x_1-2x_3+s_4+2s_5- 1)^2,
\end{align*}

where $\lambda_1,\lambda'_1,\lambda_2,\lambda_3>0$ are justified penalty constants, and $\bx\in \bbB^3,\bs\in \bbB^5$.

\paragraph{Approach 2: MLCTS.}  We first derive a set of VIPs for each constraint in the BLP model. 
For constraint (\ref{eq:ex1-1}), we know that $x_1+2x_2-x_3 \in H=\{-1,0,1,2,3\},\;K=5,\; \underline{b}=0,\overline{b}=2,\; i=1, k =3$. Thus, this is a \textit{3-level} two-sided constraint with $H^{1,3} = \{0,1,2\}$. According to Lemma~\ref{lem: 3}, we have the following VIP.
    \begin{align*}
        &\quad P_1(\bx) = (x_1+2x_2-x_3)(x_1+2x_2-x_3-1)^2(x_1+2x_2-x_3-2)\\
        \underset{TR3(12)}\Longrightarrow &\quad P_1'(\bx)=x_1x_2-x_1x_3-x_2x_3+x_3
    \end{align*}
\noindent For constraint (\ref{eq:ex1-2}), we know that $2x_1+2x_2-x_3\in H=\{-1,0,1,2,3,4\},\;K=6, \underline{b} = 1,\; \overline{b}=2,\; i=k=2$. Thus, this is a \textit{2-level} constraint with $H^{2,2} =\{1,2\}$. According to Corollary \ref{Cor: 1}, we have the following VIP. 
    \begin{align*}
        &\quad P_2(\bx) = (2x_1+2x_2-x_3-1)(2x_1+2x_2-x_3-2) \\
        \underset{TR3(2)}\Longrightarrow &\quad P_2'(\bx)=  4x_1x_2-2x_1x_3-2x_2x_3-x_1-x_2+2x_3+1
    \end{align*}
\noindent For constraint (\ref{eq:ex1-3}), we know that $3x_1-2x_3\in H=\{-2,0,1,3\},\;K=4,\;\underline{b}=-2,\;\overline{b}=1,\: i=0,\:k=3$. Thus, this is a \textit{3-level} one-sided constraint of type \eqref{eq: 1-side} with $H^{0,3}=\{-2,0,1\}$. According to Lemma~\ref{lem: 2}, we have the following VIP.
    \begin{align*}
        &\quad P_3(\bx) = (3x_1-2x_3+2)(3x_1-2x_3)(3x_1-2x_3-1) \\
        \underset{TR3(30)}\Longrightarrow &\quad P_3'(\bx) = -x_1x_2+x_1
    \end{align*}



\noindent All VIPs are quadratic. We combine them with the objective and obtain a QUBO model below.
\begin{align*}
\text{(MLCTS-QUBO1)}\quad \minimize_{\bx} &\quad x_1+x_2 +2x_3 \\
                                             &\quad +\lambda_1 (x_1x_2-x_1x_3-x_2x_3+x_3) \\
                                             &\quad +\lambda_2 (4x_1x_2-2x_1x_3-2x_2x_3-x_1-x_2+2x_3+1)\\
                                             &\quad +\lambda_3 (-x_1x_2+x_1),
\end{align*}
where $\lambda_1,\lambda_2,\lambda_3>0$ are justified penalty constants. 

Let us check if conditions of Theorem~\ref{th: 1} hold for BLP1: $(k_1,|I_1|)=(3,3)$, $(k_2,|I_2|)=(2,3)$, $(k_3,|I_3|)=(3,2)$, $KN=\max \: \{3,2,2\}=3$. The sufficient condition \eqref{eq: KN <=2} on the existence of QUBO reformulation does not hold. At the same time, such a reformulation was found by applying MLCTS. Thus, condition \eqref{eq: KN <=2} is sufficient but not necessary.
The dimensions of MLCTS-QUBO1 and CTS-QUBO1 are $3$ and $8$, respectively.


\section{MLCTS: Further development}\label{sec: MLCTS developing}
In this section, we further discuss the use of MLCTS, particularly how to derive VIPs for 2-level and high-level constraints. 
More illustration examples can be found in Appendix \ref{sec:MLCTS-illustration}.

\subsection{Exploring 2-level constraints} \label{ssec: families}
Two-level constraints play a special role in QUBO modelling since, as can be seen from the formula \eqref{eq: QVIP}, they induce QVIPs. 
Additionally, as stated in Corollary \ref{cor: 6}, BLPs with the constraint levelness of at most two allow a compact QUBO reformulation.
One must note that all constraints presented in Table~\ref{tab: penalties} are 2-level, explaining the QVIPs listed in the table. 

We refer to linear constraints with coefficient values of $0$, $1$, and $-1$  as \emph{regular constraints}.  Under such a restriction, \eqref{eq: I,I+,I-} becomes 
$I^+=\{i\in \overline{1,n}: a_i=1\}, I^-=\{i\in \overline{1,n}: a_i=-1\}, I=\{i\in \overline{1,n}: a_i\ne 0\}$.
While, function $h(\bx)$ becomes 
$h(\bx)=\sum_{i\in I^+} x_i- \sum_{i\in I^-} x_i$
and it takes all integer values in the range $[z^{\min}, z^{\max}]$, where $z^{\min}=\underset{\bx\in \bbB^n}\min\: h(\bx)=-|I^-|,\; z^{\max}=\underset{\bx\in \bbB^n}\max \:h(\bx)=|I^+|.$
Lastly, we also have $H=\{\:h_1,\: ...,\: h_{|I|+1}\:\}\:=\:\{-|I^-|,\:-|I^-|+1,\:...,\:|I^+|-1,\:|I^+|\}$.
%
Consider a regular constraint \eqref{eq: 2-side} $\underline{b}\le \sum_{i\in I^+} x_i- \sum_{i\in I^-} x_i \le \overline{b}$,
its levelness is $k=\overline{b}-\underline{b}+1$. We may derive a QVIP as follows.



\vspace*{5pt}
\begin{theorem} \label{th: 01}
If $j\in [-|I^-|,|I^+|-1]$, then
\begin{equation}\label{eq: QVIP reg}
\begin{aligned}
P^j(\bx)=&\sum_{i<i'; i,i'\in I^+} x_ix_{i'}+\sum_{i<i'; i,i'\in I^-} x_ix_{i'}-\sum_{i\in I^+,i'\in I^- } x_ix_{i'}\\
& -j\sum_{i\in I^+} x_i+(j+1)\sum_{i\in I^-} x_i+\frac{j(j+1)}{2}
\end{aligned}
\end{equation}
is a QVIP associated with a 2-level regular constraint
\begin{eqnarray}\label{eq: 2-sidereg1}
j\le \sum_{i\in I^+} x_i- \sum_{i\in I^-}x_i \le j+1.
\end{eqnarray}
\end{theorem}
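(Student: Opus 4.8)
The plan is to recognize that \eqref{eq: 2-sidereg1} is exactly a 2-level regular constraint with $h(\bx)=\sum_{i\in I^+}x_i-\sum_{i\in I^-}x_i$, lower bound $\underline{b}=j$, and upper bound $\overline{b}=j+1$, so that its levelness is $k=\overline{b}-\underline{b}+1=2$. The admissibility hypothesis $j\in[-|I^-|,|I^+|-1]$ simply guarantees that $\{j,j+1\}\subseteq H$, i.e. that $j$ and $j+1$ are consecutive attainable values of $h(\bx)$, so the constraint is genuinely 2-level. By Corollary~\ref{Cor: 1}, the quadratic polynomial $(h(\bx)-j)(h(\bx)-(j+1))$ is already a QVIP for this constraint. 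It then remains only to show that applying TR3 to this expression yields precisely $P^j(\bx)$, and that TR3 preserves the VIP property.

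First I would expand $(h(\bx)-j)(h(\bx)-(j+1))=h(\bx)^2-(2j+1)h(\bx)+j(j+1)$ and substitute $h(\bx)=S^+-S^-$, writing $S^+=\sum_{i\in I^+}x_i$ and $S^-=\sum_{i\in I^-}x_i$. The square contributes $(S^+)^2-2S^+S^-+(S^-)^2$, to which I apply the multilinearization rule $x_i^2=x_i$ (valid on $\bbB^n$, the defining step of TR3): this sends $(S^+)^2\mapsto S^++2\sum_{i<i';\,i,i'\in I^+}x_ix_{i'}$, and likewise for $(S^-)^2$, while $2S^+S^-$ becomes $2\sum_{i\in I^+,i'\in I^-}x_ix_{i'}$. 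Collecting quadratic, linear, and constant contributions then shows that the multilinear reduction equals twice the claimed expression, i.e. it equals $2P^j(\bx)$.

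The central step is then to invoke the factoring part of TR3, pulling out the common factor $r=2$ to leave $P^j(\bx)$. Since multilinearization does not alter the value of the polynomial at any $\bx\in\bbB^n$, and since scaling a nonnegative function by the positive constant $1/2$ preserves both $P^j(\bx)=0\Leftrightarrow h(\bx)\in\{j,j+1\}$ and $P^j(\bx)>0$ otherwise, $P^j(\bx)$ inherits the VIP property from Corollary~\ref{Cor: 1}. As $P^j(\bx)$ is by construction quadratic and multilinear in $\bx$, it is a QVIP, which completes the argument.

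The only real obstacle is the bookkeeping in the expansion: I must confirm that the coefficients $-j$ on $\sum_{i\in I^+}x_i$ and $(j+1)$ on $\sum_{i\in I^-}x_i$ and the constant $\tfrac{j(j+1)}{2}$ emerge correctly after the common factor of $2$ is removed, and in particular that the factor pulled out is exactly $2$ and not some other value. This is a routine finite computation rather than a conceptual difficulty, so I expect no genuine mathematical hurdle beyond careful coefficient tracking.
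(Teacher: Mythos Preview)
Your proposal is correct and follows essentially the same approach as the paper: invoke Corollary~\ref{Cor: 1} to obtain the QVIP $(h(\bx)-j)(h(\bx)-j-1)$, expand, multilinearize via $x_i^2=x_i$, and factor out the common factor $2$ (the paper writes this as TR3(2)) to arrive at $P^j(\bx)$. The only cosmetic difference is that the paper expands via $(h-j)^2-(h-j)$ rather than $h^2-(2j+1)h+j(j+1)$, which is algebraically identical.
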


\begin{proof}
By Cor.~\ref{Cor: 1}, we know that 
\begin{eqnarray*}
P(\bx)=(\sum_{i\in I^+} x_i- \sum_{i\in I^-}x_i-j)(\sum_{i\in I^+} x_i- \sum_{i\in I^-}x_i-j-1)
\end{eqnarray*}
is a QVIP for \eqref{eq: 2-sidereg1}. We simplify this expression and transform it into a  multilinear polynomial as below.

\begin{align*}
P(\bx)=&\; (\sum_{i\in I^+} x_i- \sum_{i\in I^-}-j)^2-(\sum_{i\in I^+} x_i- \sum_{i\in I^-}x_i-j)\\
=&\; (\sum_{i\in I^+} x_i)^2+ (\sum_{i\in I^-}x_i)^2+j^2-2 \sum_{i\in I^+,i'\in I^-} x_ix_{i'}- 2 j\sum_{i\in I^+} x_i \\
&\; + 2 j\sum_{i\in I^-} x_i - \sum_{i\in I^+} x_i+ \sum_{i\in I^-}x_i+j\\
\underset{TR3(2)}\Longrightarrow &\;\frac{1}{2}\cdot \Big(\:(2\sum_{i<i'; i,i'\in I^+}x_ix_{i'}+\sum_{i\in I^+} x_i) +(2\sum_{i<i'; i,i'\in I^-}x_ix_{i'}+\sum_{i\in I^-} x_i)+j^2-\\
&\;-2 \sum_{i\in I^+,i'\in I^-} x_ix_{i'}- 2 j\sum_{i\in I^+} x_i + 2 j\sum_{i\in I^-} x_i-\sum_{i\in I^+} x_i+ \sum_{i\in I^-}x_i+j\:\Big)\\
=&\;\sum_{i<i'; i,i'\in I^+} x_ix_{i'}+\sum_{i<i'; i,i'\in I^-} x_ix_{i'}-\sum_{i\in I^+,i\in I^- } x_ix_{i'}\\
&\; - j\sum_{i\in I^+} x_i+(j+1)\sum_{i\in I^-} x_i+\frac{j(j+1)}{2} = P^j(\bx).
\end{align*}


\end{proof}
 
Furthermore, we derive QVIPs for the case $I=I^+$, that is when there are no negative terms are present in a regular constraint.

\vspace*{5pt}
\begin{corollary}\label{Cor: 4}
For every $j\in \overline{1, |I|-1}$, 
\begin{eqnarray}\label{eq: QVIP reg01}
P^{j}(\bx)=\sum_{i<i'; i,i'\in I} x_ix_{i'}-j\sum_{i\in I} x_i+\frac{j(j+1)}{2}
\end{eqnarray}
is a QVIP associated with a 2-level constraint with $0,1$ coefficients 
\begin{eqnarray}\label{eq: 2-sidereg01}
j\le \sum_{i\in I} x_i \le j+1.
\end{eqnarray}
\end{corollary}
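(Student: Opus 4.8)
The plan is to recognize Corollary~\ref{Cor: 4} as the specialization of Theorem~\ref{th: 01} to the case $I^-=\emptyset$ (equivalently $I=I^+$), so that the entire argument reduces to substituting into an already-proved statement rather than re-deriving anything. When no negative coefficients are present, the regular constraint \eqref{eq: 2-sidereg1} collapses to \eqref{eq: 2-sidereg01}, namely $j\le \sum_{i\in I} x_i \le j+1$, which is exactly the 2-level constraint of the corollary; the QVIP property (vanishing on feasible points, strictly positive on infeasible ones) is then inherited verbatim.

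First I would set $I^-=\emptyset$ and $I^+=I$ in the penalty \eqref{eq: QVIP reg}. The second sum $\sum_{i<i';\,i,i'\in I^-} x_ix_{i'}$, the cross term $-\sum_{i\in I^+,\,i'\in I^-} x_ix_{i'}$, and the linear term $(j+1)\sum_{i\in I^-} x_i$ are each indexed over the empty set $I^-$ and hence vanish. The surviving terms are precisely $\sum_{i<i';\,i,i'\in I} x_ix_{i'}-j\sum_{i\in I} x_i+\frac{j(j+1)}{2}$, which is the expression $P^j(\bx)$ claimed in \eqref{eq: QVIP reg01}. Thus the corollary's formula is literally what Theorem~\ref{th: 01} yields once all $I^-$-indexed contributions are dropped.

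The only point needing care is the admissible range of $j$. Theorem~\ref{th: 01} holds for $j\in[-|I^-|,|I^+|-1]$, which under $I^-=\emptyset$ becomes $j\in[0,|I|-1]$; the corollary asserts the claim for $j\in\overline{1,|I|-1}$, a subset of this interval, so validity is inherited immediately. I therefore expect no genuine obstacle: the proof amounts to (i) setting $I^-=\emptyset$ in both the constraint and the penalty, (ii) observing that the three $I^-$-dependent groups of terms disappear, and (iii) checking $\overline{1,|I|-1}\subseteq[0,|I|-1]$ so that Theorem~\ref{th: 01} applies. As an alternative that avoids citing Theorem~\ref{th: 01}, one could instead start from Corollary~\ref{Cor: 1}, expand $(\sum_{i\in I}x_i-j)(\sum_{i\in I}x_i-j-1)$, and apply TR3$(2)$ exactly as in the proof of Theorem~\ref{th: 01}; but invoking the theorem directly is cleaner.
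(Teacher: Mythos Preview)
Your proposal is correct and matches the paper's intended approach: the paper explicitly introduces Corollary~\ref{Cor: 4} as the specialization of Theorem~\ref{th: 01} to the case $I=I^+$ (i.e., $I^-=\emptyset$), and your argument carries out precisely that substitution, including the range check $\overline{1,|I|-1}\subseteq[0,|I|-1]$.
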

This corollary can be used to derive QVIPs for some commonly used constraints, such as a (binary) conflict constraint \cite{pferschy_approximation_2017}
\begin{eqnarray}\label{eq: conflict}
\sum_{i\in I} x_i \le 1
\end{eqnarray}
or a (binary) forcing constraint \cite{pferschy_approximation_2017}
\begin{eqnarray}\label{eq: forcing}
\sum_{i\in I} x_i \ge |I|-1.
\end{eqnarray}

\begin{corollary}\label{Cor: 5}
$\forall I \subseteq \overline{1,n}$,  
\begin{eqnarray}\label{eq: QVIP conflict}
P^c(\bx)=\sum_{i<i'; i,i'\in I} x_ix_{i'}
\end{eqnarray}
is a QVIP corresponding to a conflict constraint \eqref{eq: conflict}, and 
\begin{eqnarray}\label{eq: QVIP forcing}
P^f(\bx)=\sum_{i<i'; i,i'\in I} x_ix_{i'}+|I|\sum_{i\in I} x_i+ \frac{|I|(|I|-1)}{2}
\end{eqnarray}
is a QVIP corresponding to a forcing constraint \eqref{eq: forcing}.
\end{corollary}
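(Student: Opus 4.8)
The plan is to recognize both constraints as 2-level one-sided \emph{regular} constraints (the case $I=I^+$, with no negative coefficients) and to read off their QVIPs from the machinery already developed, namely Corollary~\ref{Cor: 4} and its parent Corollary~\ref{Cor: 1}. Writing $h(\bx)=\sum_{i\in I}x_i$, this function ranges over $H=\{0,1,\dots,|I|\}$. The conflict constraint \eqref{eq: conflict} is exactly $0\le h(\bx)\le 1$ (the lower bound $h(\bx)\ge 0$ being automatic), a 2-level constraint at the bottom of the range; the forcing constraint \eqref{eq: forcing} is exactly $|I|-1\le h(\bx)\le |I|$ (the upper bound automatic), a 2-level constraint at the top. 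So each is a boundary instance of the family $j\le h(\bx)\le j+1$ treated in Corollary~\ref{Cor: 4}, with $j=0$ and $j=|I|-1$ respectively.

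First I would dispatch the forcing case, since it lies inside the stated range $j\in\overline{1,|I|-1}$ of Corollary~\ref{Cor: 4}: substituting $j=|I|-1$ into \eqref{eq: QVIP reg01} immediately yields a QVIP with quadratic part $\sum_{i<i';\,i,i'\in I}x_ix_{i'}$, a linear part, and constant $\tfrac{|I|(|I|-1)}{2}$, which is $P^f(\bx)$. I note that the linear coefficient produced this way is $-(|I|-1)$, so I would double-check it against the sign written in \eqref{eq: QVIP forcing}. As an independent sanity check I would use the complement substitution $y_i=1-x_i$, under which the forcing constraint becomes the conflict constraint $\sum_{i\in I}y_i\le 1$ on $\by$; substituting $y_i=1-x_i$ back into $P^c(\by)=\sum_{i<i';\,i,i'\in I}y_iy_{i'}$ and expanding reproduces the same polynomial, confirming $P^f$ (and that the affine bijection preserves the VIP property).

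The conflict case needs a separate line because $j=0$ falls just outside the range of Corollary~\ref{Cor: 4}. Here I would instead invoke Corollary~\ref{Cor: 1} directly with $\underline{b}=0$, $\overline{b}=1$, giving the QVIP $P(\bx)=h(\bx)\bigl(h(\bx)-1\bigr)$. Expanding and using $x_i^2=x_i$ collapses $\bigl(\sum_{i\in I}x_i\bigr)^2$ to $\sum_{i\in I}x_i+2\sum_{i<i'}x_ix_{i'}$, so the linear terms cancel and $P(\bx)=2\sum_{i<i'}x_ix_{i'}$; applying TR3$(2)$ to factor out the common factor $2$ yields $P^c(\bx)$. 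The only point requiring care throughout is verifying that TR3 applies legitimately, i.e.\ that after reducing via $x_i^2=x_i$ every surviving coefficient is divisible by the factored constant (here $2$), and that multiplying a VIP by a positive constant preserves the VIP property. These steps are routine but are the substance of the argument, since the validity of $P^c$ and $P^f$ as QVIPs is inherited from Corollaries~\ref{Cor: 1} and \ref{Cor: 4} rather than re-proved from scratch. The main obstacle, such as it is, is thus purely bookkeeping: confirming the factor-of-$2$ divisibility and reconciling the stated linear coefficient in the forcing penalty.
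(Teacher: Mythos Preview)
Your approach is the same as the paper's: both parts are obtained by specializing the formula \eqref{eq: QVIP reg01} of Corollary~\ref{Cor: 4}, with $j=0$ for the conflict constraint and $j=|I|-1$ for the forcing constraint. The paper's proof literally says only that; your version is in fact more careful, since you notice that $j=0$ lies outside the stated range $j\in\overline{1,|I|-1}$ of Corollary~\ref{Cor: 4} and therefore fall back to Corollary~\ref{Cor: 1} (the computation is the same, of course).

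You are also right to flag the linear coefficient in $P^f$. Substituting $j=|I|-1$ into \eqref{eq: QVIP reg01} (or, equivalently, expanding $(h(\bx)-|I|+1)(h(\bx)-|I|)$ and applying TR3$(2)$, or using your complement trick $y_i=1-x_i$ on $P^c(\by)$) gives
\[
P^f(\bx)=\sum_{i<i';\,i,i'\in I}x_ix_{i'}\;-\;(|I|-1)\sum_{i\in I}x_i\;+\;\frac{|I|(|I|-1)}{2},
\]
so the coefficient on the linear term should be $-(|I|-1)$, not $+|I|$ as printed in \eqref{eq: QVIP forcing}. Your instinct to double-check was correct; this is a typo in the stated formula, and your derivation (as well as the paper's own proof via \eqref{eq: QVIP reg01}) produces the corrected expression.
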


\begin{proof}
$P^c(\bx)$ is derived from \eqref{eq: QVIP reg01} by substitution $j=1$, while $P^f(\bx)$ is obtained from this formula using substitution $j=|I|-1$.
\end{proof}

Furthermore, we provide examples to demonstrate how MLCTS can be used to derive other QVIPs listed in Table~\ref{tab: penalties} and many others.

\begin{itemize}
    \item The known VIPs \eqref{P1} and \eqref{P4} can be derived from \eqref{eq: QVIP conflict}  by choosing $I=\{i,j\}$ and $I=\{i,j,k\}$, respectively.
    \item For deriving the QVIP \eqref{P3} for constraint \eqref{C3}, we utilize  \eqref{eq: forcing} with $I=\{i,j\}$.
    \item The penalty \eqref{P2} can be found by \eqref{eq: QVIP reg01} by selecting $j=-1$,  $I^+=\{i\}$, $I^-=\{j\}$.
\end{itemize}




\subsection{Exploring High-Level Constraints}\label{ssec: TR6}
As noted above, Transformation 6 (TR6) is applied to reduce high-level constraints to 2-level constraints. After TR0 (\ie converting inequality to equality), TR6 decreases the number of ancillary variables and constructs QVIPs or augmented QVIPs. We offer two variants: Transformation 6.1 (\textbf{TR6.1}) following TR0.1 
and  Transformation 6.2 (\textbf{TR6.2}) following  TR0.2.

We consider a $k$-level constraint as \eqref{eq: 2-side} with $k\ge 3$.
The first step is the normalization of the function $h(\bx)$ in order to make its two smallest values equal  $0$ and $1$, \ie 

\begin{eqnarray}\label{eq: h0(x)}
h\left( \bx \right)\Longrightarrow {{h}^{0}}\left( \bx \right)=\frac{h\left( \bx \right)-{{h}_{i}}}{{{h}_{i+1}}-{{h}_{i}}}
\end{eqnarray}


Consequently, \eqref{eq: k-level combinatorial} can be rewritten as
\begin{subequations}
\begin{align}
    &{{h}^{0}}\left( \bx \right)\in {{H}^{0,j,k}}={{\left\{ h_{j}^{0} \right\}}_{j\in \overline{1,k}}} \label{eq:h0-transform}\\
    &h_{1}^{0}=0,h_{2}^{0}=1\label{eq: h0 norm}\\
    &h_{j}^{0}=\frac{h_j-{{h}_{i}}}{{{h}_{i+1}}-{{h}_{i}}}, j\in \overline{3,k}\label{H0ik}
\end{align}
\end{subequations}
where $h^0_1$ or $h^0_2$ are the first and second value of  $h^0(\bx)$  indicated in \eqref{eq:h0-transform}.

\subsubsection{Transformation 6 Variant 1 (TR6.1)}
After performing TR0.1,
we obtain the following equality constraint,
\begin{equation}\label{eq: TR0.1.0'}
  {h}^{0}\left( \bx \right)- \sum_{i=1}^kh^0_is_i=0
 \end{equation}
 Note, $s_i,\forall i$ are binary ancillary variables which are constrained by an additional one-hot constraint (\ie $ \sum_{i=1}^ks_i=1$).
%
The two ancillary variables which are denoted as $s_1,\: s_2$
can be eliminated.
From \eqref{eq: h0 norm}, we can rewrite equation \eqref{eq: TR0.1.0'} to eliminate $s_1$.
 \begin{eqnarray}\label{eq: TR0.1.0''}
&  {h}^{0}\left( \bx \right)- s_2- \sum_{i=3}^kh^0_is_i=0,
 \end{eqnarray}
 We then sum the one-hot constraint 
 and \eqref{eq: TR0.1.0''} to eliminate $s_2$.
  \begin{eqnarray}\label{eq: TR0.1.1''}
&  {h}^{0}\left( \bx \right)+s_1+ \sum_{i=3}^k(1-h^0_i)s_i=1,
 \end{eqnarray}
 We may derive 2-level inequalities from \eqref{eq: TR0.1.0''} and \eqref{eq: TR0.1.1''} as below.
\begin{align}\label{eq: phi1,2}
    &\; \left\{\begin{aligned}
      & {{\varphi }_{1}}\left( \bx,\bs \right)={{h}^{0}}\left( \bx \right)-\sum\limits_{i=3}^{k}{h_{i}^{0}{{s}_{i}}}={{s}_{2}}\\ 
     & {{\varphi }_{2}}\left( \bx,\bs \right)={{h}^{0}}\left( \bx \right)+\sum\limits_{i=3}^{k}{\left( 1-h_{i}^{0} \right){{s}_{i}}}=1-{{s}_{1}}
    \end{aligned}\right. \nonumber \\
    \Longrightarrow&\;\left\{\begin{aligned}
      & {{\varphi }_{1}}\left( \bx,\bs \right)\in \left\{ 0,1 \right\}\\ 
     & {{\varphi }_{2}}\left( \bx,\bs \right)\in \left\{ 0,1 \right\} 
    \end{aligned}\right. \nonumber \\
    \Longrightarrow&\;\left\{\begin{aligned}
      &  
      0\le {{\varphi }_{1}}\left( \bx,\bs \right) \le 1, 0\le {{\varphi }_{2}}\left( \bx,\bs \right) \le 1.
    \end{aligned}\right.
\end{align}


Given \eqref{eq: phi1,2} 
we perform TR5 
to derive augmented QVIPs (see \eqref{eq: AQVIP}).
\begin{eqnarray} \label{eq: P1,2}
{{\varphi }_{j}}\left( \bx,\bs \right)\Longrightarrow {{P}^{j}}\left( \bx,\bs \right)={{\varphi }_{j}}\left( \bx,\bs \right)\left( {{\varphi }_{j}}\left( \bx,\bs \right)-1 \right),\:j=1,2.
\end{eqnarray}

The final augmented QVIP for the high-level constraint \eqref{eq: 2-side} is a linear combination of these two with  coefficients $\lambda_1,\lambda_2$
 \begin{equation}
{{P}_{\lambda }}\left( \bx,\bs \right)=\sum\limits_{j=1}^{2}{{{\lambda }_{j}}{{P}^{j}}\left( \bx,\bs \right)}, \; \lambda_1,\lambda_2\in \bbR_{>0}.			\label{eq: (1.1)}
\end{equation}


We further illustrate this procedure in the examples with $k=3$ in Appendix \ref{sec:appendix-higher-level}.

\if 0 
\paragraph{Illustration with $k=3$}

Given \eqref{eq: phi1,2}, \eqref{eq: P1,2}, we have
\begin{align*}
    &\left\{\begin{aligned}
        \varphi_1\left( \bx,\bs \right) =& h_0(\bx)-h^0_3s_3, \\
     \varphi_2\left( \bx,\bs \right) =& h_0(\bx)+(1-h^0_3)s_3\\
    \end{aligned}\right. \\
    \Longrightarrow&\;\left\{\begin{aligned}
    P^1\left( \bx,\bs \right) =& (h_0(\bx)-h^0_3s_3)(h_0(\bx)-h^0_3s_3-1)\\
    P^2\left( \bx,\bs \right) =& (h_0(\bx)+(1-h^0_3)s_3)(h_0(\bx)+(1-h^0_3)s_3-1)
    \end{aligned}\right.
\end{align*}

We then apply \eqref{eq: (1.1)}.
\begin{align}
P(\bx,\bs) =&\; (h_0(\bx)-h^0_3s_3)(h_0(\bx)-h^0_3s_3-1) \nonumber\\
            &\; +(h_0(\bx)+(1-h^0_3)s_3)(h_0(\bx)+(1-h^0_3)s_3-1)\nonumber\\
\underset{TR3(2)}\Longrightarrow\; P(\bx,\bs) =&\; (h^0_3)^2s_3-2h_0(\bx)h^0_3s_3+h_0(\bx)^2+h_0(\bx)s_3-h_0(\bx)\nonumber
\\
P(\bx,\bs) =&\;h_0(\bx)^2+h_0(\bx)(s_3-2h^0_3s_3-1)+(h^0_3)^2s_3.\label{eq: Tr5.1n=3}
\end{align}
\fi 

\if 0 

Check by substitution:
\problem{To Oksana, What do you mean by check ? checking the validity of the VIP ? It seems like you are deriving the range of $h_0(\bx)$, not checking/verifying the validity of (61)

To Tan: May be we can delete these check. I meant that we've got an AQVIP $P(\bx,\bs)$ for representation of discrete constraint  $h_0(\bx)\in \{0,1,h^0_3\}$ for $k=3$. For a certain $s_3$, $P(\bx,\bs)$ must be zero at $0,1,h^0_3$, for infeasible points positive. This check verifies the first condition }
\begin{align*}
\left\{\begin{aligned}
&\text{if } s_3=0:\;P(\bx,\bs)= 2h_0(\bx)^2-2h_0(\bx) \Leftrightarrow h_0(\bx)\in \{0,1\} \\
&\text{if } s_3=1:\;P(\bx,\bs)= 2(h^0_3)^2-4h_0(\bx)h^0_3+2h_0(\bx)^2  \Leftrightarrow h_0(\bx)\in \{h^0_3\}
\end{aligned}\right.
\end{align*}
Overall, $h_0(\bx)\in \{0,1,h^0_3\}$.


\paragraph{Illustration with $k=4$}
\problem{to Tan: the results $k=4$ are not used further. So, it is possible to reduce this part if there is lack of space }
Given \eqref{eq: phi1,2}, \eqref{eq: P1,2}, we have
\begin{align*}
    \Longrightarrow &\left\{ \begin{aligned}
        &\varphi_1(\bx,\bs) = h_0(\bx)-h^0_3s_3-h^0_4s_4, \\
        &\varphi_2(\bx,\bs) = h_0(\bx)+(1-h^0_3)s_3+(1-h^0_4)s_4\\
    \end{aligned}\right.\\
    \Longrightarrow &\left\{ \begin{aligned}
        &P_1(\bx,\bs) = (h_0(\bx)-h^0_3s_3-h^0_4s_4)(h_0(\bx)-h _3s_3-h^0_4s_4-1)\\
        &P_2(\bx,\bs) = (h_0(\bx)+(1-h^0_3)s_3+(1-h^0_4)s_4)(h_0(\bx)+(1-h^0_3)s_3+(1-h^0_4)s_4-1)
    \end{aligned}\right.
\end{align*}

\noindent We then apply \eqref{eq: (1.1)}.

\begin{align*}
P(\bx,\bs) =&\; (h_0(\bx)-h^0_3s_3-h^0_4s_4)(h_0(\bx)-h^0_3s_3-h^0_4s_4-1)+(h_0(\bx)+(1-h^0_3)s_3\\
&+\;(1-h^0_4)s_4)(h_0(\bx)+(1-h^0_3)s_3+(1-h^0_4)s_4-1)\\
\underset{TR3(2)}\Longrightarrow  P(\bx,\bs) =&\; -h_0(\bx)+h_0(\bx)s_3-h^0_3s_3s_4-s_3h^0_4s_4+h_0(\bx)^2+h_3^2s_3-2h_0(\bx)h_3s_3\\
&+\;2h^0_3s_3h^0_4s_4+(h^0_4)^2s_4+h_0(\bx)s_4+s_3s_4-2h_0(\bx)h^0_4s_4 \\
P(\bx,\bs) =&\; h_0(\bx)^2+h_0(\bx)(-2h_3s_3+s_4-2h^0_4s_4-1+s_3)\\
&\; + (-h^0_3s_3s_4-s_3h^0_4s_4+h_3^2s_3+2h^0_3s_3h^0_4s_4+(h^0_4)^2s_4+s_3s_4)
\end{align*}

\problem{ To Tan -  may also be removed
\textbf{Check by substitution $s_3+s_4 \le 1$}
\begin{equation*}
    \left\{\begin{aligned}
    &\text{if } s_3=s_4=0:\; P(\bx,\bs) = 2h_0(\bx)^2-2h_0(\bx),h_0(\bx)\in \{0,1\}\\
    &\text{if } s_3=1, s_4=0:\; P(\bx,\bs)  = 2h_3^2-4h_0(\bx)h_3+2(h_0(\bx))^2, h_0(\bx)\in \{h_3\}\\
    &\text{if } s_3=0, s_4=1:\; P(\bx,\bs) = 2(h_0(\bx))^2+2(h^0_4)^2-4h_0(\bx)h^0_4, h_0(\bx)\in \{h_4\}
    \end{aligned} \right.
\end{equation*}
Overall, $h_0(\bx)\in \{0,1,h^0_3,h^0_4\}$.
}
\fi 

\subsubsection{Transformation 6 Variant 2 (TR6.2)}

\noindent If \eqref{eq: TR0 integer cond} holds, TR0.2 
is applicable, and we obtain the following
\begin{align}
\underline{b}\le h(\bx)\le \overline{b} \Longrightarrow &\; h(\bx)+ w= \overline{b},\; w\in[\underline{b},  \overline{b}]\nonumber  \\
\Longrightarrow &\;
     h(\bx)+ w= \overline{b},\; w=\underline{b}+\sum_{i=0}^{n'-1} 2^i s_i, \label{eq: tr 0.2}\\
    &\text{where, } s_i\in\bbB,\; i\in \overline{0,n'-1},\; n'=\left\lceil \log_2 (\overline{b}-\underline{b}+1) \right\rceil. \nonumber
\end{align}

\noindent Since $h\left( \bx \right)\in \overline{ i,i+k-1}$, we obtain that,
\begin{align}
 &\;{{h}^{0}}(\bx)\in \overline{0,k-1 },\; \text{where},\: {{h}^{0}}(\bx)=h\left( \bx \right)-i \nonumber\\ 
 \Longrightarrow&\; {{h}^{0}}\left( \bx \right)+w=k-1\nonumber\\
 &\;w\in \overline{0,k-1 }. \label{eq: is}
\end{align}

In a standard binary expansion, the integer slack variable $w$ is replaced by a linear combination of ancillary binary variables in accordance to \eqref{eq: tr 0.2}, that is, 
\begin{eqnarray*}
w=\sum_{i=0}^{n'-1} 2^i s_i, s_0,...,s_{n'-1}\in \bbB,\;\text{where } n' \text{ is given by \eqref{eq: tr 0.2}.}
\end{eqnarray*}

In general, this condition represents a relaxation, $w\in \left[ 0,2^{n'}-1 \right]$, instead of \eqref{eq: is}. Respectively, if a 2-sided constraint (\ie \eqref{eq: 2-sided}) is under consideration and the levelness $k$ is not an integer power of two, TR0.2 needs modification like below.

\vspace*{5pt}
\begin{proposition}\label{prop: 2}
    \begin{align}
    & w=\sum\limits_{j=0}^{n'-2}{{{2}^{j}}{{s}_{j}}}+a{{s}_{n'-1}}=k-1, s_0,...,s_{n'-1}\in \bbB, \label{eq: TR0.2n} \\
&\text{where } \;
a=k-1-\sum\limits_{j=0}^{n'-2}{{{2}^{j}}}, n' \text{ is given by  \eqref{eq: tr 0.2},} \label{eq: a}
\end{align} 
is an equivalent binary representation of \eqref{eq: is}.
\end{proposition}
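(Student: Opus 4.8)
The plan is to prove the equivalence at the level of attainable values: equation \eqref{eq: is} fixes $w$ to range over the integer interval $\overline{0,k-1}$, so it suffices to show that as $s_0,\dots,s_{n'-1}$ range over $\bbB$, the right-hand expression $\sum_{j=0}^{n'-2}2^j s_j + a\,s_{n'-1}$ attains precisely the integers of $\overline{0,k-1}$ — no value outside the range and no missing value inside it. Once this is established, substituting \eqref{eq: TR0.2n} into \eqref{eq: is} produces a constraint with exactly the same feasible set of $\bx$ as the original, which is the asserted equivalence.

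First I would pin down the constant $a$. Using the geometric sum $\sum_{j=0}^{n'-2}2^j = 2^{n'-1}-1$ together with \eqref{eq: a} gives $a = (k-1) - (2^{n'-1}-1) = k - 2^{n'-1}$. Since $n' = \lceil \log_2(\overline{b}-\underline{b}+1)\rceil = \lceil \log_2 k\rceil$ after the normalization, we have $2^{n'-1} < k \le 2^{n'}$, so $0 < a \le 2^{n'-1}$; in particular $a$ is a positive integer and \eqref{eq: TR0.2n} is a legitimate expansion into non-negative contributions.

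Next I would split on the leading bit $s_{n'-1}$. When $s_{n'-1}=0$, the remaining bits form an ordinary binary expansion on $n'-1$ digits, so $w$ ranges over $\overline{0,\,2^{n'-1}-1}$. When $s_{n'-1}=1$, the same bits give that interval shifted by $a$, so $w$ ranges over $\overline{a,\,a+2^{n'-1}-1} = \overline{k-2^{n'-1},\,k-1}$. The two extreme configurations confirm the global range: setting all bits to $0$ yields $w=0$, and setting all bits to $1$ yields $(2^{n'-1}-1)+a = k-1$, so no value falls outside $\overline{0,k-1}$.

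It remains to verify that these two blocks tile $\overline{0,k-1}$ with no gap, and this is the crux of the argument. The first block ends at $2^{n'-1}-1$ and the second begins at $k-2^{n'-1}$, so an uncovered integer can exist only if $k-2^{n'-1} > 2^{n'-1}$, i.e. only if $k > 2^{n'}$. But the choice $n' = \lceil \log_2 k\rceil$ guarantees $k \le 2^{n'}$, so the second block starts at most one step beyond the end of the first, the two overlap or abut, and every integer of $\overline{0,k-1}$ lies in at least one block. Hence \eqref{eq: TR0.2n} realizes exactly the value set $\overline{0,k-1}$ demanded by \eqref{eq: is}, completing the proof. The delicate point is precisely this tiling step: it is where the modified top coefficient $a$ and the ceiling in the definition of $n'$ must interlock, whereas everything else reduces to routine arithmetic.
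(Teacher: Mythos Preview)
The paper states Proposition~\ref{prop: 2} without proof, so there is nothing to compare against directly. Your argument is correct and complete: you correctly simplify $a=k-2^{\,n'-1}$, use $n'=\lceil\log_2 k\rceil$ to bound $0<a\le 2^{\,n'-1}$, split on $s_{n'-1}$ to obtain the two intervals $\overline{0,\,2^{\,n'-1}-1}$ and $\overline{k-2^{\,n'-1},\,k-1}$, and then verify that $k\le 2^{\,n'}$ forces these intervals to abut or overlap, covering $\overline{0,k-1}$ exactly. This is the natural proof and supplies what the paper omits.
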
 
This variant of TR0.2 is further referred to as \textbf{TR0.2n}. 
%
Now, we separate $s_0$ in \eqref{eq: TR0.2n} and introduce an ancillary function ${{\varphi }_{3}}\left( \bx,\bs \right)$:
\begin{align*}
&\; {{h}_{0}}\left( \bx \right)+w={{h}_{0}}\left( \bx \right)+{{s}_{0}}+\sum\limits_{j=1}^{n'-2}{{{2}^{j}}{{s}_{j}}}+a{{s}_{n'-1}}=k-1\\
\Longrightarrow &\; {{\varphi }_{3}}\left( \bx,\bs \right)={{h}_{0}}\left( \bx \right)+\sum\limits_{j=1}^{n'-2}{{{2}^{j}}{{s}_{j}}}+a{{s}_{n'-1}}=k-1-{{s}_{0}} \nonumber\\
\Longrightarrow&\;{{\varphi }_{3}}\left( \bx,\bs \right)\in \left\{ k-2,k-1 \right\} \nonumber\\
\Longrightarrow&\; k-2\le {{\varphi }_{3}}\left( \bx,\bs \right) \le k-1.
\end{align*}


Similar to \eqref{eq: phi1,2}, we derive to a 2-level constraint with fewer ancillary variables by applying TR5 to ${{\varphi }_{3}}\left( \bx,\bs \right)$, and induce an AQVIP $ {{P}_{{3}}}\left( \bx,\bs \right)$ from \eqref{eq: AQVIP}:

\begin{align} 
&{{P}_{{3}}}\left( \bx,\bs \right)=\left( {{\varphi }_{3}}\left( \bx,\bs \right)-k+1 \right)\left( {{\varphi }_{3}}\left( \bx,\bs \right)-k+2 \right), \label{eq: P3 1} \\
&\text{where},\;{{\varphi }_{3}}\left( \bx,\bs \right)={{h}_{0}}\left( \bx \right)+\sum\limits_{j=1}^{n'-2}{{{2}^{j}}{{s}_{j}}}+a{{s}_{n'-1}} \label{eq: phi3}
\end{align}


We further illustrate this procedure in the examples with $k=3$ in Appendix \ref{sec:appendix-higher-level}.


\if 0 
\paragraph{Illustration with k=3}

With $k=3$, we have $n'=\left\lceil ln_23 \right\rceil=2 $, $a=(3-1)-2^0=1$ and  ${{\varphi }_{3}}\left( \bx,\bs \right)={{h}_{0}}\left( \bx \right)+s_1$. Therefore, 
\begin{align}
{{P}_{{3}}}\left( \bx,\bs \right)=&\;\left( {{\varphi }_{3}}\left( \bx,\bs \right)-3+1 \right)\left( {{\varphi }_{3}}\left( \bx,\bs \right)-3+2 \right)\nonumber\\
=&\;({{h}_{0}}\left( \bx \right)+s_1-2)({{h}_{0}}\left( \bx \right)+s_1-1)\nonumber\\
\underset{TR3(1)}\Longrightarrow {{P}_{{3}}}=&\;(h_0(\bx))^2+h_0(\bx)(2s_1-3)-3s_1+2\label{eq: P TR5.2}
\end{align}
\fi

\if 0
\paragraph{Illustraiton with k=4}
If $k=4$, we have $n'=\left\lceil ln_24 \right\rceil=2 $, $a=(4-1)-2^0=2$ and ${{\varphi }_{3}}\left( \bx,\bs \right)={{h}_{0}}\left( \bx \right)+2s_1$.  Therefore,
\begin{align*}
{{P}_{{3}}}\left( \bx,\bs \right)=&\;\left( {{\varphi }_{3}}\left( \bx,\bs \right)-4+1 \right)\left( {{\varphi }_{3}}\left( \bx,\bs \right)-4+2 \right)\\
=&\;({{h}_{0}}\left( \bx \right)+2s_1-3)({{h}_{0}}\left( \bx \right)+2s_1-2)\\
\underset{TR3(1)}\Longrightarrow =&\;(h_0(\bx))^2+h_0(\bx)(4s_1-3)-6s_1+2
\end{align*}
\noindent In this case, $k=2^2$, and TR0.2n coincides with TR0.2. 
\fi

\if 0
\subsection{MLCTS Illustration}\label{ssec: MLCTS il2}
Here, we consider the following constraint and demonstrate different ways of deriving VIPs with MLCTS.
\begin{equation}\label{eq: ex2 constr}
-1\le x_1+x_2-x_3-2x_4\le 1 
\end{equation}

\noindent Consequently, we have $h(\bx)=x_1+x_2-x_3-2x_4$, $H=\{-2,-1,0,1,2\}$, $K=3$, $h(\bx)\in H^{2,3}=\{-1,0,1\}$, i.e. $i=2$, $k=3$. Lastly, $[\underline{b},\overline{b}]\subset (h_1, h_K)$, that is, \eqref{eq: 2-sided} is fulfilled.  Hence, we need a quadratization of the two-sided constraint with odd $k$.

Following MLCTS, a polynomial VIP is obtained by \eqref{eq: 2-side,ikj}. With $j=2$, we have
\begin{align}
P(2,\bx) =&\;(h(\bx)-h_2)(h(\bx)-h_3)^2(h(\bx)-h_4) \nonumber\\
=&\;(x_1+x_2-x_3-2  x_4+1)(x_1+x_2-x_3-2  x_4)^2  (x_1+x_2-x_3-2  x_4-1)\\
\underset{TR3(1)}\Longrightarrow P(\bx)=&\;4  x_1  x_2  x_3  x_4-x_1  x_2  x_3-4  x_1  x_3  x_4-4  x_2  x_3  x_4+x_1  x_2-x_1  x_4-x_2  x_4\nonumber \\ &\;+5  x_3  x_4+x_4 \label{eq: cubic P(x)}
\end{align}

One may easily verify that $P(\bx)$ is indeed a VIP for this constraint as it is strictly positive at unfeasible points and zero at all feasible. However, it is a bi-quadratic VIP (see Table~\ref{tab:Approach 1} in Appendix, where it is tabulated on the whole search domain $\bbB^4$). In a column $+/-$, $+$ corresponds to feasible binary points of \eqref{eq: ex2 constr} and $-$ corresponds to unfeasible ones. 

In the rest of this section, we show four different ways of quadratization of $P(\bx)$  by using MLCTS. We refer to them as Approach 1-Approach 4 ($A_1-A_4$). Also, we use notation $P^{A_i'}(\bx,\bs)$ for the AQVIP built from $P(\bx)$ by approach $A_i$, $i\in \overline{1,4}$. Respectively, $P^{A_i}(\bx)=\min_{\bs\in \bbB^{n'}}P^{A_i'}(\bx,\bs)$ will be a VIP for the constraint \eqref{eq: ex2 constr}.

\subsubsection{Approach 1: MLCTS(TR4.1) }

For quadratization of $P(\bx)$, the following two substitution 
\begin{eqnarray}\label{eq: subst}
y_{1,2}=x_1x_2,\; y_{3,4}=x_3x_4
\end{eqnarray} 
 are sufficient. We apply TR4.1 to $ P(\bx)$ and obtain the following.
 
\begin{eqnarray*}\label{eq: phi(x,y)}
\begin{aligned}
P(\bx) \underset{TR4.1}\Longrightarrow \varphi(\bx,\by)=&\;x_1x_2-x_1x_4-4x_1y_{3,4} \\
&\;-x_2x_4-4x_2y_{3,4}+5x_3x_4-x_3y_{1,2}+4y_{1,2}y_{3,4}+x_4
\end{aligned}
\end{eqnarray*}

$\varphi(\bx,\by)$ is in a lifted space $\bx,\by\in \bbB^6$, where $\by=(y_{1,2},y_{3,4})$. It is easy to check that 
on a search domain $\bbB^6$, $\varphi(\bx,\by)$ takes values of different signs (see Table~\ref{tab:Approach 1} in Appendix).
Along with substitution \eqref{eq: subst}, the following Rosenberg penalties must be incorporated into the objective.
\begin{eqnarray*}
R_{1,2} = x_1 x_2 - 2 x_1 y_{1,2} - 2 x_2 y_{1,2} + 3 y_{1,2}\\
R_{3,4} = x_3x_4 - 2x_3y_{3,4} - 2 x_4y_{3,4} + 3 y_{3,4}
\end{eqnarray*}

There exists a penalty parameter $\lambda^*> 0$ such that, $\forall\lambda\ge \lambda^*$, $P'_\lambda(\bx,\by)=\varphi(\bx,\by)+\lambda(R_{1,2}+R_{3,4})$ is an QAVIP for violating constraint \eqref{eq: ex2 constr}. 
We use $\lambda = 4$, and then obtain a QAVIP whose validity can be easily checked.

\begin{align*}
P^{A_1'}(\bx,\by)=&\; P'_4(\bx,\by) \\
=&\; 5x_1x_2-x_1x_4-8x_1y_{1,2}-4x_1y_{3,4}-x_2x_4-8x_2y_{1,2}-4x_2y_{3,4}+9x_3x_4\\
&\;-x_3y_{1,2}-8x_3y_{3,4}-8x_4y_{3,4}+4y_{1,2}y_{3,4}+x_4+12y_{1,2}+12y_{3,4}
\end{align*}
Given a QAVIP $P^{A_1'}$, we take the minimum over ancillary variables $\by$ and get a corresponding VIP $P^{A_1}(\bx)=\min_{\by\in \bbB^2}P^{A_1'}(\bx,\by)$ (see Table~\ref{tab:Approach 1} in Appendix). 

\subsubsection{Approach 2: MLCTS(TR4.2)}
One positive and three negative monomials are present in the bi-quadratic VIP $P(\bx)$. 
  For the positive one (\ie $x_1x_2x_3x_4$), $I=\overline{1,4}$ and formulas \eqref{eq: S1,S2} becomes $S_1(I) = \sum_{i=1}^4 x_i$, $S_2(I) = \sum_{i,j=1; i<j}^4 x_ix_j$. Applying formula \eqref{eq: min selection formula2} to this term we have
\begin{eqnarray}\label{eq: TR4.21}
\begin{aligned}
x_1x_2x_3x_4 \;\Longrightarrow\; \min_{s_0}\big\{\varphi_0(\bx,\bs)= 
 & s_0(-2x_1-2x_2-2x_3-2x_4+3)+x_1x_2 \\ &+x_1x_3+x_1x_4+x_2x_3+x_2x_4+x_3x_4\big\}
\end{aligned}
\end{eqnarray}

For the three negative monomials (\ie $-x_1  x_2  x_3,\: -4  x_1  x_3  x_4, \: -4  x_2  x_3  x_4$),  formula \eqref{eq: min selection formulanegative} is used. Let $I=\{i,j,k\}$, then it becomes  $-x_ix_jx_k= -\min_{s \in \bbB} s (x_i+x_j+x_k-2)$. Applying this to the triple monomials in $P(\bx)$, we have
\begin{align}
&-x_1x_2x_3 \Longrightarrow  -\min_{s_1}\{\varphi_1(\bx,\bs)=s_1(x_1+x_2+x_3-2),\label{eq: TR4.22}\\
&-x_1x_3x_4\Longrightarrow -\min_{s_2}\{\varphi_2(\bx,\bs)=s_2(x_1+x_3+x_4-2), \nonumber\\
&-x_2x_3x_4\Longrightarrow -\min_{s_3}\{\varphi_3(\bx,\bs)=s_3(x_2+x_3+x_4-2)\nonumber
\end{align}
Consequently, we have
\begin{align*}
P(\bx)\underset{TR4.2}\Longrightarrow P^{A_2'}(\bx,\bs)=&\;4  \varphi_0(\bx,\bs)+\varphi_1(\bx,\bs)+4\varphi_2(\bx,\bs)+\varphi_3(\bx,\bs)\\
&\;+x_1  x_2-x_1  x_4-x_2  x_4+5  x_3  x_4+x_4\\
=&\;5x_1x_2+3x_1x_4+3x_2x_4+9x_3x_4+x_4+ 4x_1x_3+4x_2x_3\\
&\;+4s_0(-2x_1-2x_2-2x_3-2x_4+3)-s_1(x_1+x_2+x_3-2)\\
&\;-4s_2(x_1+x_3+x_4-2)-4s_3(x_2+x_3+x_4-2)
\end{align*}

Similar to Approach 1, we obtain a VIP $P^{A_2}(\bx)=\min_{\bs\in \bbB^4} P^{A_2'}(\bx,\bs)$ by taking the minimum over ancillary variables $\bs$. Verification of the validity of formulas \eqref{eq: TR4.21}, \eqref{eq: TR4.22} and that $P^{A_2}(\bx)$ is a VIP for our constraint is given in Table~\ref{tab:Approach 2} in Appendix.

\subsubsection{Approach 3: MLCTS(TR0.1+TR6.1)}

We first perform normalization as below (see \eqref{H0ik} and \eqref{eq: h0(x)}), and know that $h^0_3=2$

\begin{align*}
&H^{2,3}=\{-1,0,1\} \underset{\eqref{H0ik}}\Longrightarrow H^{0,2,3}=\{0,1,2\}\\
&h(\bx)=x_1+x_2-x_3-2x_4 \underset{\eqref{eq: h0(x)}}\Longrightarrow h_0(\bx)=x_1+x_2-x_3-2x_4+1
\end{align*}

We substitute $h_0(\bx), h^0_3$ into \eqref{eq: Tr5.1n=3}, and obtain the follows.
\begin{eqnarray*}
\begin{aligned}
P(\bx)\underset{TR0.1+TR6.1}\Longrightarrow  P^{A_3'}(\bx,\bs)=&\; 2x_1x_2-2x_1x_3-4x_1x_4-2x_2x_3\\
&\;-4x_2x_++x_3x_4-3s_3x_1-3s_3x_2+3s_3x_3\\
&\;+6s_3x_4+s_3+2x_1+2x_2+2x_4
\end{aligned}
\end{eqnarray*}

$P^{A_3'}(\bx,\bs)$ is a QVIP for \eqref{eq: ex2 constr}. By taking the minimum over ancillary variables $\by$, we obtain a VIP $P^{A_3}(\bx)=\min_{\bs\in \bbB} P^{A_3}(\bx,\bs)$ for this constraint (see Table~\ref{tab:Approach 34} in Appendix).

\subsubsection{Approach 4: MLCTS(TR0.2+TR6.2)}
After the same normalization process as the previous approach, we substitute $h_0(\bx), h^0_3$ into \eqref{eq: P TR5.2}, and obtain the following.
\begin{eqnarray*}
\begin{aligned}
P(\bx)\underset{TR0.2n+TR6.2}\Longrightarrow  P^{A_4'}(\bx,\bs)=&\;x_1x_2-x_1x_3-2x_1x_4-x_2x_3-2x_2x_4+2x_3x_4\\
&\;+s_1x_1+s_1x_2-s_1x_3-2s_1x_4+x_3+3x_4
\end{aligned}
\end{eqnarray*}

$P^{A_4'}(\bx,\bs)$ is a QVIP for \eqref{eq: ex2 constr}. Similar to the previous approaches, $P^{A_4}(\bx)=\min_{\bs\in \bbB} P^{A_4'}(\bx,\bs)$ is a VIP for this constraint 
(see Table~\ref{tab:Approach 34} in Appendix).


In summary, we present four ways to derive QVIPs with our MLCTS. The number $n'$ of ancillary binary variables in the QVIPs $P^{A_1'}(\bx,\bs)-P^{A_4'}(\bx,\bs)$ are $2,4,1,1$, respectively. The approaches $A_3,\: A_4$ associated with transformation TR6 are the best in this regard and utilize a single ancillary variable. Furthermore, we analyze the range of the values  of $P^{A_3}(\bx)$ and $P^{A_4}(\bx)$, which are $\{0,1,2,6\}$ and $\{0,1,3\}$, respectively. The narrowest range corresponds to $A_4$, implying that QVIP $P^{A_4'}(\bx,\bs)$ is better in this regard.

\fi

\section{MLCTS for well-known problems}\label{sec: applied problems}

In this section, we apply the MLCTS to four well-known combinatorial optimization problems: Maximum 2-Satisfiability (Max2SAT), the Linear Ordering Problem (LOP), the Community Detection (CD) problem, and the Maximum Independent Set Problem (MIS).
\subsection{ Maximum 2-Satisfiability Problem}
 In the Maximum 2-Satisfiability problem (Max2SAT) \cite{glover_quantum_2022}, a Conjunctive Normal Formula (CNF) is 
$\text{CNF}=\wedge_{i=1}^m C_i$, where a clause $C_i$ involves two literals $l_{i1}, l_{i2}$, i.e. $C_i=l_{i1} \vee l_{i2}$, $i\in \overline{1,m}$. Each literal is a propositional variable from a set $\{x_1,...,x_n\}$ or its negation; it is a function on $\bx$ (\ie $l_{ij}=l_{ij}(\bx)$). 
The goal is to find a vector $\bx=(x_1,...,x_n)$ that maximizes the number of satisfied clauses.

Let $x_j \in\bbB$ denote the binary variables where $j\in \overline{1,n}$ is the corresponding index. $C_i \in\bbB$ denotes the binary variables representing the clauses where $i\in \overline{1,m}$ is the corresponding index of clauses.

The Max2SAT can be formulated as the following BLP model.
%
%
\begin{align}
\text{(Max2SAT.BLP1)}\quad\minimize_{\bx,\: \bC} &\quad \{f(\bC)=\sum_{i=1}^m \neg C_i=m- \sum_{i=1}^m C_i\}\label{eq: Max2SAT obj}\\
\subjto&\quad l_{i1}+l_{i2}\ge C_i, i\in \overline{1,m} \label{eq: Max2SAT C1}\\
&\quad l_{i1}\le C_i, i\in \overline{1,m} \label{eq: Max2SAT C2}\\
&\quad l_{i2}\le C_i, i\in \overline{1,m} \label{eq: Max2SAT C3}\\
&\quad \bx \in \bbB^n, \bC \in \bbB^m,l_{i1}=l_{i1}(\bx),l_{i2}=l_{i2}(\bx), i\in \overline{1,m},\label{eq: Max2SAT C4}
\end{align}

where $\bC=(C_1,...,C_m)$ is a vector of clauses.

The objective is to maximize the number of satisfied clauses, although it is written in the form of minimization. We express the relation between clauses and literals 
through a set of linear constraints \eqref{eq: Max2SAT C1} - \eqref{eq: Max2SAT C3}.

Note, \eqref{eq: Max2SAT C1} is a 3-level constraint as $h_i=l_{i1}+l_{i2}- C_i\in \{0,1,2\}$. However, a feasible finite point configuration described by a set of constraints \eqref{eq: Max2SAT C1}-\eqref{eq: Max2SAT C4} for a fixed $i$  includes four points $(l_{i1},l_{i2}, C_i)\in \{(0,0,0), (0,1,1), (1,0,1), (1,1,1)\}$, where $h_i$ varies from $0$ to $1$ only. Thus, \eqref{eq: Max2SAT C1} can be refined to 
\begin{eqnarray}
& C_i\le l_{i1}+l_{i2}\le C_i+1,\; i\in \overline{1,m}. \label{eq: Max2SAT C1.1}
\end{eqnarray}
Thus, a refined BLP (referred to as \textbf{Max2SAT.BLP2}) is given by  the objective \eqref{eq: Max2SAT obj} and constraints \eqref{eq: Max2SAT C2}-\eqref{eq: Max2SAT C1.1}. 
Notably, all these linear constraints are regular 2-level. In particular, \eqref{eq: Max2SAT C2} and \eqref{eq: Max2SAT C3} are known constraints in the form of \eqref{C3}, which has a corresponding known VIP \eqref{P3}. Applying formula \eqref{eq: QVIP reg} with parameters $j=0$, $I^+=\{(i,1),(i,2)\}$, and $I^-=\{i\}$, we get a QVIP $P_i^1=1-l_{i1}-l_{i2}+2C_i-l_{i1}l_{i2}-l_{i1}C_i-l_{i2}C_i$ for the constraint \eqref{eq: Max2SAT C1.1}. Combining it with \eqref{C3} we get an augmented QUBO 

\begin{align*}
    \text{(Max2SAT.QUBO1)}\:\minimize_{\bx,\: \bC }\:  F_{\lambda}(\bx,\bC)\label{eq: Max2SAT Q1} =&\; m- \sum_{i=1}^m  C_i+\lambda_1 \sum_{i=1}^m P_i^1 +\lambda_2 \sum_{i=1}^m (l_{i1}-l_{i1}C_i)\\
    &\;+ \lambda_3\sum_{i=1}^m (l_{i2}-l_{i2}C_i),
\end{align*}


\noindent where $\bx\in \bbB^n$, $\bC\in \bbB^m$.  
Lastly, $\mathbf{\lambda}=(\lambda_1,\lambda_2,\lambda_3)\in \bbR^3_{>0}$ is a vector of adjusted penalty constants.
 Max2SAT.QUBO1 may also be expressed in $\bx$. To do that, we rewrite Max2SAT.QUBO1 in terms of propositional variables by substitution with the below expressions of literals.

\begin{equation*}\label{eq: Ci(x)}
C_i=l_{i1}\vee l_{i2}=
\left\{
    \begin{aligned}
    &x_{j_{i1}}\vee x_{j_{i2}},i\in I^{++}\\
    &x_{j_{i1}}\vee \neg x_{j_{i2}},i\in I^{+-}\\
    &\neg x_{j_{i1}}\vee \neg x_{j_{i2}},i\in I^{--}
    \end{aligned}\right.,
\end{equation*}

\noindent where sets $I^{++},\: I^{+-},\: I^{--}$ form a partition of a set $\overline{1,m}$ depending on the number of positive literals ($2,1,0$ for $I^{++},\: I^{+-},\: I^{--}$, respectively), $\{j_{i1},j_{i2}\}\subseteq \overline{1,n}$ is a set of indices of propositional variables in a clause $C_i$, $i\in \overline{1,m}$. 
$j_{i1},\:j_{i2}\in \overline{1,n}$ are the indices of a propositional variable present in $C_{i1}$ and $C_{i2}$, respectively. For instance if $C_5=x_3\:\vee\: \neg x_7$, then $i=5$, $j_{i1}=3$, $j_{i2}=7$.


We then take a closer look at the expression of the clause $C_i$. Note that a clause $C_i$ depends on two propositional variables $x_{j_{i1}}$, $ x_{j_{i2}}$. Thus, a negation $\neg C_i$ can be replaced by a quadratic VIP that depends on these variables only,  according to  Cor.~\ref{cor: 6}. 
In addition, $\neg C_i=0$ if and only if $l_{i1}+l_{i2}\ge 1$, and $\neg C_i=1$ if and only if $l_{i1}+l_{i2}=0$. The inequality $l_{i1}+l_{i2}\ge 1$  has the form of \eqref{C2}. Correspondingly, adapting \eqref{P2} for constructing a VIP we get $P_i=1-l_{i1}-l_{i2}+l_{i1}l_{i2}$. Depending on the type of the clause, we may have three expressions of $P_i$ as a function of $\bx$ as follows.
\begin{align}
P_i=&\;1-l_{i1}-l_{i2}+l_{i1}l_{i2} \nonumber\\
=
&\;\left\{\:\begin{aligned}
P_i^{++}=&\:1-x_{j_{i1}}-x_{j_{i2}}+x_{j_{i1}}x_{j_{i2}},i\in I^{++}\\
P_i^{+-}=&\:x_{j_{i2}}-x_{j_{i1}}x_{j_{i2}},i\in I^{+-}\\
P_i^{--}=&\:x_{j_{i1}}x_{j_{i2}},i\in I^{--}.
\end{aligned}\right. \label{eq: Pi(x)}
\end{align}


 \noindent It easy to see that $P_i=\neg C_i$ for binary $l_{i1},l_{i2}$, and, we obtain the following compact QUBO formulation.

\begin{equation*}
    \begin{aligned}
    \text{(Max2SAT.QUBO2)}\;\minimize_{\bx}\: F(\bx) =&\; \sum_{i=1}^m  \neg C_i = \sum_{i=1}^m  P_i\\
    =&\; \sum_{i\in S^{++}}P_i^{++}+\sum_{i\in S^{+-}}P_i^{+-}+\sum_{i\in S^{--}}P_i^{--},
    \end{aligned}
\end{equation*}
where $P_i^{++},P_i^{+-},P_i^{--}$ are given by \eqref{eq: Pi(x)}.

The dimension of Max2SAT.QUBO1 (an augmented QUBO model) is $n+m$, but the dimension of Max2SAT.QUBO2 (a compact QUBO model) is only $n$.
Since $m\le n(n-1)/2$,
the ratio of the dimensions may vary from
$1+1/n$ to $(n+1)/2$.
Notably, unlike the Max2SAT.QUBO1, Max2SAT.QUBO2 does not have penalty parameters. This is a very important and desirable characteristic of QUBO formulation, as tuning penalty parameters, which might be time-consuming, is crucial to the effectiveness of solving QUBO models \cite{quintero_characterization_2022}. 

\subsection{Linear Ordering Problem}

Given an $n\times n$ weight matrix $\bW= (w_{ij})_{
i,j\in \overline{1,n}}$, 
the objective of the Linear Ordering Problem (LOP) is to determine the permutation $\pi$ of its columns (and rows) such that the sum of the weights in the upper triangular part of the resulting matrix is maximized \cite{laguna_intensification_1999}. Within this context, the permutation $\pi$ provides the ordering of both columns and rows.


We started with a BLP model for LOP presented in the literature~\cite{grotschel_cutting_1984}. 
Let $i,\:j,\:k \in V= \{1,\dots,n\}$ denote the set of indices of columns (and rows) for $\bW$.
$x_{ij} \in \bbB$ denotes the decision variable representing the ordering. $x_{ij}=1$ if item $i$ precedes item $j$, otherwise, $x_{ij}=0$. The BLP model is presented below.
\begin{align}
    \text{(LOP.BLP)}\quad f^*=\max_{X} &\quad \{ f(X)=\sum_{i<j}w_{ij}x_{ij}+\sum_{j<i} w_{ij}(1-x_{ji})\}\label{eq: LOP obj}\\
    \subjto &\quad 0\le x_{ij}+x_{jk}-x_{ik}\le 1, \; i,j,k\in V,\; i<j<k\label{eq: LOPC1}\\
    &\quad x_{ij}\in \{0,1\},\; i,j \in V,\; i<j, \label{eq: LOP LOPC2}
\end{align}
where $X=(x_{ij})_{i,j\in V, i<j}$ is an upper triangular binary matrix of dimension $n$.
Since we only consider the upper triangular binary matrix, the constraint of $x_{ij} + x_{ji} = 1$ is not needed as it is incorporated in the second term of the objective (\ie $\sum_{j<i} w_{ij}(1-x_{ji})$).
Constraint \eqref{eq: LOPC1} is a triangle inequality that ensures that $x$ encodes a linear order. Notably, \eqref{eq: LOPC1}  is a set of 2-level regular constraints \eqref{eq: 2-sidereg1} for $j=0$, $I^+=\{(i,j),(j,k)\},I^-=\{(i,k)\}$. Utilizing \eqref{eq: QVIP reg} for each triple $i,j,k$, we get a QVIP $P_{i,j,k}$ corresponding to a constraint involving $h_{i,j,k}= x_{ij}+x_{jk}-x_{ik}$.\ignore{$\in\{0,1\}$.} Thus,
\begin{equation}
\;0\le h_{i,j,k}\le 1 \;\Longrightarrow \; P_{i,j,k} =x_{ik}+x_{ij}x_{jk}-x_{ij}x_{ik}-x_{jk}x_{ik},\; 1\le i < j < k \le n. \label{eq: LOP Pijk}
\end{equation}

These VIPs are presented in the literature~\cite{lewis_note_2009}. Using a truth table, the authors justify that $P_{i,j,k}$ is a VIP for \eqref{eq: LOPC1}. However, the derivation of this penalty term was not provided. In contrast, MLCTS yields an analytic way to derive \eqref{eq: LOP Pijk} and allows us to fill this gap. 
Combining the objective \eqref{eq: LOP obj} with the VIPs, we get a QUBO formulation of LOP in the  form of a minimization problem:
\begin{equation*}
    \begin{aligned}
        \text{(LOP.QUBO)} &\quad f^{**}=-f^* =F_{\lambda}(X^*)=\min_{X}\; \{ F_{\lambda}(X)\}\\
        \text{where, }&\; F_{\lambda}(X)=-\sum_{i,j\in V:\:i<j}w_{ij}x_{ij}-\sum_{i,j\in V:\:j<i} w_{ij}(1-x_{ji})\\
        &\; \phantom{ F_{\lambda}(X')=} +\lambda \sum_{i,j,k\in V:\:i < j < k }(x_{ik}+x_{ij}x_{jk}-x_{ij}x_{ik}-x_{jk}x_{ik}).\label{eq: LOP Q}
    \end{aligned}
\end{equation*}

\noindent $X$ is an binary upper-triangular matrix of order $n$, i.e. $X=(x_{i,j})_{i,j\in V, i<j}$ is binary, $\lambda\in \bbR_{>0}$ is an adjusted penalty constant.



\subsection{Community Detection}\label{ssec:CDP-QUBO}

 Given an undirected graph $G(V,E)$ on $n$ vertices and $m$ edges such as $V=\{1,...,n\}$.
Let  $A=(a_{uv})_{u,v\in V}$ be an adjacency matrix of $G$.
The general Community Detection Problem (CDP) is typically formulated as a modularity maximization problem, that is, to find a partition $\bc^*$ of $V$ into clusters such that the following modularity function $Q(\bc|G)$ is maximized.

\begin{eqnarray}\label{eq: modularity 1}
Q(\bc^*|G)=\max_{\bc}\;\{Q(\bc|G)=\frac{1}{2m}\sum_{u,v\in V}(a_{uv}-\frac{d_ud_v}{2m})\varphi(u,v)\},
\end{eqnarray}
where $\bC$ is a set of $V$-partitions, $\bc\in \bC$ is a  $V$-partition, $d_u=\sum_{v\in V}a_{uv}$ is the degree of vertex $u$, $c_u\in \bc$ is a cluster to which vertex $u$ is assigned, $\varphi(u,v)=1$ if $c_u=c_v$, otherwise, $\varphi(u,v)=0$.


Let $X=(x_{uv})_{ u, v \in V}$ denote the matrix of binary decision variables. $x_{uv}=1$ if $u$ and $v$ are in different clusters, otherwise, $x_{uv} = 0$. A known BLP model of the CDP is as follows \cite{agarwal_modularity-maximizing_2008}.
\begin{align}
\text{(CDP.BLP)}\quad f(X^*)=\max_{X}&\quad \{f(X)=\sum_{u,v\in V}a_{uv}(1-x_{uv})\}\label{eq: modularity 1.1} \\
 \subjto &\quad x_{uw} \le x_{uv} + x_{vw},\; \{u,v,w\}\subseteq V. \label{eq: CD transitivity}
\end{align}


The objective is to maximize the modularity function. Constraint \eqref{eq: CD transitivity} is a transitivity constraint. We may rewrite \eqref{eq: CD transitivity} as below.
\begin{equation} \label{eq: CD transitivity2}
h_{u,v,w}= x_{uv} + x_{vw}-x_{uw},\; 0\le h_{u,v,w} \le 2,\; \{u,v,w\}\subseteq V
\end{equation}

It is easy to see that $h_{u,v,w}\in \{0,1,2\}$ for each triple $u,v,w$. Hence, the transitivity constraints are $3$-level one-sided constraints in the form of  \eqref{eq: 1-side2}. Applying Cor.~\ref{Cor: 2} to \eqref{eq: CD transitivity2} gives us a set of cubic VIPs as follows.
\begin{align*}
&\; h_{u,v,w}\in \{0,1,2\},\; \{u,v,w\}\subseteq V\\
\Longrightarrow &\; P_{u,v,w}=\;-h_{u,v,w}(h_{u,v,w}-1)(h_{u,v,w}-2),\; \{u,v,w\}\subseteq V\\
\underset{TR3(6)}\Longrightarrow &\;
P'_{u,v,w}=\;x_{uv} x_{vw}x_{uw}-x_{uv} x_{uw}- x_{vw}x_{uw}+x_{uw},\; \{u,v,w\}\subseteq V.
\end{align*}

\noindent The validity of VIP $P'_{u,v,w}$ is verified in Table~\ref{tab: 0.2} in the Appendix.
%
%
We then conduct quadratization of $P'_{u,v,w}$ using TR4.1.
\begin{align}
&\;y_{u,v,w}=x_{uv}x_{vw},\; \{u,v,w\}\subseteq V\label{eq: yuvw}\\
\Longrightarrow &\;P''_{u,v,w}=y_{uvw}x_{uw}-x_{uv} x_{uw}- x_{vw}x_{uw}+x_{uw},\; \{u,v,w\}\subseteq V. \label{eq: CD VIPs2}\nonumber
\end{align}

\noindent Replacing each constraint in \eqref{eq: yuvw} with the corresponding Rosenberg penalty gives us 
\begin{equation*}
R_{u,v,w}=x_{uv}x_{vw}-2 x_{uv}y_{uvw}-2 x_{vw}y_{uvw}+3y_{uvw},\; \{u,v,w\}\subseteq V . \label{eq: Ruvw}
\end{equation*} 

Lastly, an augmented QUBO formulation of \eqref{eq: modularity 1} is obtained.
\begin{align}
\text{(CDP.QUBO1)}\quad  F_{\lambda_1,\lambda_2}(X^*,Y^*)&\:=\:\min_{X,Y}\; F_{\lambda_1,\lambda_2}(X,Y)\label{eq: CDP: F^*}\\
F_{\lambda_1,\lambda_2}(X,Y)&\:=\:-f(X)+\sum_{\{u,v,w\}\subseteq V }(\lambda_1 P''_{u,v,w}+\lambda_2 R_{u,v,w})
\label{eq: CDP: F}
\end{align} 

$f(X)$ is given by \eqref{eq: modularity 1.1}, and $Y=(y_{u,v,w})_{\{u,v,w\}\subseteq V}$ is a three-dimensional array of ancillary binary variables, $\lambda_1,\lambda_2\in \bbR_{>0}$ are adjusted penalty constants.

%
$P''_{u,v,w}$ is a penalty term defined in a lifted space which depends on not only $X$ but also $Y$.
In the lifted space, it is no longer a VIP. 
When using
$\lambda_1=1$, $\lambda_2=2$, we get a valid penalty term $P'''_{u,v,w}=P''_{u,v,w}+2 R_{u,v,w}\ge 0$. This is because it returns a strictly positive value at unfeasible points 
$(x_{uv},x_{vw},x_{uw},y_{uvw})\in\{(0,0,1,0),(0,0,1,1)\}$ (see Table~\ref{tab: 0.3} in the Appendix). Moreover, $\min_{y_{uvw}\in \{0,1\}}P'''_{u,v,w}$ is a VIP for \eqref{eq: CD transitivity2} coinciding with $P'_{u,v,w}$ on binary domain (see Table~\ref{tab: 0.2} in the Appendix). 

Hence, $P'''_{u,v,w}$ is an augmented VIP for the constraint and can be used in QUBO formulations.

Utilizing $P'''_{u,v,w}$ in \eqref{eq: CDP: F}, we come to another augmented QUBO of CDP that does not involve any penalty parameters: 

\begin{align*}
&\text{(CDP.QUBO2)}\quad F_{1,2}(X^*,Y^*)\:=\:\min_{X,Y}\; F_{1,2}(X,Y);\\
&\text{where,} \;F_{1,2}(X,Y)=-f(X)+\sum _{\{u,v,w\}\subseteq V }P'''_{u,v,w}\\
&\phantom{\text{where,}} \;=-f(X)+\sum _{\{u,v,w\}\subseteq V}(\: -x_{uv}x_{uw}-x_{vw}x_{uw}+2x_{uv}x_{vw}-4x_{uv}y_{uvw}\\
&\phantom{\text{where,} \;=-f(X)+\sum _{\{u,v,w\}\subseteq V}(\:} -4x_{vw}y_{uvw}+x_{uw}y_{uvw}+x_{uw}+6y_{uvw}\:)
\end{align*} 



\subsection{Maximum Independent Set Problem}\label{ssec:MIS}

The Maximum Independent Set Problem (MIS) \cite{garey_computers_1979} is a well-known strongly NP-hard problem. We apply both MLCTS and the standard transformation scheme to derive QUBO formulations. In addition,  in the next section, we conduct proof-of-concept numerical experiments to evaluate the performance of the different QUBO formulations on benchmark instances from the literature~\cite{noauthor_dimacs_nodate}. 

\subsubsection{MIS Formulations}
Given an undirected graph $G(V,E)$ on $n$ vertices and $m$ edges such as $V=\{1,...,n\}$, the goal of the MIS is to find the largest possible subset of vertices $S\:\subseteq\: V$ such that no two vertices in $S$ are adjacent. The cardinality of $S$ is called the independence number, denoted as $\alpha(G)$.
Here, we consider a BLP model of MIS given in the literature \cite{pardalos_maximum_1994}. 
Let $\bx=(x_v)_{v\in V}\in \mathbb{B}_n$ denote a vector of the binary decision variables.
$x_v = 1$ if $v\in S$, otherwise, $x_v = 0$.

\begin{align}
    \text{(MIS.BLP)}\quad \alpha(G)= f(\bx^*)=\max_{\bx} &\quad \{f(\bx)=\sum_{v\in V} x_v\} \label{eq: MIS obj}\\
    \subjto &\quad   x_v + x_u\le 1, \{v, u\} \in E. \label{eq: MIS constraints}
\end{align}

The objective is to maximize the number of vertices in the independent set. Constraint \eqref{eq: MIS constraints} prevents selection of adjacent vertices. 

The MIS.BLP contains inequality constraints only. Applying the CTS to its transformation requires introducing a binary slack variable for each constraint. TR0 introduces a binary slack variable $y_{vu}$ for each pair of $\{v,u\}\in E $, and consequently, 
\eqref{eq: MIS constraints} becomes the following.
\begin{align}
&x_v+x_u+ y_{vu}=1,\; \{v, u\} \in E \label{eq: MIS equality constraints} \\
&y_{vu}\in \{0,1\},\; \{v, u\}\in E.\label{eq: MIS y constraints}
\end{align}

As a result, we obtain an augmented BLP model with the objective \eqref{eq: MIS obj} and constraints \eqref{eq: MIS equality constraints}, \eqref{eq: MIS y constraints}. This model is further referred to as the MIS.ABLP (for \emph{augmented} BLP).

\paragraph{Deriving QUBO with CTS} 
Using TR1 on constraint \eqref{eq: MIS equality constraints}, for every $\{v,u\}\in E$, we derive
 \begin{align*}
&\; x_v+x_u+ y_{vu}=1 \\
\underset{TR1}\Longrightarrow &\; P_{v,u} = ( x_v+x_u+ y_{vu}-1)^2\\
\underset{TR3(1)}\Longrightarrow&\; P_{v,u}=2x_vx_u+ 2x_uy_{vu}+ 2x_vy_{vu}-x_v-x_u- y_{vu}+1, \text{ where } y_{vu} \text{ is binary}
 \end{align*}
 

Incorporating $P_{v,u}$ into the objective, we obtain the following augmented QUBO with the dimension of $m+n$.
\begin{align*}
\text{(MIS.QUBO1)}\; -\alpha(G)\:=\: &\;F_\lambda(\bx^*, \by^*)\\
\:=\: &\;\min_{\bx,\: \by }\;\big\{F_\lambda(\bx, y)=-f(\bx)+\lambda \sum_{\{v,u\}\in E} P'_{v,u}\big\}\\
\:=\: &\;\min_{\bx,\: \by}\; \big\{-\sum_{v\in V} x_v+\lambda \sum_{\{v,u\}\in E} (2x_vx_u+ 2x_uy_{vu}\\
&\;+ 2x_vy_{vu}-x_v-x_u- y_{vu}+1)\big\}
\end{align*}

The vector $\by$ is formed from the ancillary binary variables associated with edges of $G$ ($\by=(y_{vu})_{\{v,u\}\in E} \in\bbB^m$). $\lambda\in \bbR_{>0}$ is a penalty constant. 
 
\paragraph{Deriving QUBO with MLCTS} \label{ssec:exp-MIS}
Notably, \eqref{eq: MIS constraints} are regular constraints with a levelness of two. Therefore we may apply the MLCTS directly to perform quadratization without introducing ancillary variables.

Using MLCTS, we apply TR5 on constraints \eqref{eq: MIS constraints}. In accordance with \eqref{eq: conflict} and \eqref{eq: QVIP conflict}, conflict constraints \eqref{eq: MIS constraints}  induce a QVIP $P^c_{v,u}=x_vx_u$  for every edge $\{v,u\}\in E$. Respectively, the summation of the QVIPs in the penalty function yields the following compact QUBO, where $\bx=\{x_v\}_{v\in V}\in \mathbb{B}_n$.
\begin{align*}
\text{(MIS.QUBO2)}\quad -\alpha(G)=F'(\bx^*)\:=\:&\;\min_{\bx}\;\{F'(\bx)=-f(\bx)+\lambda \sum_{\{v,u\}\in E} P^c_{v,u}\}\\
\:=\:&\;\min_{\bx}\; \{-\sum_{v\in V} x_v+\lambda \sum_{\{v,u\}\in E} x_vx_u\},
\end{align*}

\noindent where $\lambda \in \mathbb{R}_{>0}$  is a penalty constant. Note this model is a compact QUBO model as it does not introduce new variables. This model was presented in the literature \cite{pardalos_maximum_1994}, 
however MLCTS provides a novel alternative approach for its derivation. 

To summarize, we derive an augmented QUBO model using CTS and a compact QUBO model using MLCTS with dimensions $n+m$ and $n$, respectively. 
For MIS with dense graphs, the MIS.QUBO1 model could be significantly larger than the MIS.QUBO2. 
In the following section, we conduct a proof-of-concept experiment to computationally evaluate the performance of these two different QUBO models.

\section{Numerical Experiments}\label{sec: MIS experiment results}
We first introduce the experiment setup and then present the complete experiment results and some discussions.


\subsection{Experiment setup}

\paragraph{Benchmark Instances}
The computational experiment was conducted on the benchmark instances presented in the lierature \cite{noauthor_dimacs_nodate}.
In the dataset, there are twenty-nine graphs of the order $n\in [64,4096]$ and the size $m\in [192, 504451]$ (see Table~\ref{tab:MIS_data_par}). 

\begin{table}[htbp]
\renewcommand*{\arraystretch}{1.1}
\centering
\begin{tabular}{|>{\centering}p{.5in}>{\centering}p{1in}>{\centering\arraybackslash}p{.8in}|} \hline
 $n$ & $m$ & \#  instances  \\ \hline
	64&	[192,543]&	3\\
	128&	[512,5173]&	5\\
	256&	[1312,17183]&	5\\
	512&	[3264,54895]&	5\\
	1024&	[18944,169162]&	5\\
	2048&	[18944,504451]&	5\\
	4096&	[184320,184320]&	1\\
\hline
Total&	&	29\\
\hline
\end{tabular}
    \caption{Summary of benchmark instances.}
    \label{tab:MIS_data_par}
\end{table}

\paragraph{QUBO Solvers}
For each benchmark instance, both MIS.QUBO2 and MIS.QUBO1 models are solved on Gurobi and Digital Annealer. 

Gurobi is a well-known commercial solver, and it effectively solves QUBO as BLP through a general branch-and-bound approach.
%
We use the default configuration and set the time limit to 300s. All feasible solutions obtained by Gurobi are recorded and later processed after the experiments are completed. Thus, post-processing time is not considered in the experiment runtime.

For a QUBO solver, we use the third generation of Fujitsu Digital Annealer (DA3)\cite{DA3}. The DA3 is a hybrid system of hardware and software for encoding and solving a QUBO problem using a parallel tempering algorithm. 
The algorithm takes advantage of hardware to perform the parallel evaluation of the objective function of every one-flip neighbor of the current assignment. It has been used in several studies in solving QUBO formulations of combinatorial optimization problems \cite{tran2016, Cohen20, Zhang22, Senderovich22a}.  For details of the hardware and software aspects of DA3, we refer the readers to the literature~\cite{DA3,oshiyama_benchmark_2022}. DA3 is a heuristic solver with no optimality guarantee. We solve each QUBO model on DA3 for 5 and 10 seconds and report the best-found solution.  DA-related experiments were conducted
on the Digital Annealer environment prepared for the research use. The Digital Annealer is accessible only through a remote server and communication time, waiting and post-processing times are not considered in our result summary.

\subsubsection{Computational Results}

Note that our objective here is to compare our two QUBO models to investigate the impact of MLCTS on solving performance. Given the substantial differences in the solvers (e.g., dedicated hardware, exact vs. heuristic), a fair, direct comparison of solver performance is beyond our scope.

\if 0
\begin{table}[htbp]
\centering
\begin{subtable}{\textwidth}
\centering
\begin{tabular}{|c|c|c|c|
}\hline
$i$ & $n$ & $m$ & \#  instances $k$
  \\ \hline
1&	64&	[192,543]&	3\\
2&	128&	[512,5173]&	5\\
3&	256&	[1312,17183]&	5\\
4&	512&	[3264,54895]&	5\\
5&	1024&	[18944,169162]&	5\\
6&	2048&	[18944,504451]&	5\\
7&	4096&	[184320,184320]&	1\\
\hline
&	total&	&	29\\
\hline
\end{tabular}
    \caption{data parameters.}
    \label{tab:MIS_data_par}
\end{subtable}
\begin{subtable}{\textwidth}
\centering
\begin{tabular}{|c|c|cc|cc|cc|}\hline
\multirow{2}{*}{$n$}         &   \multirow{2}{*}{$k$}                    & \multicolumn{2}{c|}{\# solv. ins. in 5s/10s/300s/MRT}& \multicolumn{2}{c|}{\# feas. ins. in 5s/10s/300s} & \multicolumn{2}{c|}{\# gap in 5s/10s/300s}             \\ 
    &    & MIS.Q       & MIS.AQ       & MIS.Q       & MIS.AQ       & MIS.Q       & MIS.AQ                \\ \hline
64&	3&	3/3/3/\textbf{0.02}&	3/3/3/0.2&	3/3/3&	3/3/3&	0/0/0&	0/0/0\\
128&	5&	\textbf{5/5/5/0.05}&	3/3/3/1.47&	5/5/5&	5/5/5&	\textbf{0/0/0}&	0.4/0.4/0.4\\
256&	5&	\textbf{5/5/5/0.18}&	2/2/3/12.04&	5/5/5&	5/5/5&	\textbf{0/0/0}&	0.43/0.43/0.4\\
512&	5&	\textbf{3/4/5/14.37}&	0/1/3/104.3&	5/5/5&	5/5/5&	\textbf{0.04/0.02/0}&	0.49/0.47/0.4\\
1024&	5&	\textbf{2/2/3/102.29}&	0/0/2/132.74&	\textbf{5/5}/5&	3/4/5&	\textbf{0.1/0.08/0.04}&	0.69/0.56/0.45\\
2048&	5&	0/0/\textbf{1/246.87}&	0/0/0/300&	\textbf{5/5}/5&	0/1/5&	\textbf{0.13/0.13/0.08}&	--/0.27/0.12\\
4096&	1&	0/0/0/300&	0/0/0/300&	\textbf{1/1}/1&	0/0/1&	\textbf{0.22/0.22/0.15}&	--/--/0.18\\
\hline
total&	29&	\textbf{18/19/22/94.82}&	8/9/14/121.53&	\textbf{29/29}/29&	21/23/29&	\textbf{0.05/0.05/0.02}&	0.44/0.38/0.31\\
\hline
\end{tabular}
    \caption{MIS: Gurobi Results.}
    \label{tab:MIS_G2}
\end{subtable}
\begin{subtable}{\textwidth}
\centering
\begin{tabular}{|c|c|cc|cc|cc|
}\hline
\multirow{2}{*}{$n$}         &  \multirow{2}{*}{$k$}                   & \multicolumn{2}{c|}{\# solved instances in 5s/10s}& \multicolumn{2}{c|}{\# feasible instances in 5s/10s} & \multicolumn{2}{c|}{\# gap in 5s/10s}             \\ 
    &     & MIS.Q       & MIS.AQ       & MIS.Q       & MIS.AQ       & MIS.Q       & MIS.AQ                \\ \hline
64&	3&	3/3&	3/3&	3/3&	3/3&	0/0&	0/0\\
128&	5&	\textbf{5/5}&	1/2&	5/\textbf{5}&	5/4&	\textbf{0/0}&	0.09/0.05\\
256&	5&	\textbf{5/5}&	0/0&	\textbf{5/5}&	2/3&	\textbf{0/0}&	0.12/0.17\\
512&	5&	\textbf{5/5}&	0/0&	\textbf{5/5}&	0/0&	\textbf{0/0}&	--/--\\
1024&	5&	\textbf{5/5}&	0/0&	\textbf{5/5}&	0/0&	\textbf{0/0}&	--/--\\
2048&	5&	\textbf{5/4}&	--/--&	\textbf{5/5}&	--/--&	\textbf{0/0}&	--/--\\
4096&	1&	0/0&	--/--&	\textbf{1/1}&	--/--&	\textbf{0.06/0.07}&	--/--\\
\hline
total&	29&	\textbf{28/27}&	4/5&	\textbf{29/29}&	10/10&	\textbf{0/0}&	0.08/0.08\\
\hline
\end{tabular}
    \caption{MIS: DA Results.}
    \label{tab:MIS_DA2}
\end{subtable}
\begin{subtable}{\textwidth}
\begin{tabular}{|c|cc|cc|
}\hline
 \multirow{2}{*}{$n$}          & \multicolumn{2}{c|}{\# MIS.Q}& \multicolumn{2}{c|}{MIS.AQ} \\ 
     &   Gurobi       & DA       & Gurobi       & DA                  \\ \hline
64&	3/3&	3/3&	3/3&	3/3\\
128&	5/5&	5/5&	\textbf{3/3}&	1/2\\
256&	5/5&	5/5&	\textbf{2/2}&	0/0\\
512&	3/4&	\textbf{5/5}&	0/\textbf{1}&	0/0\\
1024&	2/2&	\textbf{5/5}&	0/0&	0/0\\
2048&	0/0&	\textbf{5/4}&	0/0&	--/--\\
4096&	0/0&	0/0&	0/0&	--/--\\
\hline
total&	18/19&	\textbf{28/27}&	\textbf{8/9}&	4/5\\
\hline
\end{tabular}
    \caption{MIS (\# solved instances in 5s/10s): Gurobi vs DA.}
    \label{tab:comparison}
\end{subtable}
\end{table}
\fi

\begin{table}[htbp]
\centering
\begin{subtable}[h]{\textwidth}
\centering
\renewcommand*{\arraystretch}{1.1}
\begin{tabular}{|>{\centering}p{.3in}|>{\centering}p{.15in}|>{\centering}p{1.1in}>{\centering}p{1.1in}|>{\centering}p{.758in}>{\centering\arraybackslash}p{.75in}|}\hline
 \multirow{2}{*}{$n$} & \multirow{2}{*}{$k$}         & \multicolumn{2}{c|}{ Gurobi (\# ins. in 5s/10s/300s/MRT)}& \multicolumn{2}{c|}{DA3 (\# ins. in 5s/10s)} \\ 
   &  &  MIS.QUBO2       & MIS.QUBO1        &MIS.QUBO2       & MIS.QUBO1                  \\ \hline
64&	3&	3/3/3/\textbf{0.02}&	3/3/3/0.2&	3/3&	3/3\\
128&	5&	\textbf{5/5/5/0.05}&	3/3/3/1.47&	\textbf{5/5}&	1/2\\
256&	5&	\textbf{5/5/5/0.18}&	2/2/3/12.04&	\textbf{5/5}&	0/0\\
512&	5&	\textbf{3/4/5/14.37}&	0/1/3/104.3&	\textbf{5/5}&	0/0\\
1024&	5&	\textbf{2/2/3/102.29}&	0/0/2/132.74&	\textbf{5/5}&	0/0\\
2048&	5&	0/0/\textbf{1/246.87}&	0/0/0/300&	\textbf{5/4}&	--/--\\
4096&	1&	0/0/0/300&	0/0/0/300&	0/0&	--/--\\
\hline
\multicolumn{2}{c|}{Total}&	\textbf{18/19/22/94.82}&	8/9/14/121.53&	\textbf{28/27}&	4/5\\
\hline
\end{tabular}
    \caption{Performance analysis in finding an optimal solution. Each entry presents the number of MIS instances that Gurobi or DA3 finds an optimal solution within the timelimit of 5s, 10s or 300s (Gurobi only). For Gurobi, the mean relative runtime (MRT) is also reported.}
    \label{tab:comparison_alternative1}
\end{subtable}
\begin{subtable}[h]{\textwidth}
\centering
\renewcommand*{\arraystretch}{1.1}
\begin{tabular}{|>{\centering}p{.3in}|>{\centering}p{.15in}|>{\centering}p{1.1in}>{\centering}p{1.1in}|>{\centering}p{.758in}>{\centering\arraybackslash}p{.75in}|}\hline
 \multirow{2}{*}{$n$} & \multirow{2}{*}{$k$}         & \multicolumn{2}{c|}{ Gurobi (\# ins. in 5s/10s/300s)}& \multicolumn{2}{|c|}{DA3 (\# ins. in 5s/10s)} \\ 
   &  &   MIS.QUBO2       & MIS.QUBO1       & MIS.QUBO2       & MIS.QUBO1                 \\ \hline
64&	3&	3/3/3&	3/3/3&	3/3&	3/3\\
128&	5&	5/5/5&	5/5/5&	5/\textbf{5}&	5/4\\
256&	5&	5/5/5&	5/5/5&	\textbf{5/5}&	2/3\\
512&	5&	5/5/5&	5/5/5&	\textbf{5/5}&	0/0\\
1024&	5&	\textbf{5/5}/5&	3/4/5&	\textbf{5/5}&	0/0\\
2048&	5&	\textbf{5/5}/5&	0/1/5&	\textbf{5/5}&	--/--\\
4096&	1&	\textbf{1/1}/1&	0/0/1&	\textbf{1/1}&	--/--\\
\hline
\multicolumn{2}{|c|}{Total}&	\textbf{29/29}/29&	21/23/29&	\textbf{29/29}&	10/10\\
\hline
\end{tabular}
    \caption{Performance analysis in finding a feasible solution. Each entry presents the number of MIS instances that Gurobi or DA3 finds a feasible solution within the timelimit of 5s, 10s or 300s (Gurobi only).}
    \label{tab:comparison_alternative2}
\end{subtable}
\begin{subtable}[h]{\textwidth}
\centering
\renewcommand*{\arraystretch}{1.1}
\begin{tabular}{|>{\centering}p{.3in}|>{\centering}p{.15in}|>{\centering}p{1.1in}>{\centering}p{1.1in}|>{\centering}p{.75in}>{\centering\arraybackslash}p{.75in}|}\hline
 \multirow{2}{*}{$n$} & \multirow{2}{*}{$k$}         & \multicolumn{2}{c|}{ Gurobi (opt. gap \% in 5s/10s/300s)}& \multicolumn{2}{c|}{DA3 (opt. gap \% in 5s/10s)} \\ 
   &  &   MIS.QUBO2       & MIS.QUBO1       & MIS.QUBO2       & MIS.QUBO1                   \\ \hline
64&	    3&	    0/0/0              &	0/0/0&	0/0&	0/0\\
128&	5&	\textbf{0/0/0}         &	40/40/40&	\textbf{0/0}&	9.3/4.5\\
256&	5&	\textbf{0/0/0}         &	43/43/40&	\textbf{0/0}&	12/17\\
512&	5&	\textbf{3.7/1.9/0}     & 	49/47/40&	\textbf{0/0}&	--/--\\
1024&	5&	\textbf{1.0/7.8/3.6}   &	69/56/45&	\textbf{0/0}&	--/--\\
$2048$&	5&	\textbf{13/13/7.7}     &	--/27/12&	\textbf{0/0.1}&	--/--\\
$4096$&	1&	\textbf{22/22/15}      &	--/--/18&	\textbf{5.8/6.7}&	--/--\\
\hline
\multicolumn{2}{|c|}{Average}&	\textbf{5.4/4.7/2.5}   &	44/38/31&	\textbf{0.2/0.3}&	8.1/8.5\\
\hline
\end{tabular}    
\caption{Performance analysis in solution optimality gap. Each entry presents the average optimality gap of the solutions found by Gurobi and DA3 within the timelimit of 5s, 10s or 300s (Gurobi only). Let $z^*$ denote the optimal objective value, and $\overline{z}$ denote the objective value of the best-found solution. The optimality gap is computed as  $\text{opt. gap} = 100\%\cdot(z^*-\overline{z})/z^*$.}

    \label{tab:comparison_alternative3}
\end{subtable}
\caption{Summary of MIS experimental results.}
\label{tab:MIS_res}
\end{table}

\paragraph{Evaluation Metrics}
To compare the performance of the two QUBO models, we consider three metrics: the number of instances for which an optimal solution is found (but not necessarily proved), the number of instances for which a feasible solution is found, and the optimality gap with respect to an optimal solution.  For DA, we report these statistics after 5s and 10s. For Gurobi, we report these statistics after 5s, 10s, and 300s. In addition, we also report the mean runtime for Gurobi. All computational results are summarized in Table \ref{tab:MIS_res}.

Note, ``-'' in Table \ref{tab:MIS_res} means not applicable, and it occurs under two scenarios. First, the solver (Gurobi or DA3) finds no feasible solutions, and consequently, we cannot compute the mean optimality gap (see Gurobi MIS.QUBO1 results for $n=2048, 4096$ and DA3 MIS.QUBO1 results for $n=512, 1024$). Secondly, the solver failed to complete the experiment. The second scenario only occurs to DA3, which is caused by a memory error on the DA3 solver (see DA3 MIS.QUBO1 results for $n\ge 2048$ in Table \ref{tab:comparison_alternative1} - \ref{tab:comparison_alternative3}).


\paragraph{Comparison of MIS.QUBO2 and MIS.QUBO1}

As expected based on model size, MIS.QUBO2 outperforms MIS.QUBO1 for both solvers. Both Gurobi and DA3 are able to find feasible solutions for MIS.QUBO2 within 5 seconds. The performance of Gurobi and DA3 decline when solving MIS.QUBO2, especially as DA3 only finds feasible solutions for 10 out 29 instances. In terms of the solution quality, when solving MIS.QUBO2, Gurobi and DA3 find 18 and 28 optimal solutions, respectively. While with the MIS.QUBO1, Gurobi and DA3 only find an optimal solution for 8 and 4 instances within 5 seconds. A similar difference in performance can be observed after 10 seconds (and 300s for Gurobi). For those instances where optimal solutions were not found, the average optimality gap of MIS.QUBO2 model is significantly smaller than MIS.QUBO1 on both Gurobi and DA3. Lastly, we observe the mean runtime for Gurobi to solve MIS.QUBO2 is significantly shorter than MIS.QUBO1, especially for small dimensions (\ie $n\le 512$). 

\if 0
As the problem dimensions increase, such a performance superiority might be less significant and even vanish (as the MIS.QUBO2 model for large instances might also be intractable for Gurobi). 
In our experiments, we notice that this performance superiority in Gurobi MRT seems to ``vanish'' for $n\ge 2048$. One potential reason is that our Gurobi experiment terminates too early. We want to emphasize that our experiments are not designed to, and thus, do not provide conclusive evidence about when (in terms of problem size) the performance superiority of CQUBO models might ``vanish''. 
\fi

The results confirm our expectations and also the key contributions of this paper, that is, a more compact QUBO model is expected to outperform augmented QUBO models. Notably, such performance superiority is observed on different optimization solvers, further highlighting the effectiveness of MLCTS and the significance of this work.





\if 0
\paragraph{Comparison of Gurobi and DA3}
Here, we further discuss the performance differences between the two very different solvers, Gurobi and DA3.  
Regarding the ability to find feasible solutions, both Gurobi and DA3 find feasible solutions for all instances with MIS.QUBO2. However, when solving the MIS.QUBO1 models, Gurobi can still find feasible solutions for 21 and 23 instances within 5s and 10s, respectively, while DA3 can only find feasible solutions for 10 instances.  In terms of the ability to find an optimal solution, when solving MIS.QUBO2 and MIS.QUBO1, Gurobi finds an optimal solution for 18/19 and 8/9 instances within 5s/10s. On the other hand, DA3 is able to find optimal solutions for 28/27 instances with MIS.QUBO2, but only 4/5 instances with MIS.QUBO1 within 5s/10s. Overall, we observe that DA3 outperforms Gurobi in larger instances with MIS.QUBO2 model in terms of the ability to find an optimal solution and the optimality gap of feasible solutions, such an advantage is evident for $n\ge 512$. While for the lower dimensional problems, Gurobi obtains notably better results when solving MIS.QUBO1, especially for $n\le 216$.

Additionally, we noticed that DA3 results for 5 s and 10 s are almost identical. This is consistent with the characteristics of DA3, as a heuristic solver, aims to find high-quality solutions fast and does not necessarily benefit from long runtime. More importantly, we also notice that the performance difference between MIS.QUBO2 and MIS.QUBO1 is more significant on DA3. This suggests that DA3 is more sensitive to the formulation of QUBO models.

Overall, DA3 shows competitive performance compared to Gurobi in terms of both solution quality and runtime. Consistent with the prior works, our experiments also show the superiority of specialized solvers over general-purpose solvers. 
However, one must also notice that DA3 failed to solve MIS.QUBO1 models on large instances (\ie $n\ge2048)$ due to memory errors which did not occur in Gurobi experiments. This shows the limitation of QUBO-specific solvers such as DA3 in solving large-size problems. Such a limitation further highlights the need to derive compact QUBO models and the importance of this work.

\fi

\section{Discussion}\label{sec: discussion}
We demonstrate that by utilizing the concept of constraint levelness, one may effectively derive more compact valid infeasibility penalties and form more compact QUBO models. We propose a novel approach, the \emph{Multilevel Constraint Transformation Scheme} (MLCTS), to reformulate an arbitrary BLP into a QUBO.  

All the work presented above focused on BLP. One natural but not necessarily easy extension would be to generalize MLCTS to derive QUBO models from an arbitrary BOP model. For the rest of this section, we discuss some ideas on this extension and illustrate them through a small example. 

\paragraph{Generalizing MLCTS to the BOP-to-QUBO reformulation}


In Section \ref{sec: MLCs}, we focused on linear constraints (\ie \eqref{eq: LBP ineq constr} and \eqref{eq: LBP eq constr}). By replacing $\ba^\top \bx$ by $h(\bx)$ (\eg $\underline{b}\le \ba^\top\bx \le \overline{b} \Longrightarrow \underline{b}\le h(\bx) \le \overline{b},\: h(\bx) = \ba^\top \bx $), we derived and proved a set of Lemmas and Corollaries in Section \ref{sec: MLCs} which can be used for deriving VIPs. It is easy to see that, given a $h(\bx)$ in any polynomial form, all those Lemmas and Corollaries are still valid, suggesting that MLCTS is applicable to arbitrary BOP models in the form of \eqref{eq: BOP objective}-\eqref{eq: BOP ineq constr}.  

However, the sufficient conditions for the existence of a compact QUBO formulation are slightly different in this case. More specifically, in Eq. \eqref{eq: existence-of-cqubo-degree-k} and \eqref{eq: existence-of-cqubo-degree-k'}, 
we replace the formula $\kappa =k+1$ and $\kappa_i =k_i+1$ to $\kappa =k+d$ and $\kappa_i =k_i+d_i$, respectively. $d$ is the degree of the polynomial $h(\bx)$,  $d_i$ is the degree of the polynomial $h_i(\bx)$, $i\in \overline{1,m'}$. 

For demonstration, we focus on the following quadratic equality constraint and illustrate two approaches to derive the corresponding QVIP using MLCTS.


\begin{equation}\label{eq: yij=xixj}
h(\bx,\by)=x_ix_j- y_{ij}=0.
\end{equation}

\noindent \textit{Approach 1.} \noindent We first notice that $h(\bx,\by)\in H^{1,1}=\{0\}\subset H=\{-1,0,1\}$, and thus, this equality constraint 
is one-level. According to Cor.~\ref{Cor: 3}, a corresponding VIP can be derived using \eqref{eq: 1-leve vIP} with $i=1$ 

\begin{align*}
P(1,\bx,\by)=(h(\bx,\:\by)-h_{1})^2= h^2(\bx,\:\by)=&\:(x_ix_j- y_{ij})^2 \\
\underset{TR3(1)}\Longrightarrow =&\: x_ix_j-2x_ix_jy_{ij} +y_{ij}. \label{eq: 1-leve vIP1}
\end{align*}

\noindent This VIP $P(1,\bx,\by)$ is cubic and requires a single ancillary binary variable for quadratization (see TR4 in Section \ref{ssec:TR0-TR4}). \\

\noindent \textit{Approach 2.}  The constraint \eqref{eq: yij=xixj} allows quadratization
in its original decision space through a Rosenberg VIP, which can also be derived using MLCTS as follows. We convert the BOP to an equivalent BLP and then apply MLCTS to derive the corresponding (augmented) QVIP. 
Given a polynomial nonlinear constraint(s) involving variables $\{x_i\}_{i\in I}$, where $I\subset \overline{1,n}$ and $|I|$ does not depend on $n$, the general procedure is as follows:
\begin{enumerate}
\item From $\bbB^{|I|}$, write out the feasible domain $E$ of the  constraint(s);
\item Derive linear constraints describing the polytope $P= conv\: E$;
\item Replace the nonlinear constraint(s) by the analytic description of $P$;
\item Using MLCTS, derive an (augmented) QVIP for each of the linear constraints different from the box constraints $0\le x_i\le 1, i\in I$;
\item Combine all the penalties with positive parameters and form the penalty function for a QUBO model.
\end{enumerate}

        
\noindent We illustrate this process on \eqref{eq: yij=xixj}. 
This constraint involves three binary variables, hence $|I|=3$. Additionally,          
$E=\{(x_i,x_j,y_{ij})\in \bbB^3: (x_i,x_j,y_{ij})\text{ satisfies }\eqref{eq: yij=xixj}\}=\{(0,0,0),(0,1,0),(1,0,0),(1,1,1)\}$, it is a vertex set of a simplex.
\begin{equation*}\label{eq: P simplex 0}
P:\:  y_{ij}\ge 0,\;
 y_{ij}-x_i\le 0,\:
y_{ij}-x_j\le 0,\:
x_i+x_j-y_{ij}\le 1.
\end{equation*} 


\noindent The last inequality can be written as $-1 \le x_i+x_j-y_{ij}\le 1$ and further refined to $0\le x_i+x_j-y_{ij}\le 1$ as the only binary vector $(x_i,x_j,y_{ij})=(0,0,1)$ with $x_i+x_j-y_{ij}=-1$ does not belong to $P$. 
Thus, $P$ can be effectively defined by the following four 2-level regular constraints:
\begin{equation*}\label{eq: P simplex}
0\le y_{ij}\le 1,\;
-1\le y_{ij}-x_i\le 0,\;
-1\le y_{ij}-x_j\le 0,\;
0\le x_i+x_j-y_{ij}\le 1.
\end{equation*} 


\noindent Applying \eqref{eq: QVIP reg} to the last three constraints with $(j=-1,\: I^+=\{(i,j)\},\:I^-=\{i\})$,  $(j=-1,\: I^+=\{(i,j)\},\:I^-=\{j\})$ and  $(j=0,\: I^+=\{(i,j)\},\:I^-=\{i,j\})$, respectively
we come to a family of QVIPs for violating the constraint $(x_i, x_j , y_{ij})\in E $. Combining them with associated penalty parameters $\lambda_1,\lambda_2,\lambda_3$, we obtain the following penalty function.
\begin{eqnarray*}
P_{\lambda}(x_i,x_j,y_{ij})=\lambda_1(-x_iy_{ij}+y_{ij})+\lambda_2(-x_jy_{ij}+y_{ij})+\lambda_3(x_ix_j-x_iy_{ij}-x_jy_{ij}+y_{ij}),
\end{eqnarray*}
where $\mathbf{\lambda}=(\lambda_1,\lambda_2,\lambda_3)\in \bbR^3_{>0}$.
In particular, $P_{(1,1,1)}(x_i,x_j,y_{ij})=x_ix_j-2x_iy_{ij}-2x_jy_{ij}+3y_{ij}$
is exactly the Rosenberg penalty (see \eqref{eq: Rosen}) for the constraint 
\eqref{eq: yij=xixj}.

Note that this BOP-to-BLP transformation scheme can be applied to those BOPs where $f_i(\bx),\: i\in \overline{1,m'}$ in \eqref{eq: BOP eq constr} and \eqref{eq: BOP ineq constr} are arbitrary functions, not necessarily polynomial. This extends the area of application of MLCTS significantly.

Several questions remain open, such as identifying tight bounds  $\underline{b},\overline{b}$ of a function $h(\bx)$ on a feasible domain of a BOP and searching for all values of the function on the domain. Another critical issue is that, if the number of $h(\bx)$-values is exponential in the BOP dimension, instead of brute-force enumeration, a more effective approach (\eg a polynomial algorithm) for the quadratization of the constraint is needed. Conclusive answers to these questions are beyond the scope of this paper, but they merit further exploration in future work.


\section{Conclusion} \label{sec: conclusion}

In connection with recent successes in quantum computing, especially the development of quantum hardware, the issue of expanding the applications of Quadratic Unconstrained Optimization Problems (QUBO) is becoming increasingly relevant. 
%
Conventionally, such research focuses on building or improving QUBO models for a specific class of combinatorial optimization problems. In contrast, we study the ways of reformulating the general binary optimization problem (BOP) as a QUBO and propose a novel QUBO modelling scheme, the Multilevel Constraint Transformation Scheme (MLCTS). 

Utilizing the concept of constraint levelness, MLCTS provides a general-purpose BLP-to-QUBO reformulation and reduces the number of ancillary binary variables by at least the number of inequality constraints. We formally specify the use of MLCTS on BLP models and derive sufficient conditions for the existence of a compact QUBO formulation (Section \ref{sec: MLCTS}). The generalization of MLCTS to BOP models are also outlined along with an illustration (Section \ref{sec: discussion}).
%


MLCTS is successfully applied on synthetic numerical examples (Section \ref{sec: MLCTS developing}) and well-known combinatorial optimization problems, Maximum 2-Satisfibility Problem (Max2Sat), Linear Ordering Problem (LOP), Community Detection Problem (CDP) and the Maximum Independence Set Problem (MIS) (Section \ref{sec: applied problems}). For proof-of-concept, we conduct computational experiments to evaluate the performance of two QUBO models of MIS problems using a general-purpose solver, Gurobi and a QUBO-specialized solver, the Fujitsu Digital Annealer. As shown in the computational results, the compact QUBO model derived by MLCTS outperforms the augmented QUBO model derived from the conventional transformation scheme on both solvers.

\bibliographystyle{unsrtnat}
\bibliography{reference}  

\begin{thebibliography}{44}
\providecommand{\natexlab}[1]{#1}
\providecommand{\url}[1]{\texttt{#1}}
\expandafter\ifx\csname urlstyle\endcsname\relax
  \providecommand{\doi}[1]{doi: #1}\else
  \providecommand{\doi}{doi: \begingroup \urlstyle{rm}\Url}\fi

\bibitem[Fu and Anderson(1986)]{fu1986}
Yaotian Fu and Philip~W Anderson.
\newblock Application of statistical mechanics to np-complete problems in combinatorial optimisation.
\newblock \emph{Journal of Physics A: Mathematical and General}, 19\penalty0 (9):\penalty0 1605, 1986.

\bibitem[Cohen et~al.(2020{\natexlab{a}})Cohen, Senderovich, and Beck]{Cohen20}
Eldan Cohen, Arik Senderovich, and J.~Christopher Beck.
\newblock An ising framework for constrained clustering on special purpose hardware.
\newblock In Emmanuel Hebrard and Nysret Musliu, editors, \emph{Integration of Constraint Programming, Artificial Intelligence, and Operations Research - 17th International Conference, {CPAIOR} 2020, Vienna, Austria, September 21-24, 2020, Proceedings}, volume 12296 of \emph{Lecture Notes in Computer Science}, pages 130--147, Cham, 2020{\natexlab{a}}. Springer.
\newblock \doi{10.1007/978-3-030-58942-4\_9}.

\bibitem[Sleeman et~al.(2020)Sleeman, Dorband, and Halem]{sleeman2020}
Jennifer Sleeman, John Dorband, and Milton Halem.
\newblock A hybrid quantum enabled rbm advantage: convolutional autoencoders for quantum image compression and generative learning.
\newblock In \emph{Quantum information science, sensing, and computation XII}, volume 11391, pages 23--38. SPIE, 2020.

\bibitem[Willsch et~al.(2020)Willsch, Willsch, De~Raedt, and Michielsen]{willsch2020svm}
Dennis Willsch, Madita Willsch, Hans De~Raedt, and Kristel Michielsen.
\newblock Support vector machines on the d-wave quantum annealer.
\newblock \emph{Computer physics communications}, 248:\penalty0 107006, 2020.

\bibitem[Zhang et~al.(2022)Zhang, {Lo Bianco}, and Beck]{Zhang22}
Jiachen Zhang, Giovanni {Lo Bianco}, and J.~Christopher Beck.
\newblock Solving job-shop scheduling problems with {QUBO}-based specialized hardware.
\newblock In \emph{Proceedings of the Thirty-Second International Conference on Automated Planning and Scheduling, {ICAPS} 2022, Singapore (virtual)}, 2022.

\bibitem[Alidaee et~al.(2008)Alidaee, Kochenberger, Lewis, Lewis, and Wang]{alidaee2008new}
Bahram Alidaee, Gary Kochenberger, Karen Lewis, Mark Lewis, and Haibo Wang.
\newblock A new approach for modeling and solving set packing problems.
\newblock \emph{European Journal of Operational Research}, 186\penalty0 (2):\penalty0 504--512, 2008.

\bibitem[Alidaee et~al.(1994)Alidaee, Kochenberger, and Ahmadian]{alidaee1994}
Bahram Alidaee, Gary~A Kochenberger, and Ahmad Ahmadian.
\newblock 0-1 quadratic programming approach for optimum solutions of two scheduling problems.
\newblock \emph{International Journal of Systems Science}, 25\penalty0 (2):\penalty0 401--408, 1994.

\bibitem[Kochenberger et~al.(2005)Kochenberger, Glover, Alidaee, and Rego]{kochenberger2005}
Gary~A Kochenberger, Fred Glover, Bahram Alidaee, and Cesar Rego.
\newblock An unconstrained quadratic binary programming approach to the vertex coloring problem.
\newblock \emph{Annals of Operations Research}, 139:\penalty0 229--241, 2005.

\bibitem[Feld et~al.(2019)Feld, Roch, Gabor, Seidel, Neukart, Galter, Mauerer, and Linnhoff-Popien]{Feld19}
Sebastian Feld, Christoph Roch, Thomas Gabor, Christian Seidel, Florian Neukart, Isabella Galter, Wolfgang Mauerer, and Claudia Linnhoff-Popien.
\newblock A hybrid solution method for the capacitated vehicle routing problem using a quantum annealer.
\newblock \emph{Frontiers in ICT}, 6, 2019.
\newblock ISSN 2297-198X.
\newblock \doi{10.3389/fict.2019.00013}.
\newblock URL \url{https://www.frontiersin.org/article/10.3389/fict.2019.00013}.

\bibitem[Senderovich et~al.(2022)Senderovich, Zhang, Cohen, and Beck]{Senderovich22a}
A.~Senderovich, J.~Zhang, E.~Cohen, and J.~C. Beck.
\newblock Exploiting hardware and software advances for quadratic models of wind farm layout optimization.
\newblock \emph{IEEE Access}, 10:\penalty0 78044--78055, 2022.

\bibitem[Hammer and Shlifer(1971)]{hammer1971applications}
Peter~L Hammer and Eliezer Shlifer.
\newblock Applications of pseudo-boolean methods to economic problems.
\newblock \emph{Theory and decision}, 1:\penalty0 296--308, 1971.

\bibitem[Cohen et~al.(2020{\natexlab{b}})Cohen, Khan, and Alexander]{cohen2020portfolio}
Jeffrey Cohen, Alex Khan, and Clark Alexander.
\newblock Portfolio optimization of 60 stocks using classical and quantum algorithms.
\newblock \emph{arXiv preprint arXiv:2008.08669}, 2020{\natexlab{b}}.

\bibitem[Elsokkary et~al.(2017)Elsokkary, Khan, La~Torre, Humble, and Gottlieb]{elsokkary2017financialprofolio}
Nada Elsokkary, Faisal~Shah Khan, Davide La~Torre, Travis~S Humble, and Joel Gottlieb.
\newblock Financial portfolio management using d-wave quantum optimizer: The case of abu dhabi securities exchange.
\newblock Technical report, Oak Ridge National Lab.(ORNL), Oak Ridge, TN (United States), 2017.

\bibitem[Laughhunn(1970)]{laughhunn1970quadratic}
DJ~Laughhunn.
\newblock Quadratic binary programming with application to capital-budgeting problems.
\newblock \emph{Operations research}, 18\penalty0 (3):\penalty0 454--461, 1970.

\bibitem[Ajagekar et~al.(2020)Ajagekar, Humble, and You]{Ajagekar2020}
Akshay Ajagekar, Travis Humble, and Fengqi You.
\newblock Quantum computing based hybrid solution strategies for large-scale discrete-continuous optimization problems.
\newblock \emph{Computers \& Chemical Engineering}, 132:\penalty0 106630, 2020.
\newblock ISSN 0098-1354.
\newblock \doi{https://doi.org/10.1016/j.compchemeng.2019.106630}.
\newblock URL \url{https://www.sciencedirect.com/science/article/pii/S0098135419307665}.

\bibitem[Hiroshi et~al.(2021)Hiroshi, Junpei, Noboru, and Toshiyuki]{DA3}
Nakayama Hiroshi, Koyama Junpei, Yoneoka Noboru, and Miyazawa Toshiyuki.
\newblock Description: Third generation digital annealer technology, 2021.
\newblock \url{https://www.fujitsu.com/jp/documents/digitalannealer/researcharticles/DA_WP_EN_20210922.pdf}.

\bibitem[Tran et~al.(2016)Tran, Wang, Do, Rieffel, Frank, O'gorman, Venturelli, and Beck]{tran2016}
T~Tony Tran, Zhihui Wang, Minh Do, Eleanor~G Rieffel, Jeremy Frank, Bryan O'gorman, Davide Venturelli, and J~Christopher Beck.
\newblock Explorations of quantum-classical approaches to scheduling a mars lander activity problem.
\newblock In \emph{AAAI-16 Workshop on Planning for Hybrid Systems}, number ARC-E-DAA-TN28199, 2016.

\bibitem[Venturelli and Kondratyev(2019)]{Venturelli19}
Davide Venturelli and Alexei Kondratyev.
\newblock Reverse quantum annealing approach to portfolio optimization problems.
\newblock \emph{Quantum Mach. Intell.}, 1\penalty0 (1-2):\penalty0 17--30, 2019.
\newblock \doi{10.1007/s42484-019-00001-w}.
\newblock URL \url{https://doi.org/10.1007/s42484-019-00001-w}.

\bibitem[Punnen(2022)]{punnen_quadratic_2022}
Abraham~P. Punnen.
\newblock \emph{The quadratic unconstrained binary optimization problem: theory, algorithms, and applications}.
\newblock Springer, Cham, Switzerland, 1st edition, 2022.
\newblock ISBN 978-3-031-04520-2.

\bibitem[Domino et~al.(2022)Domino, Kundu, Salehi, and Krawiec]{domino_quadratic_2022}
Krzysztof Domino, Akash Kundu, Ozlem Salehi, and Krzysztof Krawiec.
\newblock Quadratic and higher-order unconstrained binary optimization of railway rescheduling for quantum computing.
\newblock \emph{Quantum Information Processing}, 21\penalty0 (9):\penalty0 337, 2022.
\newblock ISSN 1573-1332.
\newblock \doi{10.1007/s11128-022-03670-y}.

\bibitem[Verma et~al.(2021)Verma, Lewis, and Kochenberger]{verma_efficient_2021}
Amit Verma, Mark Lewis, and Gary Kochenberger.
\newblock Efficient {QUBO} transformation for higher degree pseudo boolean functions, 2021.

\bibitem[Garey and Johnson(1979)]{garey_computers_1979}
M.~R. Garey and D.~S. Johnson.
\newblock \emph{Computers and Intractability: A Guide to the Theory of Np-Completeness}.
\newblock W. H. Freeman, New York u.a, 1979.
\newblock ISBN 978-0-7167-1045-5.

\bibitem[{Gurobi Optimization, LLC}(2022)]{gurobi}
{Gurobi Optimization, LLC}.
\newblock {Gurobi Optimizer Reference Manual}, 2022.
\newblock URL \url{https://www.gurobi.com}.

\bibitem[Glover et~al.(2018)Glover, Kochenberger, and Du]{glover2018}
Fred Glover, Gary Kochenberger, and Yu~Du.
\newblock A tutorial on formulating and using qubo models.
\newblock \emph{arXiv preprint arXiv:1811.11538}, 2018.

\bibitem[Lewis et~al.(2009)Lewis, Alidaee, Glover, and Kochenberger]{lewis_note_2009}
Mark Lewis, Bahram Alidaee, Fred Glover, and Gary Kochenberger.
\newblock A note on xqx as a modelling and solution framework for the linear ordering problem.
\newblock \emph{International Journal of Operational Research}, 5\penalty0 (2):\penalty0 152--162, 2009.

\bibitem[Ayodele(2022)]{ayodele2022}
Mayowa Ayodele.
\newblock Penalty weights in qubo formulations: Permutation problems.
\newblock In \emph{Evolutionary Computation in Combinatorial Optimization: 22nd European Conference, EvoCOP 2022, Held as Part of EvoStar 2022, Madrid, Spain, April 20--22, 2022, Proceedings}, pages 159--174. Springer, 2022.

\bibitem[Glover et~al.(2022)Glover, Kochenberger, Hennig, and Du]{glover_quantum_2022}
Fred Glover, Gary Kochenberger, Rick Hennig, and Yu~Du.
\newblock Quantum bridge analytics i: a tutorial on formulating and using qubo models.
\newblock \emph{Annals of Operations Research}, 314\penalty0 (1):\penalty0 141--183, 2022.

\bibitem[Quintero et~al.(2022)Quintero, Bernal, Terlaky, and Zuluaga]{quintero_characterization_2022}
Rodolfo Quintero, David Bernal, Tamás Terlaky, and Luis~F. Zuluaga.
\newblock Characterization of {QUBO} reformulations for the maximum k-colorable subgraph problem.
\newblock 21\penalty0 (3):\penalty0 89, 2022.
\newblock ISSN 1570-0755, 1573-1332.
\newblock \doi{10.1007/s11128-022-03421-z}.

\bibitem[Kochenberger et~al.(2004)Kochenberger, Glover, Alidaee, and Rego]{kochenberger_unified_2004}
Gary~A. Kochenberger, Fred Glover, Bahram Alidaee, and Cesar Rego.
\newblock A unified modeling and solution framework for combinatorial optimization problems.
\newblock \emph{{OR} Spectrum}, 26\penalty0 (2):\penalty0 237--250, 2004.
\newblock ISSN 1436-6304.
\newblock \doi{10.1007/s00291-003-0153-3}.

\bibitem[Pisaruk(2019)]{pisaruk_mixed_2019}
Nikolai Pisaruk.
\newblock \emph{Mixed Integer Programming: Models and Methods}.
\newblock Belarus State University, Minsk, 2019.
\newblock URL \url{https://www.researchgate.net/publication/332859832_Mixed_Integer_Programming_Models_and_Methods}.

\bibitem[Verma and Lewis(2022)]{verma_penalty_2022}
Amit Verma and Mark Lewis.
\newblock Penalty and partitioning techniques to improve performance of {QUBO} solvers.
\newblock 44:\penalty0 100594, 2022.
\newblock ISSN 1572-5286.
\newblock \doi{10.1016/j.disopt.2020.100594}.

\bibitem[Boros et~al.(2020)Boros, Crama, and Rodríguez-Heck]{boros_compact_2020}
Endre Boros, Yves Crama, and Elisabeth Rodríguez-Heck.
\newblock Compact quadratizations for pseudo-boolean functions.
\newblock 39\penalty0 (3):\penalty0 687--707, 2020.
\newblock ISSN 1573-2886.
\newblock \doi{10.1007/s10878-019-00511-0}.

\bibitem[Rosenberg(1975)]{rosenberg_reduction_1975}
I.~G. Rosenberg.
\newblock Reduction of bivalent maximization to the quadratic case.
\newblock \emph{Cahiers du Centre d’Etudes de Recherche Operationnelle}, \penalty0 (17):\penalty0 71--74, 1975.

\bibitem[Mulero-Martínez(2021)]{mulero-martinez_polynomial-time_2021}
Juan~Ignacio Mulero-Martínez.
\newblock A polynomial-time algorithm for unconstrained binary quadratic optimization, 2021.
\newblock URL \url{http://arxiv.org/abs/2005.07030}.

\bibitem[Freedman and Drineas(2005)]{freedman_energy_2005}
D.~Freedman and P.~Drineas.
\newblock Energy minimization via graph cuts: settling what is possible.
\newblock In \emph{2005 {IEEE} Computer Society Conference on Computer Vision and Pattern Recognition ({CVPR}'05)}, volume~2, pages 939--946 vol. 2, 2005.
\newblock \doi{10.1109/CVPR.2005.143}.
\newblock {ISSN}: 1063-6919.

\bibitem[Ishikawa(2011)]{ishikawa_transformation_2011}
Hiroshi Ishikawa.
\newblock Transformation of general binary {MRF} minimization to the first-order case.
\newblock \emph{{IEEE} Transactions on Pattern Analysis and Machine Intelligence}, 33\penalty0 (6):\penalty0 1234--1249, 2011.
\newblock ISSN 1939-3539.
\newblock \doi{10.1109/TPAMI.2010.91}.

\bibitem[Anthony et~al.(2017)Anthony, Boros, Crama, and Gruber]{anthony_quadratic_2017f}
Martin Anthony, Endre Boros, Yves Crama, and Aritanan Gruber.
\newblock Quadratic reformulations of nonlinear binary optimization problems.
\newblock 162\penalty0 (1):\penalty0 115--144, 2017.
\newblock ISSN 0025-5610, 1436-4646.
\newblock \doi{10.1007/s10107-016-1032-4}.

\bibitem[Pferschy and Schauer(2017)]{pferschy_approximation_2017}
Ulrich Pferschy and Joachim Schauer.
\newblock Approximation of knapsack problems with conflict and forcing graphs.
\newblock \emph{Journal of Combinatorial Optimization}, 33\penalty0 (4):\penalty0 1300--1323, 2017.
\newblock ISSN 1573-2886.
\newblock \doi{10.1007/s10878-016-0035-7}.

\bibitem[Laguna et~al.(1999)Laguna, Marti, and Campos]{laguna_intensification_1999}
Manuel Laguna, Rafael Marti, and Vicente Campos.
\newblock Intensification and diversification with elite tabu search solutions for the linear ordering problem.
\newblock \emph{Computers \& Operations Research}, 26\penalty0 (12):\penalty0 1217--1230, 1999.
\newblock ISSN 03050548.
\newblock \doi{10.1016/S0305-0548(98)00104-X}.

\bibitem[Grötschel et~al.(1984)Grötschel, Jünger, and Reinelt]{grotschel_cutting_1984}
Martin Grötschel, Michael Jünger, and Gerhard Reinelt.
\newblock A cutting plane algorithm for the linear ordering problem.
\newblock \emph{Operations Research}, 32\penalty0 (6):\penalty0 1195--1220, 1984.
\newblock ISSN 0030-364X.
\newblock URL \url{https://www.jstor.org/stable/170944}.
\newblock Publisher: {INFORMS}.

\bibitem[Agarwal and Kempe(2008)]{agarwal_modularity-maximizing_2008}
G.~Agarwal and D.~Kempe.
\newblock Modularity-maximizing graph communities via mathematical programming.
\newblock \emph{The European Physical Journal B}, 66\penalty0 (3):\penalty0 409--418, 2008.
\newblock ISSN 1434-6036.
\newblock \doi{10.1140/epjb/e2008-00425-1}.

\bibitem[noa()]{noauthor_dimacs_nodate}
{DIMACS} implementation challenges.
\newblock URL \url{http://archive.dimacs.rutgers.edu/Challenges/}.

\bibitem[Pardalos and Xue(1994)]{pardalos_maximum_1994}
Panos Pardalos and Jue Xue.
\newblock The maximum clique problem.
\newblock \emph{Journal of Global Optimization}, 4:\penalty0 301--328, 1994.
\newblock \doi{10.1007/BF01098364}.

\bibitem[Oshiyama and Ohzeki(2022)]{oshiyama_benchmark_2022}
Hiroki Oshiyama and Masayuki Ohzeki.
\newblock Benchmark of quantum-inspired heuristic solvers for quadratic unconstrained binary optimization.
\newblock \emph{Scientific Reports}, 12\penalty0 (1):\penalty0 2146, 2022.
\newblock ISSN 2045-2322.
\newblock \doi{10.1038/s41598-022-06070-5}.
\newblock Number: 1 Publisher: Nature Publishing Group.

\end{thebibliography}

\clearpage
\section{Appendix - Illustration of Transportation 6}\label{sec:appendix-higher-level}
\paragraph{Transportation 6 Variant 1}
Here, we first illustrate the Transportation 6 variant 1 through an exmaple with $k=3$.

Given \eqref{eq: phi1,2}, \eqref{eq: P1,2}, we have
\begin{align*}
    &\left\{\begin{aligned}
        \varphi_1\left( \bx,\bs \right) =& h_0(\bx)-h^0_3s_3, \\
     \varphi_2\left( \bx,\bs \right) =& h_0(\bx)+(1-h^0_3)s_3\\
    \end{aligned}\right. \\
    \Longrightarrow&\;\left\{\begin{aligned}
    P^1\left( \bx,\bs \right) =& (h_0(\bx)-h^0_3s_3)(h_0(\bx)-h^0_3s_3-1)\\
    P^2\left( \bx,\bs \right) =& (h_0(\bx)+(1-h^0_3)s_3)(h_0(\bx)+(1-h^0_3)s_3-1)
    \end{aligned}\right.
\end{align*}

We then apply \eqref{eq: (1.1)}.
\begin{align}
P(\bx,\bs) =&\; (h_0(\bx)-h^0_3s_3)(h_0(\bx)-h^0_3s_3-1) \nonumber\\
            &\; +(h_0(\bx)+(1-h^0_3)s_3)(h_0(\bx)+(1-h^0_3)s_3-1)\nonumber\\
\underset{TR3(2)}\Longrightarrow\; P(\bx,\bs) =&\; (h^0_3)^2s_3-2h_0(\bx)h^0_3s_3+h_0(\bx)^2+h_0(\bx)s_3-h_0(\bx)\nonumber
\\
P(\bx,\bs) =&\;h_0(\bx)^2+h_0(\bx)(s_3-2h^0_3s_3-1)+(h^0_3)^2s_3.\label{eq: Tr5.1n=3}
\end{align}

\paragraph{Transportation 6 Variant 2}

Here, we first illustrate the Transportation 6 variant 2 through an example with $k=3$.
%
With $k=3$, we have $n'=\left\lceil ln_23 \right\rceil=2 $, $a=(3-1)-2^0=1$ and  ${{\varphi }_{3}}\left( \bx,\bs \right)={{h}_{0}}\left( \bx \right)+s_1$. Therefore, 
\begin{align}
{{P}_{{3}}}\left( \bx,\bs \right)=&\;\left( {{\varphi }_{3}}\left( \bx,\bs \right)-3+1 \right)\left( {{\varphi }_{3}}\left( \bx,\bs \right)-3+2 \right)\nonumber\\
=&\;({{h}_{0}}\left( \bx \right)+s_1-2)({{h}_{0}}\left( \bx \right)+s_1-1)\nonumber\\
\underset{TR3(1)}\Longrightarrow {{P}_{{3}}}=&\;(h_0(\bx))^2+h_0(\bx)(2s_1-3)-3s_1+2\label{eq: P TR5.2}
\end{align}

\section{Appendix - MLCTS Illustration} \label{sec:MLCTS-illustration}
\newcommand{\midsepremove}{\aboverulesep = 0mm \belowrulesep = 0mm}  \newcommand{\midsepdefault}{\aboverulesep = 0.605mm \belowrulesep = 0.984mm}

\midsepremove

Consider the following constraint, we demonstrate different ways of deriving VIPs with MLCTS.
\begin{align}
-1\le x_1+x_2-x_3-2x_4\le 1 \label{eq: ex2 constr}
\end{align}

\noindent Consequently, we have $h(\bx)=x_1+x_2-x_3-2x_4$, $H=\{-2,-1,0,1,2\}$, $K=3$, $h(\bx)\in H^{2,3}=\{-1,0,1\}$, i.e. $i=2$, $k=3$. Lastly, $[\underline{b},\overline{b}]\subset (h_1, h_K)$, that is, \eqref{eq: 2-sided} is fulfilled.  Hence, we need a quadratization of the two-sided constraint with odd $k$.

Following MLCTS, a polynomial VIP is obtained by \eqref{eq: 2-side,ikj}. With $j=2$, we have
\begin{align}
P(2,\bx) &=\;(h(\bx)-h_2)(h(\bx)-h_3)^2(h(\bx)-h_4) \nonumber\\
=&\;(x_1+x_2-x_3-2  x_4+1)(x_1+x_2-x_3-2  x_4)^2  (x_1+x_2-x_3-2  x_4-1)\\
\underset{TR3(1)}\Longrightarrow P(\bx)&=\;4  x_1  x_2  x_3  x_4-x_1  x_2  x_3-4  x_1  x_3  x_4-4  x_2  x_3  x_4+x_1  x_2-x_1  x_4-x_2  x_4\nonumber \\ &\;+5  x_3  x_4+x_4 \label{eq: cubic P(x)}
\end{align}

One may easily verify that $P(\bx)$ is indeed a VIP for this constraint as it is strictly positive at unfeasible points and zero at all feasible. However, it is a bi-quadratic VIP,  

In the rest of this section, we show four different ways of quadratization of $P(\bx)$  by using MLCTS. We refer to them as Approach 1-Approach 4 ($A_1-A_4$). Also, we use notation $P^{A_i'}(\bx,\bs)$ for the AQVIP built from $P(\bx)$ by approach $A_i$, $i\in \overline{1,4}$. Respectively, $P^{A_i}(\bx)=\min_{\bs\in \bbB^{n'}}P^{A_i'}(\bx,\bs)$ will be a VIP for the constraint \eqref{eq: ex2 constr}. For each approach, we also show the validation in Table~\ref{tab:Approach 1}-\ref{tab:Approach 34} where it is tabulated on the whole search domain $\bbB^4$). In a column $+/-$, $+$ corresponds to feasible binary points of \eqref{eq: ex2 constr} and $-$ corresponds to unfeasible ones. 

\subsection{Approach 1: MLCTS(TR4.1) }

For quadratization of $P(\bx)$, the following two substitution 
\begin{eqnarray}\label{eq: subst}
y_{1,2}=x_1x_2,\; y_{3,4}=x_3x_4
\end{eqnarray} 
 are sufficient. We apply TR4.1 to $ P(\bx)$ and obtain the following.

\begin{eqnarray*}\label{eq: phi(x,y)}
\begin{aligned}
P(\bx) \underset{TR4.1}\Longrightarrow \varphi(\bx,\by)=&\;x_1x_2-x_1x_4-4x_1y_{3,4} \\
&\;-x_2x_4-4x_2y_{3,4}+5x_3x_4-x_3y_{1,2}+4y_{1,2}y_{3,4}+x_4
\end{aligned}
\end{eqnarray*}

$\varphi(\bx,\by)$ is in a lifted space $\bx,\by\in \bbB^6$, where $\by=(y_{1,2},y_{3,4})$. It is easy to check that 
on a search domain $\bbB^6$, $\varphi(\bx,\by)$ takes values of different signs (see Table~\ref{tab:Approach 1} in Appendix).
Along with substitution \eqref{eq: subst}, the following Rosenberg penalties must be incorporated into the objective.
\begin{eqnarray*}
R_{1,2} = x_1 x_2 - 2 x_1 y_{1,2} - 2 x_2 y_{1,2} + 3 y_{1,2}\\
R_{3,4} = x_3x_4 - 2x_3y_{3,4} - 2 x_4y_{3,4} + 3 y_{3,4}
\end{eqnarray*}

There exists a penalty parameter $\lambda^*> 0$ such that, $\forall\lambda\ge \lambda^*$, $P'_\lambda(\bx,\by)=\varphi(\bx,\by)+\lambda(R_{1,2}+R_{3,4})$ is an QAVIP for violating constraint \eqref{eq: ex2 constr}. 
We use $\lambda = 4$, and then obtain a QAVIP whose validity can be easily checked.

\begin{align*}
P^{A_1'}(\bx,\by)=&\; P'_4(\bx,\by) \\
=&\; 5x_1x_2-x_1x_4-8x_1y_{1,2}-4x_1y_{3,4}-x_2x_4-8x_2y_{1,2}-4x_2y_{3,4}+9x_3x_4\\
&\;-x_3y_{1,2}-8x_3y_{3,4}-8x_4y_{3,4}+4y_{1,2}y_{3,4}+x_4+12y_{1,2}+12y_{3,4}
\end{align*}
Given a QAVIP $P^{A_1'}$, we take the minimum over ancillary variables $\by$ and get a corresponding VIP $P^{A_1}(\bx)=\min_{\by\in \bbB^2}P^{A_1'}(\bx,\by)$ (see Table~\ref{tab:Approach 1}). 
 

\begin{table}[htbp]
\centering
\begin{tabular}{
|c|
>{\centering\arraybackslash}p{.13in}
>{\centering\arraybackslash}p{.13in}
>{\centering\arraybackslash}p{.13in}
>{\centering\arraybackslash}p{.13in}
|c|c|
>{\centering\arraybackslash}p{.13in}
>{\centering\arraybackslash}p{.13in}
>{\centering\arraybackslash}p{.13in}
>{\centering\arraybackslash}p{.13in}|
>{\centering\arraybackslash}p{.13in}
>{\centering\arraybackslash}p{.13in}
>{\centering\arraybackslash}p{.13in}
>{\centering\arraybackslash}p{.13in}|
>{\centering\arraybackslash}p{.13in}|}\hline
& \multicolumn{4}{c|}{$\bx$}& &  & & & & & & & & &\\
   \rotatebox[origin=c]{90}{cases}  &$x_1$&$x_2$&$x_3$&$x_4$
    &\rotatebox[origin=c]{90}{(+/-)} 
        &\rotatebox[origin=c]{90}{$P(\bx)$}
    &\rotatebox[origin=c]{90}{$\varphi(\bx,0,0)$}
        &\rotatebox[origin=c]{90}{$\varphi(\bx,1,0)$}
        &\rotatebox[origin=c]{90}{$\varphi(\bx,0,1)$}
        &\rotatebox[origin=c]{90}{$\varphi(\bx,1,1)$}
    &\rotatebox[origin=c]{90}{$P^{A_1'}(\bx,0,0)$}
        &\rotatebox[origin=c]{90}{$P^{A_1'}(\bx,1,0)$}
            &\rotatebox[origin=c]{90}{$P^{A_1'}(\bx,0,1)$}
                &\rotatebox[origin=c]{90}{$P^{A_1'}(\bx,1,1)$}
    &\rotatebox[origin=c]{90}{$P^{A_1}(\bx)$}\\ \hline
1&0&0&0&0&+&0&0&0&0&4&0&12&12&28&0\\
2&1&0&0&0&+&0&0&0&-4&0&0&4&8&16&0\\
3&0&1&0&0&+&0&0&0&-4&0&0&4&8&16&0\\
4&0&0&1&0&+&0&0&-1&0&3&0&11&4&19&0\\
5&1&1&0&0&-&1&1&1&-7&-3&5&1&9&9&1\\
6&1&0&1&0&+&0&0&-1&-4&-1&0&3&0&7&0\\
7&0&1&1&0&+&0&0&-1&-4&-1&0&3&0&7&0\\
8&1&1&1&0&+&0&1&0&-7&-4&5&0&1&0&0\\
9&0&0&0&1&-&1&1&1&1&5&1&13&5&21&1\\
10&1&0&0&1&+&0&0&0&-4&0&0&4&0&8&0\\
11&0&1&0&1&+&0&0&0&-4&0&0&4&0&8&0\\
12&0&0&1&1&-&6&6&5&6&9&10&21&6&21&6\\
13&1&1&0&1&+&0&0&0&-8&-4&4&0&0&0&0\\
14&1&0&1&1&-&1&5&4&1&4&9&12&1&8&1\\
15&0&1&1&1&-&1&5&4&1&4&9&12&1&8&1\\
16&1&1&1&1&+&0&5&4&-3&0&13&8&1&0&0\\
\hline
\end{tabular}
        \caption{Validation of VIPs derived by Approach 1.}
    \label{tab:Approach 1}
\end{table}

\subsection{Approach 2: MLCTS(TR4.2)}

One positive and three negative monomials are present in the bi-quadratic VIP $P(\bx)$. 
%
  For the positive one (\ie $x_1x_2x_3x_4$), $I=\overline{1,4}$ and formulas \eqref{eq: S1,S2} becomes $S_1(I) = \sum_{i=1}^4 x_i$, $S_2(I) = \sum_{i,j=1; i<j}^4 x_ix_j$. Applying formula \eqref{eq: min selection formula2} to this term we have
\begin{eqnarray}\label{eq: TR4.21}
\begin{aligned}
x_1x_2x_3x_4 \;\Longrightarrow\; \min_{s_0}\big\{\varphi_0(\bx,\bs)= 
 & s_0(-2x_1-2x_2-2x_3-2x_4+3)+x_1x_2 \\ &+x_1x_3+x_1x_4+x_2x_3+x_2x_4+x_3x_4\big\}
\end{aligned}
\end{eqnarray}

For the three negative monomials (\ie $-x_1  x_2  x_3,\: -4  x_1  x_3  x_4, \: -4  x_2  x_3  x_4$),  formula \eqref{eq: min selection formulanegative} is used. Let $I=\{i,j,k\}$, then it becomes  $-x_ix_jx_k= -\min_{s \in \bbB} s (x_i+x_j+x_k-2)$. Applying this to the triple monomials in $P(\bx)$, we have
\begin{align}
&-x_1x_2x_3 \Longrightarrow  -\min_{s_1}\{\varphi_1(\bx,\bs)=s_1(x_1+x_2+x_3-2),\label{eq: TR4.22}\\
&-x_1x_3x_4\Longrightarrow -\min_{s_2}\{\varphi_2(\bx,\bs)=s_2(x_1+x_3+x_4-2), \nonumber\\
&-x_2x_3x_4\Longrightarrow -\min_{s_3}\{\varphi_3(\bx,\bs)=s_3(x_2+x_3+x_4-2)\nonumber
\end{align}

Consequently, we have
\begin{align*}
P(\bx)\underset{TR4.2}\Longrightarrow P^{A_2'}(\bx,\bs)=&\;4  \varphi_0(\bx,\bs)+\varphi_1(\bx,\bs)+4\varphi_2(\bx,\bs)+\varphi_3(\bx,\bs)\\
&\;+x_1  x_2-x_1  x_4-x_2  x_4+5  x_3  x_4+x_4\\
=&\;5x_1x_2+3x_1x_4+3x_2x_4+9x_3x_4+x_4+ 4x_1x_3+4x_2x_3\\
&\;+4s_0(-2x_1-2x_2-2x_3-2x_4+3)-s_1(x_1+x_2+x_3-2)\\
&\;-4s_2(x_1+x_3+x_4-2)-4s_3(x_2+x_3+x_4-2)
\end{align*}

Similar to Approach 1, we obtain a VIP $P^{A_2}(\bx)=\min_{\bs\in \bbB^4} P^{A_2'}(\bx,\bs)$ by taking the minimum over ancillary variables $\bs$. Verification of the validity of formulas \eqref{eq: TR4.21}, \eqref{eq: TR4.22} and that $P^{A_2}(\bx)$ is a VIP for our constraint is given in Table~\ref{tab:Approach 2}.

\begin{table}[htbp]
\centering
    \begin{tabular}{
    |c|
    >{\centering\arraybackslash}p{.13in}
    >{\centering\arraybackslash}p{.13in}
    >{\centering\arraybackslash}p{.13in}
    >{\centering\arraybackslash}p{.13in}|c|
    >{\centering\arraybackslash}p{.13in}
    >{\centering\arraybackslash}p{.13in}
    >{\centering\arraybackslash}p{.13in}
    >{\centering\arraybackslash}p{.13in}|
    >{\centering\arraybackslash}p{.13in}
    >{\centering\arraybackslash}p{.13in}
    >{\centering\arraybackslash}p{.13in}
    >{\centering\arraybackslash}p{.13in}|c|}\hline
     & \multicolumn{4}{c|}{$\bx$}& & & & & & & & & &\\
    i &$x_1$&$x_2$&$x_3$&$x_4$& \rotatebox[origin=c]{90}{(+/-)} &\rotatebox[origin=c]{90}{$x_1x_2x_3x_4$}&\rotatebox[origin=c]{90}{$\varphi_0(\bx,0)$}&\rotatebox[origin=c]{90}{$\varphi_0(\bx,1)$}&\rotatebox[origin=c]{90}{$\min_{s_0}\varphi_0(\bx,s_0)$}&\rotatebox[origin=c]{90}{$-x_1x_2x_3$}&\rotatebox[origin=c]{90}{$\varphi_1(\bx,0)$}&\rotatebox[origin=c]{90}{$\varphi_1(\bx,1)$}&\rotatebox[origin=c]{90}{$\min_{s_1}\varphi_1(x,s_1)$}&\rotatebox[origin=c]{90}{$P^{A_2}(\bx)$}\\
     \hline
1&0&0&0&0&+&0&0&3&0&0&0&2&0&0\\
2&1&0&0&0&+&0&0&1&0&0&0&1&0&0\\
3&0&1&0&0&+&0&0&1&0&0&0&1&0&0\\
4&0&0&1&0&+&0&0&1&0&0&0&1&0&0\\
5&1&1&0&0&-&0&1&0&0&0&0&0&0&1\\
6&1&0&1&0&+&0&1&0&0&0&0&0&0&0\\
7&0&1&1&0&+&0&1&0&0&0&0&0&0&0\\
8&1&1&1&0&+&0&3&0&0&-1&0&-1&-1&0\\
9&0&0&0&1&-&0&0&1&0&0&0&2&0&1\\
10&1&0&0&1&+&0&1&0&0&0&0&1&0&0\\
11&0&1&0&1&+&0&1&0&0&0&0&1&0&0\\
12&0&0&1&1&-&0&1&0&0&0&0&1&0&6\\
13&1&1&0&1&+&0&3&0&0&0&0&0&0&0\\
14&1&0&1&1&-&0&3&0&0&0&0&0&0&1\\
15&0&1&1&1&-&0&3&0&0&0&0&0&0&1\\
16&1&1&1&1&+&1&6&1&1&-1&0&-1&-1&0\\
\hline
    \end{tabular}
        \caption{Validation of VIPs derived by Approach 2.}
    \label{tab:Approach 2}
\end{table}

\subsection{Approach 3: MLCTS(TR0.1+TR6.1)}

We first perform normalization as below (see \eqref{H0ik} and \eqref{eq: h0(x)}), and know that $h^0_3=2$

\begin{align*}
&H^{2,3}=\{-1,0,1\} \underset{\eqref{H0ik}}\Longrightarrow H^{0,2,3}=\{0,1,2\}\\
&h(\bx)=x_1+x_2-x_3-2x_4 \underset{\eqref{eq: h0(x)}}\Longrightarrow h_0(\bx)=x_1+x_2-x_3-2x_4+1
\end{align*}

We substitute $h_0(\bx), h^0_3$ into \eqref{eq: Tr5.1n=3}, and obtain the follows.
\begin{eqnarray*}
\begin{aligned}
P(\bx)\underset{TR0.1+TR6.1}\Longrightarrow  P^{A_3'}(\bx,\bs)=&\; 2x_1x_2-2x_1x_3-4x_1x_4-2x_2x_3\\
&\;-4x_2x_++x_3x_4-3s_3x_1-3s_3x_2+3s_3x_3\\
&\;+6s_3x_4+s_3+2x_1+2x_2+2x_4
\end{aligned}
\end{eqnarray*}

$P^{A_3'}(\bx,\bs)$ is a QVIP for \eqref{eq: ex2 constr}. By taking the minimum over ancillary variables $\by$, we obtain a VIP $P^{A_3}(\bx)=\min_{\bs\in \bbB} P^{A_3}(\bx,\bs)$ for this constraint (see Table~\ref{tab:Approach 34}).

\subsection{Approach 4: MLCTS(TR0.2+TR6.2)}
After the same normalization process as the previous approach, we substitute $h_0(\bx), h^0_3$ into \eqref{eq: P TR5.2}, and obtain the following.
\begin{eqnarray*}
\begin{aligned}
P(\bx)\underset{TR0.2n+TR6.2}\Longrightarrow  P^{A_4'}(\bx,\bs)=&\;x_1x_2-x_1x_3-2x_1x_4-x_2x_3-2x_2x_4+2x_3x_4\\
&\;+s_1x_1+s_1x_2-s_1x_3-2s_1x_4+x_3+3x_4
\end{aligned}
\end{eqnarray*}

$P^{A_4'}(\bx,\bs)$ is a QVIP for \eqref{eq: ex2 constr}. Similar to the previous approaches, $P^{A_4}(\bx)=\min_{\bs\in \bbB} P^{A_4'}(\bx,\bs)$ is a VIP for this constraint 
(see Table~\ref{tab:Approach 34}).

In summary, we present four ways to derive QVIPs with our MLCTS. The number $n'$ of ancillary binary variables in the QVIPs $P^{A_1'}(\bx,\bs)-P^{A_4'}(\bx,\bs)$ are $2,4,1,1$, respectively. The approaches $A_3,\: A_4$ associated with transformation TR6 are the best in this regard and utilize a single ancillary variable. Furthermore, we analyze the range of the values  of $P^{A_3}(\bx)$ and $P^{A_4}(\bx)$, which are $\{0,1,2,6\}$ and $\{0,1,3\}$, respectively. The narrowest range corresponds to $A_4$, implying that QVIP $P^{A_4'}(\bx,\bs)$ is better in this regard.

\begin{table}[htbp]
\centering
\begin{tabular}{
|c|
>{\centering\arraybackslash}p{.14in}
>{\centering\arraybackslash}p{.14in}
>{\centering\arraybackslash}p{.14in}
>{\centering\arraybackslash}p{.14in}
|>{\centering\arraybackslash}p{.14in}|
>{\centering\arraybackslash}p{.14in}
>{\centering\arraybackslash}p{.14in}
>{\centering\arraybackslash}p{.14in}|
>{\centering\arraybackslash}p{.14in}
>{\centering\arraybackslash}p{.14in}
>{\centering\arraybackslash}p{.14in}|}\hline
& \multicolumn{4}{c|}{$\bx$}
& 
& \multicolumn{3}{c|}{Approach 3} 
& \multicolumn{3}{c|}{Approach 4}\\[5pt] 
  i  & $x_1$ & $x_2$ & $x_3$ & $x_4$ & \rotatebox[origin=c]{90}{(+/-)}
    &\rotatebox[origin=c]{90}{$P^{A_3'}(\bx,0)$}
        &\rotatebox[origin=c]{90}{$P^{A_3'}(\bx,1)$}
        &\rotatebox[origin=c]{90}{$P^{A_3}(\bx)$}
    &\rotatebox[origin=c]{90}{$P^{A_4'}(\bx,0)$}
        &\rotatebox[origin=c]{90}{$P^{A_4'}(\bx,1)$}
        &\rotatebox[origin=c]{90}{$P^{A_4}(\bx)$}\\
      \hline
1&0&0&0&0&+&0&1&0&0&0&0\\
2&1&0&0&0&+&2&0&0&0&1&0\\
3&0&1&0&0&+&2&0&0&0&1&0\\
4&0&0&1&0&+&0&4&0&1&0&0\\
5&1&1&0&0&-&6&1&1&1&3&1\\
6&1&0&1&0&+&0&1&0&0&0&0\\
7&0&1&1&0&+&0&1&0&0&0&0\\
8&1&1&1&0&+&2&0&0&0&1&0\\
9&0&0&0&1&-&2&9&2&3&1&1\\
10&1&0&0&1&+&0&4&0&1&0&0\\
11&0&1&0&1&+&0&4&0&1&0&0\\
12&0&0&1&1&-&6&16&6&6&3&3\\
13&1&1&0&1&+&0&1&0&0&0&0\\
14&1&0&1&1&-&2&9&2&3&1&1\\
15&0&1&1&1&-&2&9&2&3&1&1\\
16&1&1&1&1&+&0&4&0&1&0&0\\
\hline
    \end{tabular}
        \caption{Validation of VIPs derived by Approach 3 and 4.}
    \label{tab:Approach 34}
\end{table}

\vspace{1em}

\section{Appendix - Validation of CDP Quadratization}

\begin{table}[htbp]
\centering
\begin{subtable}{\textwidth}
\centering
\begin{tabular}{
|c|
>{\centering\arraybackslash}p{.3in}
>{\centering\arraybackslash}p{.3in}
>{\centering\arraybackslash}p{.3in}|
c|
}\hline
cases&$x_{uv}$&$x_{vw}$&$x_{uw}$&$P'_{u,v,w}(X)$\\
  \hline
1&0&0&0&0\\
2&1&0&0&1\\
3&0&1&0&1\\
4&0&0&1&1\\
5&1&1&0&0\\
6&1&0&1&0\\
7&0&1&1&0\\
8&1&1&1&0\\
\hline
\end{tabular}
    \caption{Validation of the cubic VIP $P'_{u,v,w}(X)$.}
    \label{tab: 0.2}
\end{subtable}

\begin{subtable}{\textwidth}
\centering
\begin{tabular}{
|c|
>{\centering\arraybackslash}p{.3in}
>{\centering\arraybackslash}p{.3in}
>{\centering\arraybackslash}p{.3in}
>{\centering\arraybackslash}p{.3in}|
c|
}\hline
cases&$x_{uv}$&$x_{vw}$&$x_{uw}$&$y_{uvw}$&$P'''_{u,v,w}(X,Y)$\\
  \hline
1&0&0&0&0&0\\
2&1&0&0&0&1\\
3&0&1&0&0&1\\
4&0&0&1&0&1\\
5&1&1&0&0&4\\
6&1&0&1&0&0\\
7&0&1&1&0&0\\
8&1&1&1&0&1\\
9&0&0&0&1&12\\
10&1&0&0&1&5\\
11&0&1&0&1&5\\
12&0&0&1&1&16\\
13&1&1&0&1&0\\
14&1&0&1&1&7\\
15&0&1&1&1&7\\
16&1&1&1&1&0\\
\hline
\end{tabular}
    \caption{Validation of the QAVIP $P'''_{u,v,w}(X,Y)$.}
    \label{tab: 0.3}
\end{subtable}
 \caption{Verification of the validity of CDP quadratization (See Section \ref{ssec:CDP-QUBO}).}
\end{table}

\end{document}